\documentclass[12pt]{article}
\usepackage{fullpage}
\usepackage{times}

\usepackage[bookmarks=false]{hyperref}

\usepackage{doi}
\let\citeyear\cite

\usepackage{amsthm}

\theoremstyle{plain}
\newtheorem{theorem}{Theorem}
\newtheorem{lemma}[theorem]{Lemma}

\newtheorem{corollary}[theorem]{Corollary}

\theoremstyle{definition}
\newtheorem{definition}{Definition}
\theoremstyle{remark}
\newtheorem{remark}{Remark}
\newtheorem{example}{Example}

\usepackage{amsmath}
\interdisplaylinepenalty=2500
\usepackage{array}
\usepackage{cite}
\usepackage{cases}
\usepackage{paralist}
\usepackage{xcolor}
\usepackage{graphicx}
\DeclareGraphicsExtensions{.pdf,.eps}

\newcommand*{\fixstatement}[1]{}
\newcommand*{\printproofs}{}

\newcounter{sublemmacounter}%
\renewcommand{\thesublemmacounter}{\arabic{sublemmacounter}}%
\newenvironment{sublemma}[1][Part]
{\par\bigskip\noindent%
\refstepcounter{sublemmacounter}%
\textsc{#1 \thesublemmacounter:} }
{\par\smallskip}

\newenvironment{subproof}[1][Subproof]
{\par\smallskip\noindent\textsc{#1:} }
{\hfill$\boxdot$\par\smallskip}

\usepackage{hyperref}

\usepackage{prettyref}
\newrefformat{sec}{Sect.\,\ref{#1}}
\newrefformat{def}{Def.\,\ref{#1}}
\newrefformat{thm}{Theorem\,\ref{#1}}
\newrefformat{prop}{Proposition\,\ref{#1}}
\newrefformat{lem}{Lemma\,\ref{#1}}
\newrefformat{cor}{Corollary\,\ref{#1}}
\newrefformat{ex}{Example\,\ref{#1}}
\newrefformat{tab}{Table\,\ref{#1}}
\newrefformat{fig}{Fig.\,\ref{#1}}
\newrefformat{rem}{Remark\,\ref{#1}}
\newrefformat{case}{Case\,\ref{#1}}
\newrefformat{foot}{Footnote\,\ref{#1}}
\newrefformat{slem}{Part\,\ref{#1}}
\newcommand{\rref}[2][]{\prettyref{#2}}
\newrefformat{bthm}{Theorem\,\ref{#1}}
\newrefformat{blem}{Lemma\,\ref{#1}}
\newrefformat{app}{Appendix\,\ref{#1}}

\newrefformat{nolem}{\textbf{\textit{\textcolor{red}{conjecture\,\ref{#1}}}}}

\usepackage{xcolor}

\usepackage{amsbsy}

\usepackage[bbsets,Dfprime]{math}
\usepackage[modernsign,substopindex,bbmquant,colormquant,mquantifiertype,mconnectiveformal,bracketinterpret,modifopindex,seqarrow,seqoptional,sidenotecalculus,abbrseqcontext,shortterms,nosigmaterms,novarterms]{logic}
\usepackage[pretest,nocommandblocks]{progreg}
\usepackage[bracketinterpret]{dL}

\newcommand*{\I}[1][\xi]{\dLint[state=#1]}
\newcommand*{\sol}[1][]{x}%

\newcommand{\runcost}[1]{}

\providecommand*{\pdiffgame}[3]{{#1}{\ifthenelse{\equal{#2}{}}{}{{}\&^{\hspace{-3pt}d}#2}\ifthenelse{\equal{#3}{}}{}{\&}#3}}

\newcommand{\bebecomes}{\mathrel{::=}}
\newcommand{\alternative}{~|~}
\renewcommand{\with}{:}

\newcommand*{\passfunction}[1]{#1}%

\newcommand*{\genDG}[3]{f(#1,#2,#3)}
\newcommand{\ivr}{\psi}%

\renewcommand{\der}[2][]{\nabla(#2)}%

\newcommand{\dder}[1][x']{\def\dderargI{{#1}}\dderRelay}
\newcommand{\dderRelay}[2][]{\subst[\D{#2}]{\dderargI}{#1}}

\newcommand{\sder}[1]{D#1}%

\newcommand{\sdder}[1][x']{\def\sdderargI{{#1}}\sdderRelay}
\newcommand{\sdderRelay}[2][]{#1\stimes D #2}

\let\norm\abs
\renewcommand*{\norm}[2][]{\ifthenelse{\equal{#1}{}}{\abs{#2}}{\|#2\|_{#1}}}

\newcommand{\ortho}[1]{#1^\perp}%

\definecolor{semblue}{rgb}{0,0,0.7}
\definecolor{vgreen}{rgb}{.1,.5,0}
\definecolor{vred}{rgb}{.7,0,0}
\definecolor{vblue}{rgb}{.1,.15,.62}

\newcommand*{\arithmetize}[1]{{#1}^\Re}%

\renewcommand{\measurables}[2]{\mathcal{M}_{#2}}%

\newcommand*{\controlD}[2][Y]{\measurables{}{#1}}%
\newcommand*{\controlA}[2][Z]{\measurables{}{#1}}%
\newcommand{\stratD}[1][Z]{\def\startDargI{{#1}}\stratDRelay}
\newcommand{\stratDRelay}[2][Y]{\mathcal{S}_{\startDargI\to#1}}

\newcommand{\stratA}[1][Y]{\def\startAargI{{#1}}\stratARelay}
\newcommand{\stratARelay}[2][Z]{\mathcal{S}_{\startAargI\to#1}}

\newcommand*{\response}[5][]{\sol[#1](#2; #3, \passfunction{#4},\passfunction{#5})}

\hyphenation{sub-so-lu-tion}
\hyphenation{su-per-so-lu-tion}

\usepackage{ifpdf}
\ifpdf
\pdfinfo{
   /Author (Andre Platzer)
   /Title (Differential Hybrid Games)
   /Keywords (differential games, hybrid games, logic, differential game invariants, proof calculus)
}
\fi

\usepackage{fancyhdr}
\pagestyle{fancyplain}
\lhead{A. Platzer}
\rhead{Differential Hybrid Games}
\headsep = 25pt

\title{Differential Hybrid Games}
\author{Andr\'e Platzer\thanks{
  Computer Science Department, Carnegie Mellon University, Pittsburgh, USA
  {aplatzer@cs.cmu.edu}
}}
\date{}

\begin{document}
\maketitle
\allowdisplaybreaks
\begin{abstract}
This article introduces \emph{differential hybrid games}, which combine differential games with hybrid games.
In both kinds of games, two players interact with continuous dynamics.
The difference is that hybrid games also provide all the features of hybrid systems and discrete games, but only deterministic differential equations.
Differential games, instead, provide differential equations with continuous-time game input by both players, but not the luxury of hybrid games, such as mode switches and discrete-time or alternating adversarial interaction.
This article augments \emph{differential game logic} with modalities for the combined dynamics of differential hybrid games.
It shows how hybrid games subsume differential games and
introduces \emph{differential game invariants} and \emph{differential game variants} for proving properties of differential games inductively.
\end{abstract}

\section{Introduction}
\newsavebox{\Rval}%
\sbox{\Rval}{$\scriptstyle\mathbb{R}$}%
\irlabel{qear|\usebox{\Rval}}%
\irlabel{orl|$\lor$r}%

\emph{Differential games} \cite{Isaacs:DiffGames,Friedman,Hajek,KrasovskiSubbotin,Petrosjan93,BardiRP99,DBLP:journals/tams/ElliottK74,DBLP:journals/indianam/EvansSouganidis84,DBLP:journals/diffeq/Souganidis85} support adversarial interaction and game play during the continuous dynamics of a differential equation in continuous time.
They allow the two players to control inputs to the differential equation during its continuous evolution by measurable functions of continuous time.
This is to be contrasted with hybrid systems \cite{DBLP:conf/lics/Henzinger96} and hybrid games, where differential equations are deterministic and the only decision is how long to evolve.
Differential games are useful, e.g., for studying pursuit-evasion in aircraft if both players can react continuously to each other.
They are a good match for tight-loop, analog, or rapid adversarial interaction.

\emph{Hybrid games} \cite{DBLP:conf/hybrid/NerodeRY96,DBLP:conf/concur/HenzingerHM99,DBLP:journals/IEEE/TomlinLS00,DBLP:journals/jopttapp/DharmattiDR06,DBLP:journals/corr/abs-0911-4833,DBLP:journals/tcs/VladimerouPVD11,DBLP:journals/tocl/Platzer15} are games of two players on a hybrid system's discrete and continuous dynamics where the players have control over discrete-time choices during the evolution of the system, but the continuous dynamics stays deterministic and its duration is the only choice in the game.
Hybrid games can model discrete aspects like decision delays, discontinuous state change, or games with different controls and different dynamics in different modes of the system.
They are a good match for sporadic or discrete-time adversarial interaction with discrete sensors or reaction delays for structurally complex systems.

The primary purpose of this article is to show that both game principles are not in conflict but can be integrated seamlessly to complement each other.
This article introduces \emph{differential hybrid games} that combine the aspects of differential games with those of hybrid games resulting in a model where discrete, continuous, and adversarial dynamics mix freely.
This makes it possible to model games that combine continuous-time interactions (e.g.\ auto-evasion curves for aircraft) with discrete-time interactions (e.g.\ whether to ask an intruder pilot to synchronize on collision avoidance or whether to follow a nonstandard manual flight maneuver).
Differential hybrid games also possess the advantages of hybrid systems so that structurally more complex cases with different parts and different subsystems with different dynamics can be modeled.

The key insight behind \emph{hybrid systems} is that it helps to understand each aspect of a system separately on its natural level \cite{DBLP:conf/lics/Platzer12a}. Discrete dynamics are a good fit for some aspects. Continuous dynamics are more natural for others.
Differential hybrid games enable the same flexibility for games rather than for systems, so that each adversarial aspect in a cyber-physical system can be understood on its most natural level.
Which level that is depends on modeling/analysis tradeoffs.
Differential hybrid games provide a unifying framework in which both game aspects coexist and combine freely to enable such tradeoffs.
Their differential games are needed to describe quick continuous-time control interaction, while only their hybrid game aspects provide discrete-time adversarial choices, discontinuous state change, and subsystem structuring mechanisms.
This article studies, e.g., airships with continuous adversarial change of local turbulence and sporadic discrete adversarial change of wind fields with static obstacles.
The system changes radically whenever the wind field or the relevant static obstacles change, which only happens sporadically, because airships do not move into entirely new wind conditions or near the next mountain so often.
Local turbulence changes quickly, however, and needs appropriate reactions all the time.

This article introduces a generalization of \emph{differential game logic} \dGL \cite{DBLP:journals/tocl/Platzer15} to \emph{differential} hybrid games, extending differential game logic for ordinary \emph{hybrid games} \cite{DBLP:journals/tocl/Platzer15} by adding differential games.
Since this extension yields a compositional logic and a compositional proof technique, the primary attention in this article is on how differential games combine seamlessly with hybrid games and how properties of differential games can be proved soundly. Proof techniques for the resulting differential hybrid games then follow from logical compositionality principles.

In addition to presenting the first logic and modeling language for differential hybrid games, this article presents inductive proof rules for differential games to obtain the first sound and compositional proof calculus for differential hybrid games.
\emph{Differential game invariants} and their companions (\emph{differential game variants} and \emph{differential game refinements}) give a logical approach for differential games, complementing geometric viability theory and other approaches based on numerical integration of partial differential equations (PDEs).
The advantage is that differential game (in)variants provide simple and sound witnesses for the existence of winning strategies for differential games, even in unbounded time, and without having to build a formally verified numerical solver for PDEs with formally verified error bounds to obtain sound formal verification results, which would be quite a formidable challenge.

Soundness is a substantial matter in differential games due to their surprising subtleties.
It took several decades to correctly relate Isaacs' PDEs to the differential games they were intended for \cite{Isaacs:DiffGames,BardiRP99}. %
After a long period of gradual progress, differential games are now handled by numerically solving the PDEs they induce or by corresponding geometric equivalents from viability theory formulations for such PDEs \cite{Cardaliaguet2007}.
Soundness issues with a number of such approaches for differential games were reported \cite{DBLP:journals/tac/MitchellBT05}.\footnote{%
The results presented here are of independent interest, because they provide a fix for an incorrect cyclic quantifier dependency in the correctness proof in said paper \cite{DBLP:journals/tac/MitchellBT05} that was confirmed by the authors.
}

This raises the challenge how to prove properties of differential games with the correctness demands of a proof system.
This article advocates for dedicated proof rules for differential games alongside proof rules for hybrid games.
Logic is good at then combining those sound proof rules soundly with each other in a modular way.

The article concludes with a theoretical insight. Hybrid systems have been shown to be equivalently reducible proof-theoretically to differential equations \cite{DBLP:journals/jar/Platzer08} and even to equivalently reduce to discrete systems \cite{DBLP:conf/lics/Platzer12b}.
This trend reverses for hybrid games, which do \emph{not} reduce to differential games, but subsume them.

\paragraph{Contributions}

The primary contributions are a compositional programming language for differential hybrid games that combine discrete, continuous, and adversarial dynamics freely, along with a proof calculus and expressibility results.
The most important novel feature of the proof calculus are sound induction principles for differential games.
The most interesting technical contribution is their soundness proof.
Superdifferentials for a conceptual simplification are another interesting aspect of this article.

While the results are elegant and all background for the proofs is given, these proofs draw from many areas, including logic, proof theory, Carath\'eodory solutions, viscosity solutions of partial differential equations, real algebraic geometry, and real analysis.
Byproducts of the soundness proof yield results of independent interest.
All new proofs, which are the ones for results without citations, are included inline.

\section{Preliminaries} \label{sec:preliminaries}

This section briefly recalls basic notions that will be used throughout this article.
The article mostly considers Euclidean vector spaces with the Euclidean scalar product of vectors $v,w$  denoted by \(v\stimes w\).
The Euclidean norm of a vector $v$ is denoted by \(\norm{v}\mdefeq\sqrt{v\stimes v}\).

A set $Z$ is \emph{compact} (called quasi-compact by Bourbaki) iff every open cover has a finite subcover.
In metric spaces such as $\reals^k$, a set $Z$ is compact iff it is \emph{sequentially compact}, i.e.\ every sequence in $Z$ has a convergent subsequence with limit in $Z$.
In Euclidean spaces, a set is compact iff it is closed and bounded (Heine-Borel theorem).

\begin{remark}[Preimage] \label{rem:preimage}
  The preimage \m{\ipreimage{f}(A) \,{=}\, \{x \in X \with f(x){\in} A\}} of a set $A\subseteq Y$ under function \(f:X\to Y\) satisfies the usual properties:
  \begin{enumerate}
  \item \(A\subseteq B\) implies \(\ipreimage{f}(A) \subseteq \ipreimage{f}(B)\)
  \label{case:preimage-subseteq}
  \item \(\ipreimage{f}(\scomplement{A}) = \scomplement{(\ipreimage{f}(A))}\) for the complement $\scomplement{A}$ of $A$
  \item \(\ipreimage{f}(\capfold_{i\in I} A_i) = \capfold_{i\in I} \ipreimage{f}(A_i)\) for any index family $I$
  \item \(\ipreimage{(f\compose g)}(A) = \ipreimage{g}(\ipreimage{f}(A))\))
  where \(f\compose g\) is the composition mapping $z$ to \(f(g(z))\)
  \label{case:preimage-compose}
  \end{enumerate}
\end{remark}
A function \(f:X\to Y\) between measurable spaces is \emph{measurable} iff the preimage of every measurable subset in $Y$ is measurable in $X$.
By \rref{case:preimage-compose}, the composition $f \compose g$ is $\mathcal{M}$-measurable if $f$ is \emph{Borel} measurable and $g$ is $\mathcal{M}$-measurable.
It is important for this composition that $f$ is Borel measurable, otherwise the measure space changes.

A function \(f:X\to\reals^k\) on a normed vector space $X$ is \emph{$\lambda$-H\"older continuous} iff there is an $L\in\reals$ such that \(\norm{f(x)-f(y)}\leq L\norm{x-y}^\lambda\) for all $x,y$.
Hence, $f$ is 0-H\"older continuous iff it is bounded.
Functions that are 1-H\"older continuous are called \emph{Lipschitz-continuous}.
A function \(f:X\times Y\to\reals^k\) is \emph{uniformly Lipschitz in $x$} iff there is an $L\in\reals$ such that \(\abs{f(x,y)-f(a,y)}\leq L\abs{x-a}\) for all $x,a\in X, y\in Y$.
A function \(f:X\to\reals^k\) is \emph{uniformly continuous} iff, for all $\varepsilon>0$, there is a $\delta>0$ such that \(\norm{f(x)-f(y)}<\varepsilon\) for all \(\norm{x-y}<\delta\). 
A function \(f:I\to\reals^k\) on an interval $I\subseteq\reals$ is \emph{absolutely continuous} iff, for all $\varepsilon>0$, there is a $\delta>0$ such that \(\sum_i\norm{f(x_i)-f(y_i)}<\varepsilon\) for all finite sequences of pairwise disjoint sub-intervals \(\interval{(x_i,y_i)}\subseteq I\) with \(\sum_i\abs{x_i-y_i}<\delta\).
If $f$ is continuously differentiable on a compact set,
then $f$ is differentiable with bounded derivatives,
in which case $f$ is Lipschitz-continuous,
which implies $f$ is absolutely continuous,
implying that $f$ is uniformly continuous,
in which case $f$ is continuous,
making $f$ Borel measurable \cite{Walter:Ana2}.
Continuous functions on compact sets are bounded and uniformly continuous.
A \emph{semialgebraic function} is a function between \emph{semialgebraic sets} (i.e.\ definable by finite unions and intersections of polynomial equations and inequalities) whose graph is semialgebraic.

A sequence of functions \(f_n:X\to\reals^k\) \emph{converges uniformly} to \(f:X\to\reals^k\) for \(n\to\infty\) iff $f_n$ converges to $f$ in supremum norm \(\norm[\infty]{f_n-f}\to0\) for \(n\to\infty\), which is equivalent to: for all $\varepsilon>0$ there is an $n_0$ such that \(\norm{f_n(x)-f(x)}<\varepsilon\) for all $n\geq n_0$ and all $x\in X$.
\section{Differential Game Logic} \label{sec:dGL}

This section introduces the \emph{differential game logic \dGL of differential hybrid games}, which adds differential games to differential game logic of (non-differential) hybrid games from previous work \cite{DBLP:journals/tocl/Platzer15}.
The difference between hybrid games and differential hybrid games is that only the latter allow differential games with player input, while the former allow only differential equations instead.
The respective differential game logics are built in the same way around the respective game models.

Differential hybrid games are games of two players, Angel and Demon.
Differential game logic uses modalities, where \m{\dbox{\alpha}{\phi}} refers to the existence of winning strategies for Demon for the objective specified by formula $\phi$ in differential hybrid game $\alpha$ and \m{\ddiamond{\alpha}{\phi}} refers to the existence of winning strategies for Angel for objective $\phi$ in differential hybrid game $\alpha$.
So \m{\dbox{\alpha}{\phi}} and \m{\ddiamond{\alpha}{\lnot\phi}} refer to complementary winning conditions ($\phi$ for Demon $\lnot\phi$ for Angel) in the same differential hybrid game $\alpha$.
Indeed, the two formulas \m{\dbox{\alpha}{\phi}} and \m{\ddiamond{\alpha}{\lnot\phi}} cannot both be true in the same state (\rref{thm:dGL-determined}).

\subsection{Syntax}

The terms $\theta$ of \dGL are polynomial terms (more general ones are possible but not necessarily decidable).
In applications, it is convenient to use \(\min,\max\) terms as well, which are definable as semialgebraic functions \cite{tarski_decisionalgebra51}.
Differential game logic formulas and differential hybrid games are defined by simultaneous induction.
Similar simultaneous inductions are used throughout the definitions and proofs for \dGL.

\begin{definition}[Differential hybrid games] \label{def:dGL-DHG}
The \emph{differential hybrid games of differential game logic {\dGL}} are defined by the following grammar\footnote{%
The $\pdual{}$ in differential game \(\pdiffgame{\D{x}=\genDG{x}{y}{z}}{y\in Y}{z\in Z}\) is a mnemonic reminder that Demon controls \emph{d}ual input $y$ and Angel controls $z$.
The order of notation further reminds that, at any point in time, Demon chooses an action $y\in Y$ before Angel chooses a $z\in Z$ at that time (in a sense made precise in \rref{sec:dGL-semantics}).
} ($\alpha,\beta$ are differential hybrid games, $x$ is a variable, $\theta$ a term, $\ivr$ a \dGL formula, \(y\in Y\) and \(z\in Z\) are formulas in free variable $y$ or $z$, respectively, and $f(x,y,z)$ is a term in the free variables $x,y,z$):
\[
  \alpha,\beta ~\bebecomes~
  \pdiffgame{\D{x}=\genDG{x}{y}{z}}{y\in Y}{z\in Z}\\
  \alternative
  \pupdate{\pumod{x}{\theta}}
  \alternative
  \ptest{\ivr}
  \alternative
  \alpha\cup\beta
  \alternative
  \alpha;\beta
  \alternative
  \prepeat{\alpha}
  \alternative
  \pdual{\alpha}
\]
\end{definition}
\begin{definition}[\dGL formulas] \label{def:dGL-formula}
The \emph{formulas of differential game logic {\dGL}} are defined by the following grammar ($\phi,\psi$ are \dGL formulas, 
$\theta_i$ are (polynomial) terms, $x$ is a variable, and $\alpha$ is a differential hybrid game):
  \[
  \phi,\psi ~\bebecomes~
  \theta_1\geq\theta_2 \alternative
  \lnot \phi \alternative
  \phi \land \psi \alternative
  \lexists{x}{\phi} \alternative 
  \ddiamond{\alpha}{\phi}
  \alternative \dbox{\alpha}{\phi}
  \]
\end{definition}
Other operators $>,=,\leq,<,\lor,\limply,\lbisubjunct,\forall{x}{}$ can be defined as usual, e.g., \m{\lforall{x}{\phi} \mequiv \lnot\lexists{x}{\lnot\phi}}.
The modal formula \m{\ddiamond{\alpha}{\phi}} expresses that Angel has a winning strategy\footnote{%
The names are arbitrary but the mnemonic is that, just like the diamond operator $\ddiamond{\cdot}{}$, Angel has wings.}
to achieve $\phi$ in differential hybrid game $\alpha$, i.e.\ Angel has a strategy to reach any of the states satisfying \dGL formula $\phi$ when playing differential hybrid game $\alpha$, no matter what Demon does.
The modal formula \m{\dbox{\alpha}{\phi}} expresses that Demon has a winning strategy to achieve $\phi$ in differential hybrid game $\alpha$, i.e.\ a strategy to reach any of the states satisfying $\phi$, no matter what Angel does.

As usual, \m{\pupdate{\pumod{x}{\theta}}} is an \emph{assignment} and $\ptest{\ivr}$ the \emph{test game} or \text{challenge} that Angel only passes if formula $\ivr$ holds true in the current state.
Otherwise she loses immediately, because she failed a test.
Further, \(\pchoice{\alpha}{\beta}\) is a \emph{game of choice} where Angel gets to choose to play $\alpha$ or to play $\beta$ whenever \(\pchoice{\alpha}{\beta}\) is played.
The \emph{sequential game} \(\alpha;\beta\) plays $\alpha$ followed by $\beta$ (unless a player lost a challenge during $\alpha$).
The \emph{repeated game} $\prepeat{\alpha}$ plays $\alpha$ repeatedly and permits Angel to decide after each play of $\alpha$ whether she wants to play another iteration (unless a player lost a challenge).
The \emph{dual game} $\pdual{\alpha}$ is the same as game $\alpha$ except that all choices that Angel has in $\alpha$ are resolved by Demon in $\pdual{\alpha}$ and all choices that Demon has in $\alpha$ are resolved by Angel in $\pdual{\alpha}$, similar to the effect of flipping a chessboard around by $180^\circ$ so that both players change sides.
During a differential hybrid game, players can lose prematurely by violating the rules of the game, expressed in the tests. The winning condition is specified in the postcondition $\phi$.

The important addition compared to prior work \cite{DBLP:journals/tocl/Platzer15} is the inclusion of differential games in the syntax for hybrid games.
Predicate symbols have been removed, because they are of no immediate concern for the core focus here: adding differential games to hybrid games.
All occurrences of $y,z$ in \(\pdiffgame{\D{x}=\genDG{x}{y}{z}}{y\in Y}{z\in Z}\) are bound.
Finally, $x,y,z$ can be vectorial if \(\genDG{x}{y}{z}\) is a vectorial term of the same dimension as $x$.
During a differential game \(\pdiffgame{\D{x}=\genDG{x}{y}{z}}{y\in Y}{z\in Z}\), the state follows the differential equation \(\D{x}=\genDG{x}{y}{z}\), yet Demon provides a measurable input for $y$ satisfying $y\in Y$ always, and Angel, knowing Demon's current input, provides a measurable input for $z$ satisfying $z\in Z$, while Angel controls the duration (\rref{sec:dGL-semantics}).

Observe the use of suggestive notation that is adopted in the interest of traceability with mathematical practice throughout this article: \(\lforall{y{\in}Y}{\phi}\) stands for \(\lforall{y}{(y\in Y \limply \phi)}\) and \(\lexists{y{\in}Y}{\phi}\) for \(\lexists{y}{(y\in Y \land \phi)}\), in which $y\in Y$ is understood as convenient notation for a logical formula of one free variable (vector) $y$.

A \emph{nondeterministic assignment} \(\prandom{c}\) assigns any real value to $x$ by Angel's choice, so \(\dbox{\prandom{c}}{\phi} \mequiv \lforall{c}{\phi}\) and \(\ddiamond{\prandom{c}}{\phi} \mequiv \lexists{c}{\phi}\).
Nondeterministic assignments are definable, e.g., as the differential game \(\pdiffgame{\D{c}=z}{}{z\in B}\), where $z\in B$ is $-1{\leq}z{\leq}1$,
or with differential equations \(\prandom{c} \mequiv \pevolve{\D{c}=1};\pevolve{\D{c}=-1}\) since durations are unobservable without clocks.

\begin{example}[Zeppelin] \label{ex:Zeppelin}
Coping with the intricacies of wind is an omnipresent challenge for aircraft, and particularly pronounced for airships where the wind may sometimes be stronger than their own propulsion.
It is not their propulsion that keeps airships in the air, but they are big and lighter than air to generate buoyancy.
At a fixed height, consider a Zeppelin-class airship with position $x\in\reals^2$ that can fly in all directions by turning its propeller into that direction.
The propeller itself can generate up to a velocity $p$.
The Zeppelin is floating in the sky within a homogeneous wind velocity field $v\in\reals^2$, but is also subject to local turbulence changing quickly in unpredictable directions of magnitude ${\leq}r$.
The Zeppelin is trying to prevent a collision with an obstacle of radius ${\leq}c$ at position $o\in\reals^2$.
In order to fly safely, the Zeppelin needs to find a way of controlling its propeller so that it remains collision-free despite the turbulence.
Sporadically at discrete time points, the homogeneous wind field $v$ may change or another obstacle $o$ possibly with a different radius $c$ may appear in the Zeppelin's horizon.
To simplify the matter, the obstacles are assumed to be reasonably separated so that the Zeppelin ever only has to worry about one obstacle at a time (unlike \rref{fig:Zeppelin}).

\begin{figure}[tb]
  \centering
  \includegraphics[height=5.8cm]{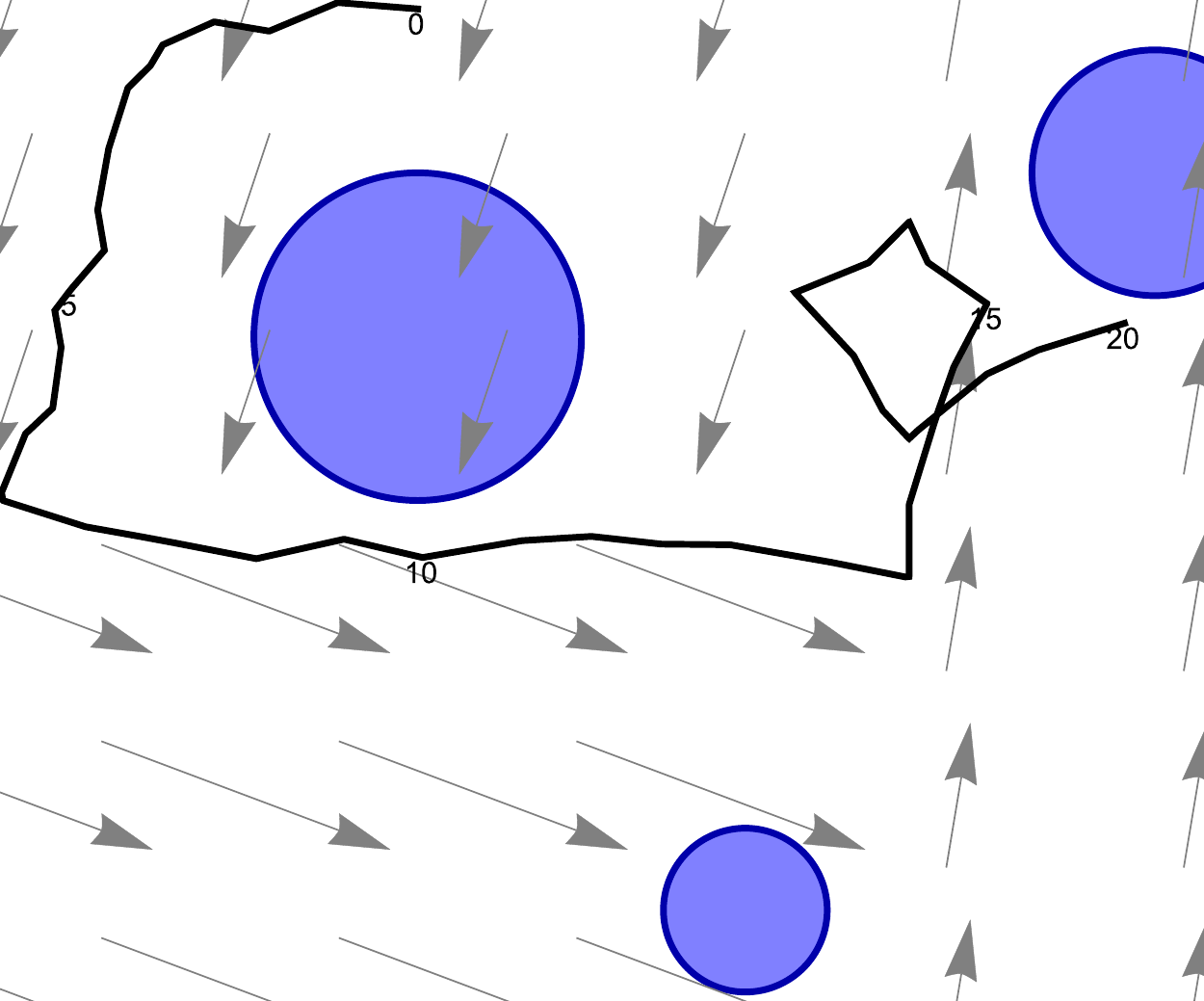}
  \caption{Challenging Zeppelin obstacle parcours with wind fields and one possible response trajectory}
  \label{fig:Zeppelin}
\end{figure}%

If the turbulence is stronger than the propeller (\(r>p\)), the Zeppelin is unmaneuverable and hopelessly at the mercy of the waves in the air.
If the propeller is stronger than the wind field and the turbulence combined (\(p>\norm{v}+r\)), the Zeppelin is essentially able to overcome all wind by sheer force.
In between these two extremes, however, when \(r<p<\norm{v}+r\), the Zeppelin has the particularly interesting challenge of having to maneuver in a clever way to ensure the combined wind field and possible local turbulences cannot lead to a collision that its propeller can no longer prevent.

The following \dGL formula expresses that there is a winning strategy to fly the Zeppelin safely around the obstacle (\(\norm{x-o}^2\geq c^2\)) if it was initially safe and each obstacle is recognized at sufficient distance according to a condition $C$ that is yet to be identified:
\begin{equation}
\begin{aligned}
c>0 &\land \norm{x-o}^2\geq c^2 \limply\\
&\big[(
\prandom{v};\prandom{o};\prandom{c}; \ptest{C};\\
&\phantom{\big[\big(}
\pdiffgame{\D{x}=v+py+rz}{y\in B}{z\in B}\\
&\prepeat{)}\big]\, \norm{x-o}^2\geq c^2
\end{aligned}
\label{eq:Zeppelin}
\end{equation}
To give the Zeppelin a chance, assume some choices of $p$ and $r$ for which \(p>r\geq0\) so that the propeller is not weaker than the turbulence.
For example \(p = 3/4, r = 1/2\) as in \rref{fig:Zeppelin}, which are weaker than all its wind fields, though.
Using vectorial notation, let \(y\in B\) be \(y_1^2+y_2^2\leq1\) and similarly let \(z\in B\) be \(z_1^2+z_2^2\leq1\) to describe the unit disc, among which direction vectors are chosen by the two players during the differential game (line 3).
Unit vectors correspond to full speed ahead, while vectors of smaller norm will lead to less power.
According to the semantics of differential games (\rref{sec:dGL-semantics}), the propeller $y\in B$ will have to act before it knows about the local turbulence $z\in B$, because it is hard to predict the chaotic changes of turbulence.

In addition to this rapid interaction, where the propeller tries to overcome the local turbulence to prevent collisions, the differential hybrid game in \rref{eq:Zeppelin} includes a repetition (operator $\prepeat{}$ in line 4), also under the opponent's control, that allows Angel to repeat lines 2--4 any number of times.
During each repetition, the differential hybrid game in \rref{eq:Zeppelin} allows arbitrary wind field changes (by a nondeterministic assignment \(\prandom{v}\)), and allows the next relevant obstacle $o$ to appear arbitrarily with a new radius $c$.
After those arbitrary changes, Angel needs to pass the subsequent test $\ptest{C}$, though, so that she can only update $v,o,c$ according to the formula $C$, which will be identified in \rref{sec:diffgame-axioms} to prevent impossibly late notice of obstacles.
Demon's opponent in \dGL formula \rref{eq:Zeppelin} can, thus, sporadically switch to a new obstacle and possibly also a new wind field as long as there is enough space in between to satisfy $C$.
Demon's job in the differential game is to use the propeller to avoid the current obstacle despite the additional turbulence under Angel's control.
If the obstacles are too close together and the wind fields change too radically, the Zeppelin navigation problem is exceedingly tricky (\rref{fig:Zeppelin}).
\end{example}

\subsection{Differential Games}

The semantics of differential game constructs in differential hybrid games is based on nonanticipative strategies for differential games \cite{DBLP:journals/tams/ElliottK74,DBLP:journals/indianam/EvansSouganidis84,BardiRP99}. %

\begin{definition}[Differential game] \label{def:diffgame}
  Let \(Y\subset\reals^k, Z\subset\reals^l\) be compact sets of controls for the respective players.
  Let function \(f:[\eta,T]\times\reals^n\times Y\times Z\to\reals^n\) be bounded, uniformly continuous, and in $x$ uniformly Lipschitz.
  For time horizon \(T\geq0\), initial time \(\eta<T\) and initial state \(\xi\in\reals^n\),
  a \emph{differential game} has the form
  \begin{equation}
  \begin{cases}
  \D{x}(s) = f(s,x(s),y(s),z(s)) & \eta\leq s\leq T\\
  x(\eta) = \xi
  \end{cases}
  \label{eq:diffgame}
  \end{equation}
  where the \emph{controls} \(y:[\eta,T]\to Y\) and \(z:[\eta,T]\to Z\) are measurable functions for the respective players for $Y$ and for $Z$.
  The set of (measurable) controls are denoted $\controlD{\eta}$ and $\controlA{\eta}$, respectively.
   The \emph{terminal payoff}, i.e.\ payoff at time horizon $T$, is defined by a bounded and Lipschitz function \(g:\reals^n\to\reals\).
\end{definition}

A \emph{Carath\'eodory solution} of a differential equation is an absolutely continuous function satisfying the differential equation \emph{a.e.} (\emph{almost everywhere}, which means except on a subset of a set of measure 0).
By a classical result, the behavior of a differential game is uniquely determined for each pair of controls:

\begin{lemma}[Response] \label{lem:response}
For each controls \(y\in\controlD{\eta}, z\in\controlA{\eta}\) and initial data $\eta,\xi$, the differential equation \rref{eq:diffgame} has a unique Carath\'edory solution \(x:[\eta,T]\to\reals^n\), called \emph{response} and denoted by \(\response[f]{s}{\xi}{y}{z} \mdefeq x(s)\) as a function of time $s$ with parameters $\xi,y,z$.
Finally, \(\response[f]{s}{\xi}{y}{z}=\response[f]{s}{\xi}{\hat{y}}{\hat{z}}\) if \(y=\hat{y}\) a.e.\ and \(z=\hat{z}\) a.e.
\end{lemma}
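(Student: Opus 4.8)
The plan is to freeze the two controls and reduce the game dynamics \rref{eq:diffgame} to a single closed-loop Carath\'eodory differential equation, then invoke the classical Carath\'eodory existence theorem together with a Gronwall argument for uniqueness, and finally read off the a.e.-invariance from the integral form.

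Fix $y\in\controlD{\eta}$ and $z\in\controlA{\eta}$ and define the closed-loop field $F(s,x)\mdefeq f(s,x,y(s),z(s))$ on $[\eta,T]\times\reals^n$. The first task is to verify the Carath\'eodory conditions for $F$. For each fixed $x$, the map $s\mapsto(s,x,y(s),z(s))$ is measurable (each component is, and a map into a product of second-countable spaces is measurable iff its components are), and $f$ is continuous, hence Borel measurable, so $s\mapsto F(s,x)$ is measurable by the composition property \rref{case:preimage-compose}; it is exactly here that Borel-measurability of $f$ is needed so that the measure structure is preserved. For each fixed $s$, the map $x\mapsto F(s,x)$ is continuous since $f$ is continuous in $x$. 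Since $f$ is bounded by some constant $M$, the field $F$ is dominated by the integrable function $s\mapsto M$ on the compact interval $[\eta,T]$, and $F$ inherits the uniform Lipschitz bound in $x$ from $f$, i.e.\ $\norm{F(s,x)-F(s,a)}\leq L\norm{x-a}$.

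With these conditions in hand, the Carath\'eodory existence theorem yields an absolutely continuous $x:[\eta,T]\to\reals^n$ solving $x(s)=\xi+\int_\eta^s F(\tau,x(\tau))\,d\tau$; the uniform bound $M$ rules out finite-time escape, so the solution exists on the whole horizon $[\eta,T]$, and the integral equation is equivalent to $x$ being a Carath\'eodory solution of \rref{eq:diffgame} with $x(\eta)=\xi$. For uniqueness, two solutions $x_1,x_2$ satisfy $\norm{x_1(s)-x_2(s)}\leq L\int_\eta^s\norm{x_1(\tau)-x_2(\tau)}\,d\tau$ by the Lipschitz bound, whence Gronwall's inequality forces $x_1\equiv x_2$. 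This produces the unique response $\response[f]{s}{\xi}{y}{z}$.

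For the a.e.-invariance, suppose $y=\hat y$ and $z=\hat z$ hold a.e., and set $\hat F(s,x)\mdefeq f(s,x,\hat y(s),\hat z(s))$. On the full-measure set of times where $y(s)=\hat y(s)$ and $z(s)=\hat z(s)$ the fields agree for all second arguments, so $F(\tau,x(\tau))=\hat F(\tau,x(\tau))$ for a.e.\ $\tau$; hence the two integral equations have identical right-hand sides, the response $x$ for $(y,z)$ also solves the integral equation for $(\hat y,\hat z)$, and uniqueness gives $\response[f]{s}{\xi}{y}{z}=\response[f]{s}{\xi}{\hat y}{\hat z}$. The step I expect to be the real obstacle is the measurability verification in the second paragraph — turning the merely measurable controls fed into the merely continuous $f$ into a field usable in the Carath\'eodory framework — since everything downstream (Carath\'eodory existence, Gronwall uniqueness, null-set invariance) is then routine.
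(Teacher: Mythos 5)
Your proposal is correct and follows essentially the same route as the paper: freeze the controls into a closed-loop field, check the Carath\'eodory conditions (measurability in $s$ via composition of the Borel-measurable $f$ with the measurable controls, continuity and a Lipschitz bound in $x$, an integrable dominating bound), apply Carath\'eodory's existence and uniqueness theorem globally, and derive the a.e.-invariance from the equivalent Lebesgue integral equation. The only cosmetic difference is that you spell out uniqueness via Gronwall where the paper cites the combined Carath\'eodory existence-and-uniqueness theorem under the generalized Lipschitz condition.
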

\begin{proofatend}
\(f(s,x,y(s),z(s))\) is continuous in $x$, because $f$ is (even uniformly) continuous, and measurable in $s$, because it is a composition of a continuous so Borel-measurable function $f$ with measurable functions (\rref{sec:preliminaries}).
Let \(l(s)\) denote the maximum of the bound of $f$ and its Lipschitz constant $L$.
Then $l$ is measurable and integrable on $[\eta,T]$ (it is constant) and satisfies
\(\abs{f(s,0,y(s),z(s)}\leq l(s)\) and the (generalized) Lipschitz condition for all $s,x,a$:
\[
\abs{f(s,x,y(s),z(s)) - f(s,a,y(s),z(s))} \leq l(s)\abs{x-a}
\]
Thus, Carath\'eodory's existence and uniqueness theorem \cite[\S10.XX]{Walter:DGL} shows the existence of a unique solution that can be continued to the boundary of the domain, hence is global since $f$ is bounded.
Finally, the fundamental theorem of calculus for Lebesgue-integrals \cite[Thm.\,9.23]{Walter:Ana2} implies that changing $y$ and $z$ on a set of measure zero does not change the response \(\response[f]{s}{\xi}{y}{z}\).
\end{proofatend}
Of course, the players do not know their opponent's control. Yet, for each possible control pair \(y\in\controlD{\eta}\) and \(z\in\controlA{\eta}\), the response of the differential game is unique by \rref{lem:response}, even if it is still hard to predict computationally.

The (terminal\footnote{%
   Differential games with a running payoff $h$ can be converted to terminal payoff \m{\tilde{g}(x,r)=g(x)+r} when adding a differential equation \m{\D{r}(s)=h(s,x(s),y(s),z(s))} that accumulates the running payoff $h(s,x,y,z)$.}) \emph{payoff} for controls \(y\in\controlD{\eta}\) and \(z\in\controlA{\eta}\) of \rref{eq:diffgame} at the terminal time $T$ is the value of $g$ at the final state at time $T$, i.e.:
\begin{equation}
g(x(T)) = g(\response[f]{T}{\xi}{y}{z})
\label{eq:payoff}
\end{equation}

If both players were to commit to a control before the differential game starts, then they play \(y\in\controlD{\eta}\) and \(z\in\controlA{\eta}\) as \emph{open-loop strategies} and, by \rref{lem:response}, the only question would be what information they have before choosing their respective controls, e.g., in which order they choose \(y\in\controlD{\eta}\) and \(z\in\controlA{\eta}\).

As soon as the players get to observe the state of the system and react to it, though, the situation gets more interesting but also more difficult, because the state at a time $s$ depends on the controls that both players chose until time $s$, so their reactions depend on previous actions by both players.
That still leaves the question what information the players have when they choose their actions.
A nonanticipative strategy (for $Z$) is a function that maps the opponent's control functions to the player's control functions. The strategy gets the opponent's full control signal $y\in\controlD{\eta}$ as input, but, for fairness reasons, its resulting control value in $Z$ at any time $s$ is only allowed to depend on the values that $y$ had until time $s$ (no dependency on the future).
A nonanticipative strategy for $Z$ does, however, give the player for $Z$ a slight edge of having access also to the opponent's action at the present time $s$.
A nonanticipative strategy produces equivalent controls at time $s$ for two controls that agree up to time $s$.
Equality almost everywhere implies that the game response is unchanged (\rref{lem:response}), so that the appropriate notion of equivalent controls is equality a.e.

Let $\stratA{\eta}$ be the set of (\emph{causal} or) \emph{nonanticipative strategies} for $Z$, i.e.\ the set of functions \(\beta:\controlD{\eta}\to\controlA{\eta}\) such that for all times \(\eta\leq s\leq T\) and all controls $y,\hat{y}\in\controlD{\eta}$:
\begin{align*}
~~&\text{if}~y=\hat{y}~\text{a.e. on}~[\eta,s] 
\\
&\text{then}~\beta(y)=\beta(\hat{y})~\text{a.e. on}~[\eta,s]
~~(\text{i.e.}~\beta(y)(\tau)=\beta(\hat{y})(\tau) ~\text{for a.e.}~\eta\leq\tau\leq s)
\intertext{%
That is, nonanticipative strategies for $Z$ give the player for $Z$ the current state, history (which is irrelevant because the games are Markovian), and the opponent's current action.
The reaction $\beta(y)$ of a nonanticipative strategy to $y$ cannot, however, depend on the opponent's future input beyond the current time.
Unlike ill-defined approaches with state-feedback strategies etc., time-dependent controls ensure that the response exists and is unique \cite[\S2.2]{Hajek}.
Dually, the set of \emph{nonanticipative strategies} for $Y$ is $\stratD{\eta}$, i.e.\ the set of \(\alpha:\controlA{\eta}\to\controlD{\eta}\) such that for all \(\eta\leq s\leq T\) and all $z,\hat{z}\in\controlA{\eta}$:
}
&\text{if}~z=\hat{z} ~\text{a.e. on}~[\eta,s]
~%
\text{then}~\alpha(z)=\alpha(\hat{z}) ~\text{a.e. on}~[\eta,s]
\end{align*}

\subsection{Semantics} \label{sec:dGL-semantics}

The semantics for differential game logic with differential hybrid games embeds the semantics of differential games within differential hybrid games while simultaneously extending the meaning of hybrid games seamlessly to differential hybrid games by adding differential game winning regions.
The modular design of \dGL makes this integration of differential games with hybrid games simple by exploiting their compositional semantics.
Since hybrid games have been described before \cite{DBLP:journals/tocl/Platzer15}, the primary focus will be on elaborating the new case of differential games.

A \emph{state} $\iget[state]{\I}$ is a mapping from variables to $\reals$.
Let $\linterpretations{\Sigma}{V}$ be the set of states, which, for $n$ variables, is isomorphic to Euclidean space $\reals^n$.
For a subset \m{X\subseteq\linterpretations{\Sigma}{V}} the complement \m{\linterpretations{\Sigma}{V} \setminus X} is denoted $\scomplement{X}$.
Let
\m{\iget[state]{\imodif[state]{\I}{x}{\kappa}}} denote the state that agrees with state~$\iget[state]{\I}$ except for the interpretation of variable~\m{x}, which is changed to~\m{\kappa \in \reals}.
The value of term $\theta$ in state $\iget[state]{\I}$ is denoted by \m{\ivaluation{\I}{\theta}} and defined as in first-order real arithmetic.
As \dGL formulas are defined by simultaneous induction with differential hybrid games,
the denotational semantics of \dGL formulas will be defined (\rref{def:dGL-semantics}) by simultaneous induction along with the denotational semantics, $\strategyfor[\alpha]{\argholder}$ and $\dstrategyfor[\alpha]{\argholder}$, of differential hybrid games (\rref{def:DHG-semantics}).

Unlike the \dGL quantifiers \m{\lexists{x}{}},\m{\lforall{x}{}}in the logical language of \dGL, the short notation for quantifiers in the mathematical metalanguage is \m{\mexists{\iget[state]{\I}}{}}(for some ${\iget[state]{\I}}$) and \m{\mforall{\iget[state]{\I}}{}}(for all ${\iget[state]{\I}}$).

\begin{definition}[\dGL semantics] \label{def:dGL-semantics}
The \emph{semantics of a \dGL formula} $\phi$ 
is the subset \m{\imodel{\I}{\phi}\subseteq\linterpretations{\Sigma}{V}} of states in which $\phi$ is true.
It is defined inductively as follows:
\begin{enumerate}
\setlength{\itemsep}{-2pt}
\item \(\imodel{\I}{\theta_1\geq\theta_2} = \{\iportray{\I} \in \linterpretations{\Sigma}{V} \with \ivaluation{\I}{\theta_1}\geq\ivaluation{\I}{\theta_2}\}\)
\item \(\imodel{\I}{\lnot\phi} = \scomplement{(\imodel{\I}{\phi})} = \linterpretations{\Sigma}{V}\setminus\imodel{\I}{\phi}\)
\item \(\imodel{\I}{\phi\land\psi} = \imodel{\I}{\phi} \cap \imodel{\I}{\psi}\)
\item
{\def\Im{\imodif[state]{\I}{x}{\kappa}}%
\(\imodel{\I}{\lexists{x}{\phi}}
= \{\iportray{\I} \in \linterpretations{\Sigma}{V} \with \mexists{\kappa\in\reals} {\iget[state]{\Im} \in \imodel{\I}{\phi}}\}\)
= \(\{\iportray{\I} \in \linterpretations{\Sigma}{V} \with \iget[state]{\Im} \in \imodel{\I}{\phi} ~\text{for some}~\kappa\in\reals\}\)
}
\item \(\imodel{\I}{\ddiamond{\alpha}{\phi}} = \strategyfor[\alpha]{\imodel{\I}{\phi}}\)
\item \(\imodel{\I}{\dbox{\alpha}{\phi}} = \dstrategyfor[\alpha]{\imodel{\I}{\phi}}\)
\end{enumerate}
Formula $\phi$ is \emph{valid}, written \m{\entails\phi}, iff \m{\imodel{\I}{\phi}=\linterpretations{\Sigma}{V}}, i.e.\ $\phi$ is true in all states $\iget[state]{\I}$.
\end{definition}

\begin{definition}[Semantics of differential hybrid games] \label{def:DHG-semantics}
The \emph{semantics of a differential hybrid game} $\alpha$ is a function $\strategyfor[\alpha]{\argholder}$, that, for each set of Angel's winning states \(X\subseteq\linterpretations{\Sigma}{V}\), gives the \emph{winning region} of Angel, i.e.\ the set of states \m{\strategyfor[\alpha]{X}} from which Angel has a winning strategy to achieve $X$ (whatever strategy Demon chooses).
It is defined inductively as follows:
\begin{enumerate}
\setlength{\itemsep}{-2pt}
\item \(\strategyfor[\pdiffgame{\D{x}=\genDG{x}{y}{z}}{y\in Y}{z\in Z}]{X}\)
=\\ \(\{\iget[state]{\I} \in \linterpretations{\Sigma}{V} \with\)
\(\mexists{T{\geq}0}{\mexists{\beta\in\stratA{\eta}}{\mforall{y\in\controlD{\eta}}{\mexists{0{\leq}\zeta{\leq}T}}}}\)
\(\response[f]{\zeta}{\iget[state]{\I}}{y}{\beta(y)} \in X\}\)

\item \(\strategyfor[\pupdate{\pumod{x\,}{\theta}}]{X} = \{\iportray{\I} \in \linterpretations{\Sigma}{V} \with \modif{\iget[state]{\I}}{x}{\ivaluation{\I}{\theta}} \in X\}\)

\item \(\strategyfor[\ptest{\ivr}]{X} = \imodel{\I}{\ivr}\cap X\)
\item \(\strategyfor[\pchoice{\alpha}{\beta}]{X} = \strategyfor[\alpha]{X}\cup\strategyfor[\beta]{X}\)
\item \(\strategyfor[\alpha;\beta]{X} = \strategyfor[\alpha]{\strategyfor[\beta]{X}}\)
\item \(\strategyfor[\prepeat{\alpha}]{X} = \capfold\{Z\subseteq\linterpretations{\Sigma}{V} \with X\cup\strategyfor[\alpha]{Z}\subseteq Z\}\)

\item \(\strategyfor[\pdual{\alpha}]{X} = \scomplement{(\strategyfor[\alpha]{\scomplement{X}})}\)
\end{enumerate}
The \emph{winning region} of Demon is a function \m{\dstrategyfor[\alpha]{\argholder}}, which, for each of Demon's winning states $X\subseteq\linterpretations{\Sigma}{V}$ gives the set of states \m{\dstrategyfor[\alpha]{\argholder}} from which Demon has a winning strategy to achieve $X$ (whatever strategy Angel chooses).
It is defined inductively as:
\begin{enumerate}
\setlength{\itemsep}{-2pt}
\item \(\dstrategyfor[\pdiffgame{\D{x}=\genDG{x}{y}{z}}{y\in Y}{z\in Z}]{X}\)
=\\
\(\{\iget[state]{\I} \in \linterpretations{\Sigma}{V} \with\)
\(\mforall{T{\geq}0}{\mforall{\beta\in\stratA{\eta}}{\mexists{y\in\controlD{\eta}}{\mforall{0{\leq}\zeta{\leq}T}{}}}}\)
\(\response[f]{\zeta}{\iget[state]{\I}}{y}{\beta(y)} \in X\}\)

\item \(\dstrategyfor[\pupdate{\pumod{x\,}{\theta}}]{X} = \{\iportray{\I} \in \linterpretations{\Sigma}{V} \with \modif{\iget[state]{\I}}{x}{\ivaluation{\I}{\theta}} \in X\}\)
\item \(\dstrategyfor[\ptest{\ivr}]{X} = \scomplement{(\imodel{\I}{\ivr})}\cup X\)
\item \(\dstrategyfor[\pchoice{\alpha}{\beta}]{X} = \dstrategyfor[\alpha]{X}\cap\dstrategyfor[\beta]{X}\)
\item \(\dstrategyfor[\alpha;\beta]{X} = \dstrategyfor[\alpha]{\dstrategyfor[\beta]{X}}\)
\item \(\dstrategyfor[\prepeat{\alpha}]{X} = \cupfold\{Z\subseteq\linterpretations{\Sigma}{V} \with Z\subseteq X\cap\dstrategyfor[\alpha]{Z}\}\)
\item \(\dstrategyfor[\pdual{\alpha}]{X} = \scomplement{(\dstrategyfor[\alpha]{\scomplement{X}})}\)
\end{enumerate}
\end{definition}

The compositional semantics of differential hybrid games agrees with that of hybrid games \cite{DBLP:journals/tocl/Platzer15} except for the addition of differential games.
Time horizon $T$, nonanticipative strategy $\passfunction{\beta}$ for $Z$, and stopping times $\zeta$ of differential games are Angel's choice while control of $y$ is Demon's choice.
Angel first chooses a finite time horizon $T$ and nonanticipative strategy $\beta$, but the corresponding control $\passfunction{\beta(y)}$ from her nonanticipative strategy $\beta$ gives her a chance to observe Demon's current action from Demon's control $y$.
Angel ultimately gets to inspect the resulting state and decide at what time $\zeta$ she wants to stop playing the differential game.
This is the continuous counterpart of $\prepeat{\alpha}$, where Angel gets to inspect the state and decide whether she wants to repeat the loop again or not, which follows from the fixpoint semantics of $\prepeat{\alpha}$; see \cite{DBLP:journals/tocl/Platzer15}.
The fact that Angel has to choose some arbitrarily large but finite time horizon $T$ first corresponds to her not being allowed to play the differential game indefinitely, just like she is not allowed to repeat playing $\prepeat{\alpha}$ forever, which again results from its least fixpoint semantics \cite{DBLP:journals/tocl/Platzer15}.
Demon has a winning strategy in the differential game \(\pdiffgame{\D{x}=\genDG{x}{y}{z}}{y\in Y} {z\in Z}\) to achieve $X$ if for all of Angel's time horizons $T$ and all of Angel's nonanticipative strategies $\beta$ for $Z$ there is a control $y\in\controlD{\eta}$ for Demon such that, for all of Angel's stopping times $\zeta$, the game ends in one of Demon's winning states (i.e.\ in $X$).
Demon knows $\beta\in\stratA{\eta}$ when choosing $y\in\controlD{\eta}$, so he can predict the states over time by solving \rref{eq:diffgame} from \rref{def:diffgame} via \rref{lem:response}.
Angel can predict the states over time by \rref{lem:response} as well, since her strategy $\beta\in\stratA{\eta}$ receives Demon's control $y\in\controlD{\eta}$ as an input.
But Angel's nonanticipative $\beta$ allows \(\beta(y)(s)\) to depend on $y(s)$, which gives her the information advantage for the current action.

The (dual) quantifier order for $\ddiamond{\cdot}{}$ is the same, so that Angel finds some $\beta\in\stratA{\eta}$ that works for any $y\in\controlD{\eta}$ since she cannot predict what Demon will play.
Hence, the informational advantage of the opponent's current action as well as the advantage of controlling time in a differential game consistently goes to Angel, whether asking for Angel's winning strategy in $\ddiamond{\cdot}{}$ or for Demon's winning strategy in $\dbox{\cdot}{}$.
The same game is played in \(\dbox{\pdiffgame{\D{x}=\genDG{x}{y}{z}}{y\in Y}{z\in Z}}{}\) and in \(\ddiamond{\pdiffgame{\D{x}=\genDG{x}{y}{z}}{y\in Y}{z\in Z}}{}\) with the same order of information as indicated by the notation of differential games, just from the perspective of winning strategies for different players.

The last quantifier $\zeta$ might appear to be unimportant, because, if Angel wins, then from any state there is a maximum time horizon $T$ within which she wins so that it seems like it would be enough for her to choose that maximum time horizon $T$ and check for the terminal state at time $T$.
However, Demon might then still let Angel ``win'' earlier by playing suboptimally if that gives him the possibility of moving outside Angel's winning condition again before the winning condition is checked at time $T$.
It is, thus, important for Angel to be able to stop the differential game at any time based on the state she observes.
She will want to stop when the game reached her target.
Consider, e.g., the \emph{race car game}, where Demon is in control of a car toward a goal \m{x^2<1} and Angel is in control of time but has no other control input $z$:
\[
x=-9\land t=0\limply
\dbox{\pdiffgame{\D{x}=y\syssep\D{t}=1}{y\in{\scriptstyle[}1,2{\scriptstyle]}}{}}{(4{<}t{<}8 \limply x^2{<}1)}
\]
If Angel were to declare that she will choose $\zeta=T$ upfront by advance notice, then Demon could compute an optimal velocity \m{y=\frac{8}{T}=\frac{8}{\zeta}\in[1,2]} and will win at time $\zeta=T$, since Angel has to choose \(4<\zeta<8\) to stand a chance to win.
Since Angel, however, only declares a time bound $T$ and chooses the actual stopping time $\zeta$ only after Demon revealed $y\in\controlD{\eta}$, she can choose \(T=7\) and end the game at \(\zeta=4.1\) to win if Demon has not moved $x$ to \(x^2<1\) at time $4.1$ yet.
If Demon has moved $x$ to \(x^2<1\), so \(x>-1\), then Angel still wins by waiting the full time \(\zeta=T\), which is at least 2 more seconds, during which Demon must have moved by at least 2 along $\D{x}=y$ and left \(x^2<1\) to lose.
This example hinges on a postcondition that is neither open nor closed.

Since differential hybrid games have the same information structure for \m{\ddiamond{\alpha}{}} and \m{\dbox{\alpha}{}}, just referring to another player's winning strategy, the determinacy theorem \cite[Thm.\,3.1]{DBLP:journals/tocl/Platzer15} extends to differential hybrid games.
In each state, exactly one player has a winning strategy, i.e.\ either Angel has a winning strategy to achieve $\lnot\phi$ or Demon has a winning strategy to achieve $\phi$.
The proof is much easier than (partial) determinacy results for other scenarios and other information patterns of differential games \cite{DBLP:journals/siamco/Cardaliaguet96}, and gains simplicity compared to Borel determinacy, just like the determinacy theorem for hybrid games that it is based on \cite[Thm.\,3.1]{DBLP:journals/tocl/Platzer15}.

\begin{theorem}[Determinacy] \label{thm:dGL-determined}%
  Differential hybrid games are determined, i.e.\ 
\ \m{\entails \lnot\ddiamond{\alpha}{\lnot\phi} \lbisubjunct \dbox{\alpha}{\phi}}.
\end{theorem}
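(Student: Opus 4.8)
The plan is to peel off the modalities with \rref{def:dGL-semantics} and reduce the validity claim to a purely set-theoretic identity about winning regions. Writing $X\mdefeq\imodel{\I}{\phi}$, the semantics gives $\imodel{\I}{\lnot\ddiamond{\alpha}{\lnot\phi}}=\scomplement{(\strategyfor[\alpha]{\scomplement{X}})}$ and $\imodel{\I}{\dbox{\alpha}{\phi}}=\dstrategyfor[\alpha]{X}$, so the biconditional is valid iff these two sets coincide. It therefore suffices to prove, for \emph{every} differential hybrid game $\alpha$ and \emph{every} set $X\subseteq\linterpretations{\Sigma}{V}$, the identity $\dstrategyfor[\alpha]{X}=\scomplement{(\strategyfor[\alpha]{\scomplement{X}})}$. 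Because the winning-region semantics of a subgame does not depend on the enclosing postcondition, and because the claim is already universally quantified over $X$, a plain structural induction on $\alpha$ suffices; no simultaneous induction with formulas is needed, since a test $\ptest{\ivr}$ contributes only the fixed set $\imodel{\I}{\ivr}$.

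For the discrete cases the identity is De Morgan duality together with the induction hypothesis. Assignments are immediate since $\strategyfor[\pupdate{\pumod{x}{\theta}}]{\argholder}$ and $\dstrategyfor[\pupdate{\pumod{x}{\theta}}]{\argholder}$ use the same preimage. For $\ptest{\ivr}$ one checks $\scomplement{(\imodel{\I}{\ivr}\cap\scomplement{X})}=\scomplement{(\imodel{\I}{\ivr})}\cup X$ directly. For $\pchoice{\alpha}{\beta}$ the complement turns the union defining $\strategyfor{\argholder}$ into the intersection defining $\dstrategyfor{\argholder}$ after invoking the hypothesis on each branch. For $\alpha;\beta$ one applies the hypothesis to $\alpha$ and then to $\beta$, so $\scomplement{(\strategyfor[\alpha]{\strategyfor[\beta]{\scomplement{X}}})}=\dstrategyfor[\alpha]{\scomplement{(\strategyfor[\beta]{\scomplement{X}})}}=\dstrategyfor[\alpha]{\dstrategyfor[\beta]{X}}$. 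For $\pdual{\alpha}$ the two nested complements cancel against the hypothesis.

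The repetition $\prepeat{\alpha}$ is handled by the standard duality between least and greatest fixpoints. Here $\strategyfor[\prepeat{\alpha}]{\scomplement{X}}$ is the least fixpoint of the monotone operator $Z\mapsto\scomplement{X}\cup\strategyfor[\alpha]{Z}$, whereas $\dstrategyfor[\prepeat{\alpha}]{X}$ is the greatest fixpoint of $Z\mapsto X\cap\dstrategyfor[\alpha]{Z}$. By Knaster--Tarski, the complement of the least fixpoint of an operator $f$ equals the greatest fixpoint of its dual $\tilde f(Z)\mdefeq\scomplement{(f(\scomplement{Z}))}$. Computing this dual and then substituting the induction hypothesis $\scomplement{(\strategyfor[\alpha]{\scomplement{Z}})}=\dstrategyfor[\alpha]{Z}$ turns it into precisely $Z\mapsto X\cap\dstrategyfor[\alpha]{Z}$, so the complemented Angel loop and the Demon loop have the same greatest fixpoint.

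The genuinely new case, and the one I expect to be the crux, is the differential game base case; the remaining cases simply replay the determinacy argument for hybrid games \cite[Thm.\,3.1]{DBLP:journals/tocl/Platzer15}. Here the work is a clean negation of the quantifier prefix: the membership $\iget[state]{\I}\in\strategyfor[\pdiffgame{\D{x}=\genDG{x}{y}{z}}{y\in Y}{z\in Z}]{\scomplement{X}}$ spells out as $\mexists{T{\geq}0}{\mexists{\beta\in\stratA{\eta}}{\mforall{y\in\controlD{\eta}}{\mexists{0{\leq}\zeta{\leq}T}{\response[f]{\zeta}{\iget[state]{\I}}{y}{\beta(y)}\in\scomplement{X}}}}}$, and negating this flips every quantifier and replaces $\scomplement{X}$ by $X$, yielding exactly the defining condition of $\dstrategyfor[\pdiffgame{\D{x}=\genDG{x}{y}{z}}{y\in Y}{z\in Z}]{X}$. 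What makes this go through is that the Angel and Demon semantics use the \emph{same} response function and the \emph{same} informational order (Angel quantifies $T,\beta,\zeta$ and Demon quantifies $y$, with $\beta$ a nonanticipative strategy for $Z$ in both), so the two prenex forms are exact De Morgan duals. The only analytic input is \rref{lem:response}: it guarantees that for each $(y,\beta(y))$ the response is a \emph{single} well-defined state, so the atom ``$\response[f]{\zeta}{\iget[state]{\I}}{y}{\beta(y)}\in\scomplement{X}$'' negates cleanly to ``$\in X$'' with no multi-valuedness or measure-zero obstructions, which is exactly the subtlety that distinguishes this argument from the harder partial-determinacy results for other information patterns.
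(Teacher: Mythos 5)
Your proposal is correct and follows essentially the same route as the paper: reduce validity to the set identity \(\scomplement{\strategyfor[\alpha]{\scomplement{X}}}=\dstrategyfor[\alpha]{X}\) for all \(X\), prove it by structural induction on \(\alpha\) with De Morgan for the discrete cases, least/greatest fixpoint duality for \(\prepeat{\alpha}\), and a direct negation of the quantifier prefix for the differential game case. The additional observations (that no simultaneous induction with formulas is needed, and that \rref{lem:response} is what makes the atomic membership negate cleanly) are accurate refinements of the same argument.
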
%
\begin{proofatend}
The proof shows by induction on the structure of $\alpha$ that
\m{\scomplement{\strategyfor[\alpha]{\scomplement{X}}} = \dstrategyfor[\alpha]{X}} for all \m{X\subseteq\linterpretations{\Sigma}{V}}, which implies the validity of \m{\lnot\ddiamond{\alpha}{\lnot\phi} \lbisubjunct \dbox{\alpha}{\phi}} using \m{X\mdefeq\imodel{\I}{\phi}} by \rref{def:dGL-semantics}.
The only difference to a corresponding determinacy proof for hybrid games \cite{DBLP:journals/tocl/Platzer15} is the additional \rref{case:diffgame-determined} for differential games, which follows directly from \rref{def:DHG-semantics}:
\begin{enumerate}
\item \label{case:diffgame-determined}
\(\scomplement{\strategyfor[\pdiffgame{\D{x}= \genDG{x}{y}{z}}{y\in Y}{z\in Z}]{\scomplement{X}}}\)
\\= \(\scomplement{\{\iget[state]{\I} \in \linterpretations{\Sigma}{V} \with\)
\(\mexists{T{\geq}0}{\mexists{\beta\in\stratA{\eta}}{\mforall{y\in\controlD{\eta}}{\mexists{0{\leq}\zeta{\leq}T}}}}\)
\(\response[f]{\zeta}{\iget[state]{\I}}{y}{\beta(y)}) \in \scomplement{X}\}}\)
\\= \(\{\iget[state]{\I} \in \linterpretations{\Sigma}{V} \with\)
\(\mnot\, \mexists{T{\geq}0}{\mexists{\beta\in\stratA{\eta}}{\mforall{y\in\controlD{\eta}}{\mexists{0{\leq}\zeta{\leq}T}}}}\)
\(\response[f]{\zeta}{\iget[state]{\I}}{y}{\beta(y)}) \not\in X\}\)
\\= \(\{\iget[state]{\I} \in \linterpretations{\Sigma}{V} \with\)
\(\mforall{T{\geq}0}{\mforall{\beta\in\stratA{\eta}}{\mexists{y\in\controlD{\eta}}{\mforall{0{\leq}\zeta{\leq}T}}}}\)
\(\response[f]{\zeta}{\iget[state]{\I}}{y}{\beta(y)}) \in X\}\)
\\= \(\dstrategyfor[\pdiffgame{\D{x}=\genDG{x}{y}{z}}{y\in Y}{z\in Z}]{X}\)

\item \(\scomplement{\strategyfor[\pupdate{\pumod{x}{\theta}}]{\scomplement{X}}}
= \scomplement{\{\iportray{\I} \in \linterpretations{\Sigma}{V} \with \modif{\iget[state]{\I}}{x}{\ivaluation{\I}{\theta}} \not\in X\}}
= \strategyfor[\pupdate{\pumod{x}{\theta}}]{X}
= \dstrategyfor[\pupdate{\pumod{x}{\theta}}]{X}\)

\item \(\scomplement{\strategyfor[\ptest{\ivr}]{\scomplement{X}}}
= \scomplement{(\imodel{\I}{\ivr}\cap\scomplement{X})}
= \scomplement{(\imodel{\I}{\ivr})}\cup \scomplement{(\scomplement{X})}
= \dstrategyfor[\ptest{\ivr}]{X}\)

\item \(\scomplement{\strategyfor[\pchoice{\alpha}{\beta}]{\scomplement{X}}}
= \scomplement{(\strategyfor[\alpha]{\scomplement{X}} \cup \strategyfor[\beta]{\scomplement{X}})}
= \scomplement{\strategyfor[\alpha]{\scomplement{X}}} \cap \scomplement{\strategyfor[\beta]{\scomplement{X}}}
= \dstrategyfor[\alpha]{X}\cap\dstrategyfor[\beta]{X}
= \dstrategyfor[\pchoice{\alpha}{\beta}]{X}\)

\item \(\scomplement{\strategyfor[\alpha;\beta]{\scomplement{X}}}
= \scomplement{\strategyfor[\alpha]{\strategyfor[\beta]{\scomplement{X}}}}
= \scomplement{\strategyfor[\alpha]{\scomplement{\dstrategyfor[\beta]{X}}}}
= \dstrategyfor[\alpha]{\dstrategyfor[\beta]{X}}
= \dstrategyfor[\alpha;\beta]{X}\)

\item \(\scomplement{\strategyfor[\prepeat{\alpha}]{\scomplement{X}}}
= \scomplement{\left(\capfold\{Z\subseteq\linterpretations{\Sigma}{V} \with \scomplement{X}\cup\strategyfor[\alpha]{Z}\subseteq Z\}
\right)}\)
= \(\scomplement{\left(\capfold\{Z\subseteq\linterpretations{\Sigma}{V} \with \scomplement{(X\cap\scomplement{\strategyfor[\alpha]{Z}})}\subseteq Z\}\right)}\)\\
= \(\scomplement{\left(\capfold\{Z\subseteq\linterpretations{\Sigma}{V} \with \scomplement{(X\cap\dstrategyfor[\alpha]{\scomplement{Z}})}\subseteq Z\}\right)}\)
= \(\cupfold\{Z\subseteq\linterpretations{\Sigma}{V} \with Z\subseteq X\cap\dstrategyfor[\alpha]{Z}\}
= \dstrategyfor[\prepeat{\alpha}]{X}\).
\footnote{The penultimate equation follows from the $\mu$-calculus equivalence
\(\gfp{Z}{\mapply{\Upsilon}{Z}} \mequiv \lnot\lfp{Z}{\lnot\mapply{\Upsilon}{\lnot Z}}\) and the fact that least pre-fixpoints are fixpoints and that greatest post-fixpoints are fixpoints for monotone functions.}

\item \(\scomplement{\strategyfor[\pdual{\alpha}]{\scomplement{X}}}
= \scomplement{(\scomplement{\strategyfor[\alpha]{\scomplement{(\scomplement{X})}}})}
= \scomplement{\dstrategyfor[\alpha]{\scomplement{X}}}
= \dstrategyfor[\pdual{\alpha}]{X}\)
\qedhere
\end{enumerate}
\end{proofatend}%

A formula is called \emph{atomically open} if its negation normal form is built from $\land$,$\lor$,$>$,$<$.
Atomically open formulas define topologically open sets.
The converse is not true, because there can be spurious extra subformulas:
\(0<x\land x<5 \lor x=2\) is topologically open but has an irrelevant topologically closed subformula \(x=2\).
A formula is \emph{atomically closed} if its negation normal form is built from $\land,\lor,\geq,\leq,=$.
Atomically closed formulas define topologically closed sets.
Both converses can always be made true by transforming formulas to avoid superfluous subformulas \cite[2.7.2]{BochnakCR98}. %
The primary focus in this article is on postconditions that are open or closed.

\begin{lemma}[$\reals$ arithmetization] \label{lem:arithmetize}
There is an effective mapping $\arithmetize{(\cdot)}$ from first-order formulas to (continuous) terms of mixed polynomials, $\min$, and $\max$.
If $F$ is atomically open, then
\(\entails F \lbisubjunct (\arithmetize{F}>0)\).
If $F$ is atomically closed,
\(\entails F \lbisubjunct (\arithmetize{F}\geq0)\).
\end{lemma}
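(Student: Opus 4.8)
The plan is to define the arithmetization $\arithmetize{(\cdot)}$ by structural recursion on the negation normal form (NNF) of the input formula, and then to establish the two claimed equivalences by two parallel structural inductions. Since an atomically open or atomically closed formula is, by definition, a quantifier-free combination of comparison atoms under $\land$ and $\lor$ only (every negation having been pushed onto the atoms when forming the NNF), the recursion need only treat atoms and these two connectives. On atoms I set $\arithmetize{(\theta_1>\theta_2)} \mdefeq \theta_1-\theta_2$ and $\arithmetize{(\theta_1<\theta_2)} \mdefeq \theta_2-\theta_1$ for the strict cases, $\arithmetize{(\theta_1\geq\theta_2)} \mdefeq \theta_1-\theta_2$ and $\arithmetize{(\theta_1\leq\theta_2)} \mdefeq \theta_2-\theta_1$ for the non-strict cases, and $\arithmetize{(\theta_1=\theta_2)} \mdefeq -(\theta_1-\theta_2)^2$ for equality. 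On connectives I set $\arithmetize{(\phi\land\psi)} \mdefeq \min(\arithmetize{\phi},\arithmetize{\psi})$ and $\arithmetize{(\phi\lor\psi)} \mdefeq \max(\arithmetize{\phi},\arithmetize{\psi})$. This transformation is plainly effective; moreover the resulting term lies in the stated fragment, and since polynomials, $\min$, and $\max$ are continuous and continuity is preserved under composition, $\arithmetize{F}$ is a continuous term.

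For the open claim I prove, by induction on the NNF of an atomically open $F$, that in every state one has $\iget[state]{\I}\in\imodel{\I}{F}$ iff $\ivaluation{\I}{\arithmetize{F}}>0$; for the closed claim I prove the analogue with $\geq0$ in place of $>0$. The base cases are immediate from the semantics of arithmetic: $\theta_1>\theta_2$ holds iff $\ivaluation{\I}{\theta_1-\theta_2}>0$, and symmetrically for $<$; $\theta_1\geq\theta_2$ holds iff $\ivaluation{\I}{\theta_1-\theta_2}\geq0$, and symmetrically for $\leq$; and $\theta_1=\theta_2$ holds iff $(\ivaluation{\I}{\theta_1}-\ivaluation{\I}{\theta_2})^2=0$, i.e.\ iff $\ivaluation{\I}{-(\theta_1-\theta_2)^2}\geq0$, using that a square is nonnegative and vanishes exactly when its base does.

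The inductive step rests on the elementary threshold facts that $\min(a,b)>0$ holds iff both $a>0$ and $b>0$, that $\max(a,b)>0$ holds iff $a>0$ or $b>0$, together with their non-strict analogues obtained by replacing every $>0$ with $\geq0$. Applying these with $a=\ivaluation{\I}{\arithmetize{\phi}}$ and $b=\ivaluation{\I}{\arithmetize{\psi}}$ and invoking the induction hypothesis closes both inductions simultaneously, matching $\land$ with $\min$ and $\lor$ with $\max$. The step I expect to need the most care is the equality atom in the closed case: it must be encoded so that it stays within the continuous polynomial/$\min$/$\max$ fragment yet is still characterized by a single non-strict sign condition, which is exactly what $-(\theta_1-\theta_2)^2\geq0$ achieves (equivalently $-\abs{\theta_1-\theta_2}\geq0$ using $\abs{a}=\max(a,-a)$). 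Everything else is a routine check that the strict and non-strict inductions stay in lockstep, that no strict atom ever occurs in an atomically closed formula (nor any non-strict atom in an atomically open one), so that each induction uses only its matching base clauses, and that the final term is continuous. Reading the proven state-wise equivalence across all states then yields the claimed validities $\entails F\lbisubjunct(\arithmetize{F}>0)$ for atomically open $F$ and $\entails F\lbisubjunct(\arithmetize{F}\geq0)$ for atomically closed $F$.
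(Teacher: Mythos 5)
Your proposal is correct and essentially matches the paper's proof: the same structural recursion on negation normal form with $\min$ for $\land$, $\max$ for $\lor$, and term differences on the atoms (the paper encodes $a=b$ as $\arithmetize{(a\geq b\land b\geq a)}=\min(a-b,\,b-a)$ rather than $-(a-b)^2$, an immaterial variation since both are continuous, stay in the fragment, and are $\geq0$ exactly on equality). The one omission is that the paper first invokes Tarski's quantifier elimination to reduce an arbitrary first-order formula to quantifier-free negation normal form — needed for the lemma's opening claim that the mapping is defined on \emph{all} first-order formulas — whereas your recursion only covers the quantifier-free case, though that is all the two equivalence claims actually require.
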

\begin{proofatend}
By quantifier elimination \cite{tarski_decisionalgebra51}, $F$ can be assumed to be quantifier-free and in negation normal form.
The term $\arithmetize{F}$ of mixed polynomials, $\min$, $\max$ for $F$ is defined inductively, which obeys the desired properties:
\begin{align*}
  \arithmetize{(a\geq b)} &\mequiv \arithmetize{(a>b)} \mequiv a-b\\
  \arithmetize{(a<b)} &\mequiv \arithmetize{(b>a)}\\
  \arithmetize{(a\leq b)} &\mequiv \arithmetize{(b\geq a)}\\
  \arithmetize{(a=b)} &\mequiv \arithmetize{(a\geq b\land b\geq a)}\\
  \arithmetize{(F\land G)} &\mequiv \min(\arithmetize{F},\arithmetize{G})\\
  \arithmetize{(F\lor G)} &\mequiv \max(\arithmetize{F},\arithmetize{G})
  \qedhere
\end{align*}
\end{proofatend}

Even if not all are necessary, the assumptions in \rref{def:diffgame} will be required to hold when playing differential games from \rref{def:dGL-DHG}.
They can be checked using the relations in \rref{sec:preliminaries}, which are decidable for the relevant terms in first-order real arithmetic \cite{tarski_decisionalgebra51}.
The easiest criterion, however, is the following:
\begin{lemma}[Well-definedness] \label{lem:well-defined}
  If $f$ is bounded for compact \(\imodel{\I}{y\in Y}\),\(\imodel{\I}{z\in Z}\) and $F$ is open or closed, then all differential games for
  \(\dbox{\pdiffgame{\D{x}=f(x,y,z)}{y\in Y}{z\in Z}}{F}\)
  and\\
  \(\ddiamond{\pdiffgame{\D{x}=f(x,y,z)}{y\in Y}{z\in Z}}{F}\)
  are well-defined.
\end{lemma}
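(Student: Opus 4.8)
The plan is to reduce the claim to the hypotheses of \rref{def:diffgame}: once those hold, \rref{lem:response} supplies a unique Carath\'eodory response for every control pair, and hence the differential-game clauses of \rref{def:DHG-semantics} (the set-builders quantifying over $T$, $\beta$, $y$, and $\zeta$) refer to well-defined objects for both $\ddiamond{}{}$ and $\dbox{}{}$. Four things must be checked: compactness of the control sets, boundedness of $f$, uniform continuity and uniform Lipschitz-continuity of $f$ in $x$, and a bounded Lipschitz terminal payoff $g$ faithfully representing the postcondition $F$. The first two are immediate: $\imodel{\I}{y\in Y}$ and $\imodel{\I}{z\in Z}$ are compact by assumption, and $f$ is bounded, say $\norm{f}\leq M$, on $\reals^n$ times these compact control sets by hypothesis.

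The key device is to pass to a compact reachable region. First I would fix an arbitrary finite time horizon $T$ and an initial state; since $\norm{f}\leq M$, every solution of the differential equation on $[0,T]$ stays inside a closed ball of some radius $R$ determined by $M$, $T$, and the initial state, a compact tube. On this tube the polynomial, or more generally semialgebraic, term $f$ is continuous and locally Lipschitz, hence bounded, uniformly continuous, and Lipschitz in $x$ there. To meet the \emph{global} regularity literally demanded by \rref{def:diffgame}, I would replace $f$ outside the tube by a semialgebraic truncation built from $\min$ and $\max$ that agrees with $f$ on the tube and is globally bounded and globally Lipschitz in $x$; global Lipschitz-continuity then also yields global uniform continuity. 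Because every response stays inside the tube for the whole horizon $[0,T]$, this modification changes neither the responses nor the resulting winning regions.

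For the payoff I would invoke \rref{lem:arithmetize}. As $F$ is open or closed, its arithmetization $\arithmetize{F}$ is a continuous term of mixed polynomials, $\min$, and $\max$ with $\imodel{\I}{F}=\{\arithmetize{F}>0\}$ in the open case and $\{\arithmetize{F}\geq 0\}$ in the closed case. Setting $g\mdefeq\arithmetize{F}$, continuity makes $g$ bounded and Lipschitz on the compact tube, and clamping $g$ into $[-1,1]$ (again via $\min,\max$) keeps it globally bounded and Lipschitz while preserving its sign, hence preserving the set it represents. This furnishes the bounded Lipschitz terminal payoff required by \rref{def:diffgame} and links the reachability formulation of \rref{def:DHG-semantics} to a genuine differential game with payoff.

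With every hypothesis of \rref{def:diffgame} verified on the relevant tube, \rref{lem:response} yields a unique response for each control pair, therefore for each nonanticipative strategy in $\stratA{\eta}$ and $\stratD{\eta}$, so the winning-region clauses of \rref{def:DHG-semantics} are well-defined for both $\ddiamond{\pdiffgame{\D{x}=f(x,y,z)}{y\in Y}{z\in Z}}{F}$ and $\dbox{\pdiffgame{\D{x}=f(x,y,z)}{y\in Y}{z\in Z}}{F}$. I expect the main obstacle to be precisely the mismatch between the \emph{global} regularity literally required in \rref{def:diffgame} and the merely \emph{local} good behavior of polynomial and semialgebraic terms; the bounded-tube-plus-clamping argument closes this gap, and the delicate points are checking that the truncation of $f$ leaves all responses unchanged and that clamping $g$ does not alter the represented postcondition.
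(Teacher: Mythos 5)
Your proposal follows essentially the same route as the paper: confine all responses to a compact ball (the paper uses radius $bT$ around the initial state, from the bound $b$ on $f$), observe that on that compact set the polynomial/$\min$/$\max$ terms $f$ and $\arithmetize{F}$ are Lipschitz, extend globally without changing the game, and invoke \rref{lem:response} and \rref{lem:arithmetize}. The one step where you diverge is the extension device, and as stated it has a small gap: a value clamp $\max(\min(f,k),-k)$ agrees with $f$ on the tube and is globally bounded, but it is \emph{not} in general globally Lipschitz in $x$ for multivariate polynomials (e.g.\ $\max(\min(x_1x_2,1),-1)$ has unbounded difference quotients along the hyperbola $x_1x_2=1/2$), so clamping alone does not meet the global Lipschitz requirement of \rref{def:diffgame}. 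The paper closes exactly this gap by appealing to the Tietze/McShane-type Lipschitz extension theorem to obtain globally bounded, globally Lipschitz $\hat{f}$ and $\hat{g}$ agreeing with $f$ and $\arithmetize{F}$ on the compact ball; substituting that citation for your clamp makes your argument match the paper's proof. (Your clamping of the payoff $g$ into $[-1,1]$ has the same issue in principle, though there it is harmless for the sign and is likewise repaired by the Lipschitz extension.)
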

\begin{proofatend}
Let $b$ a bound on the norm of $f$.
For any initial state $\iget[state]{\I}$ and any time horizon $T\geq0$ (\rref{def:DHG-semantics}) any response \(\response[f]{\zeta}{\iget[state]{\I}}{y}{\beta(y)}\) of \rref{eq:diffgame} remains on the compact ball of radius \(bT\) around $\iget[state]{\I}$.
Without changing the differential game, $f$ can, thus, be replaced by an $\hat{f}$ that agrees with $f$ on this compact ball and accordingly for the payoff.
On that compact set, the \dGL term $f$ and the arithmetization $\arithmetize{F}$ define Lipschitz-continuous functions (even when using $\min,\max$ terms) as follows.
Polynomials are smooth and, thus, Lipschitz on compact sets.
The absolute value function is Lipschitz.
The composition \(\min(x,y)=(x+y)/2-\abs{x-y}/2\) of Lipschitz functions is Lipschitz\footnote{%
\(\abs{f(g(x))-f(g(y))}\leq L\abs{g(x)-g(y)} \leq LK\abs{x-y}\) makes the composition \(f\compose g\) $LK$-Lipschitz when $f$ is $L$-Lipschitz and $g$ is $K$-Lipschitz, respectively.}
and so is \(\max(x,y)=-\min(-x,-y)\).
By Tietze \cite[2.19]{Walter:Ana2}, there are Lipschitz-continuous extensions $\hat{f}$ of $f$ and $\hat{g}$ of \(\arithmetize{F}\) that agree on the compact ball and remain bounded.
The differential game \(\pdiffgame{\D{x}=\hat{f}(x,y,z)}{y\in Y}{z\in Z}\) with payoff \(\hat{g}\) is, thus, equivalent by Lemmas~\ref{lem:response} and~\ref{lem:arithmetize} and it meets the requirements of \rref{def:diffgame}.
\end{proofatend}
For any horizon $T$ and initial state $\iget[state]{\I}$ as used in \rref{def:DHG-semantics}, the right-hand side $f$ of a differential game can be replaced in similar ways by a bounded function without changing the game \cite{DBLP:journals/siamco/GruneS11},
since $f$ is continuous by \rref{def:dGL-DHG}.
Unlike semantic differential games (\rref{def:diffgame}), the differential games in the logic \dGL (\rref{def:dGL-DHG}) have no implicit time-dependency but need an explicit extra clock variable $t$ with differential equation \m{\D{t}=1} to express time-dependencies.
Retaining an explicit time-dependency for semantic differential games (\rref{def:diffgame}) is helpful for the soundness proofs, however.

\section{Differential Game Proofs} \label{sec:diffgame-axioms}

This section introduces sound induction principles for differential games with differential game invariants and differential game variants as well as ways of comparing differential games by differential game refinements.

Differential equations are already hard to solve and it is challenging or impossible to use their solutions for proofs \cite{DBLP:conf/lics/Platzer12b}.
It is even more difficult, however, to solve differential games, because their Carath\'eodory solutions depend on the control choices adopted by the two players, which can be arbitrary measurable functions and are mutually dependent.
A direct representation of this would, thus, require not just alternating quantification over arbitrary measurable functions, but also the ability to solve all resulting Carath\'eodory-type ordinary differential equations and to prove properties about all their respective behaviors -- a truly daunting enterprise.

\begin{figure*}[tbh]
\begin{calculus}
\cinferenceRule[diffgameind|DGI]{differential game invariants}
{\linferenceRule[sequent]
{ \lexists{y\in Y}{\lforall{z \in Z}{ \dder[\D{x}][f(x,y,z)]{F} }}}
{
  F \limply \dbox{\pdiffgame{\D{x}=\genDG{x}{y}{z}}{y\in Y}{z\in Z}}{F}
}
}{}
\cinferenceRule[diffgamefin|DGV]{differential game variant}
{\linferenceRule[sequent]
{ \lexists{\varepsilon{>}0}{\lforall{x}{\lexists{z \in Z}{\lforall{y\in Y}{( g\leq0 \limply \dder[\D{x}][f(x,y,z)]{g} \geq\varepsilon)}}}}}
{
  \ddiamond{\pdiffgame{\D{x}=\genDG{x}{y}{z}}{y\in Y}{z\in Z}}{g\geq0}
}
}{}
  \cinferenceRule[diffgamerefine|DGR]{differential game refinement}
  {\linferenceRule[sequent]
    {\lforall{u\in U}{\lexists{y\in Y}{\lforall{z\in Z}{\lexists{v\in V}{\lforall{x}{(f(x,y,z)=g(x,u,v))}}}}}}
    {\dbox{\pdiffgame{\D{x}=g(x,u,v)}{u\in U}{v\in V}}{F} \limply \dbox{\pdiffgame{\D{x}=f(x,y,z)}{y\in Y}{z\in Z}}{F}}
  }
  {}
\end{calculus}
  \caption{Differential game proof rules}
  \label{fig:diffgameind}
\end{figure*}
\emph{Differential game invariants}, instead, define a simple induction principle for differential games.
The proof rule of \emph{differential game invariants} and its counterpart for \emph{differential game variants} are shown in \rref{fig:diffgameind}.
Differential game invariants (rule \irref{diffgameind}) have a simple intuition.
\irref{diffgameind} checks if, in each $x$ ($\forall{x}$ is implicit in the premise of \irref{diffgameind} and follows from the definition of validity in \rref{def:dGL-semantics}), there is a local choice of control action $y$ for Demon such that, for all choices of Angel's control action $z$, the derivative \(\dder[\D{x}][f(x,y,z)]{F}\) of $F$ holds when substituting the right-hand side \(f(x,y,z)\) of the differential game for the left-hand side $\D{x}$.
The precise meaning of \(\dder[\D{x}][f(x,y,z)]{F}\) will be developed subsequently.
If the derivative \m{\dder[\D{x}][f(x,y,z)]{F}} represents the change of the truth-value of $F$ along \(\D{x}=f(x,y,z)\), then \irref{diffgameind} makes intuitive sense since its premise means that there is always a local way for Demon to make $F$ ``more true'' with $y$, whatever Angel is trying with $z$.
So, Demon has a winning strategy no matter how long Angel decides to evolve.
Recall that Angel gets to inspect Demon's current $y$ action in her nonanticipative strategy before choosing $z$, which explains the order of quantifiers in \irref{diffgameind}, where Demon first chooses $y\in Y$ that works for all of Angel's $z\in Z$ since Angel chooses last.

\emph{Differential game variants} (rule \irref{diffgamefin}) also have a simple intuition.
Angel can reach the postcondition if, from any state where she has not won yet, there is a progress of at least some $\varepsilon>0$ towards the goal that, uniformly at all $x$, she can realize for some $z$ control choice of hers, no matter what $y$ action Demon chose.
The quantifier order $z,y$ in \irref{diffgamefin} is conservative compared to $y,z$ to simplify the proofs.
Other postconditions are possible based on \rref{lem:arithmetize}, but \irref{diffgamefin} becomes notationally more involved then.

\emph{Differential game refinements} (rule \irref{diffgamerefine}) relate differential games whose equations can be aligned when matching the $u\in U$ control that Demon sought in its antecedent with some of Demon's $y\in Y$ control in the succedent if any control $z\in Z$ that Angel has in the succedent can be conversely matched by a control $v\in V$ that Angel already had in the antecedent.
Via the induced identification of controls, Demon's winning strategy for the differential game in the antecedent carries over to a winning strategy for the differential game in the succedent if Demon has more control power in the succedent while Angel has less.
A dual of \irref{diffgamerefine} for $\ddiamond{\cdot}{}$ derives by \rref{thm:dGL-determined}.
With a cut, \irref{diffgamerefine} can transform differential games from the succedent to the antecedent by refinement.

As with invariants, it may sometimes be difficult to find good differential game invariants or differential game variants for the proof of a property.
Once found, however, they are computationally attractive, since easy to check by decidable arithmetic.

Differential game invariants and differential game variants use syntactic total derivations to compute \emph{differential game derivatives} syntactically.

\begin{definition}[Derivation] \label{def:derivation}
  The operator~\m{\der{\argholder}} that is defined as follows on terms is called \emph{syntactic (total) derivation} from (for simplicity polynomial) terms to differential terms, i.e.\ terms in which differential symbols \m{\D{x}} for variables $x$ are allowed\index{derivation!syntactic}:
  \index{differential!symbol}%
  \begin{subequations}
  \begin{align*}
    \der{r} & = 0
      \hspace{2.1cm}\text{for numbers}~r\in\reals
    \\
    \der{x} & =  \D{x}
      \hspace{2cm}\text{for variables}~x
      \\
    \der{a+b} & = \der{a} + \der{b}
    \\
    \der{a \itimes b} & = \der{a}\itimes b + a\itimes\der{b}
  \end{align*}
  \end{subequations}%
  It extends to (quantifier-free) first-order real-arithmetic formulas $F$ as follows:
  \begin{subequations}
  \begin{align*}
    \der{F\land G} &\,\mequiv\, \der{F} \land \der{G}\\
    \der{F\lor G} &\,\mequiv\, \der{F} \land \der{G}
    \\
    \der{a\geq b} &\,\mequiv\, \der{a>b} \,\mequiv\, \der{a} \geq \der{b}\\
    \der{a\leq b} &\,\mequiv\, \der{a<b} \,\mequiv\, \der{a} \leq \der{b}\\
    \der{a=b} &\,\mequiv\, \der{a\neq b} \,\mequiv\, \der{a} = \der{b}
  \end{align*}
  \end{subequations}%
  Define \m{\dder[\D{x}][\theta]{F}} to be the result of substituting term $\theta$ for $\D{x}$ in \m{\der{F}} and substituting 0 for all other differential symbols \m{\D{c}} that have no differential equation / differential game.
\end{definition}

The relation of the syntactic derivation $\der{e}$ to analytic differentiation is as follows, which identifies the semantics of the syntactic term \m{\subst[\der{e}]{\D{x}}{\theta}} with a Lie-derivative.
\begin{lemma}[Derivation] \label{lem:Lie-relation}
  Let $\theta$ be a (vectorial) term of the same dimension as $x$ and let $e$ be any term, then 
  \(\ivaluation{\I[\xi]}{\subst[\der{e}]{\D{x}}{\theta}} = \ivaluation{\I[\xi]}{\theta} \stimes D_x \ivaluation{\I[\xi]}{e}\),
  where \(D_x \ivaluation{\I[\xi]}{e}\) is the gradient at state $\xi$ with respect to variables $x$ of the value of term $e$.
\end{lemma}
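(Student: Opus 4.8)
The plan is to prove \rref{lem:Lie-relation} by structural induction on the term $e$, showing that the syntactic substitution \m{\subst[\der{e}]{\D{x}}{\theta}} evaluates to the directional derivative of $e$ along $\theta$. The key observation is that $\der{e}$ is a differential term containing symbols $\D{x_i}$, and after substituting the components of $\theta$ for these symbols, the resulting value should equal the inner product \m{\ivaluation{\I[\xi]}{\theta}\stimes D_x\ivaluation{\I[\xi]}{e}}. Since \(D_x\ivaluation{\I[\xi]}{e}\) is the gradient with entries \(\partial\ivaluation{\I[\xi]}{e}/\partial x_i\), this inner product is exactly \(\sum_i \ivaluation{\I[\xi]}{\theta_i}\,\partial_{x_i}\ivaluation{\I[\xi]}{e}\). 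So the goal reduces to showing that \(\der{e}\), under the substitution, computes this sum.

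First I would handle the base cases. For a number $r$, \(\der{r}=0\) and the gradient of a constant is the zero vector, so both sides are $0$. For a variable: if the variable is one of the $x_i$, then \(\der{x_i}=\D{x_i}\) substitutes to \(\ivaluation{\I[\xi]}{\theta_i}\), matching the gradient \(\partial_{x_j}x_i=\delta_{ij}\); if the variable is some other $c$ (not among $x$), then by \rref{def:derivation} the symbol \(\D{c}\) is replaced by $0$, matching the fact that \(\partial_{x_i}c=0\). Then I would carry out the two inductive steps. For addition, \(\der{a+b}=\der{a}+\der{b}\), and the gradient is linear, so the claim follows immediately from the induction hypothesis and bilinearity of the scalar product over the sum. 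For multiplication, \(\der{a\itimes b}=\der{a}\itimes b + a\itimes\der{b}\); here I would invoke the induction hypothesis on $a$ and on $b$ together with the product rule for the gradient, \(D_x(ab)=(D_x a)\,b + a\,(D_x b)\), and distribute the scalar product. The matching of the syntactic Leibniz rule with the analytic product rule is the crux of the whole lemma.

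The main technical point to get right is the bookkeeping of the substitution operator relative to the gradient. Specifically I must verify that substituting $\theta$ for $\D{x}$ commutes appropriately with the ring operations in $\der{e}$, which it does because substitution is a homomorphism on term structure, so \m{\subst[\der{a}+\der{b}]{\D{x}}{\theta}} equals \m{\subst[\der{a}]{\D{x}}{\theta}+\subst[\der{b}]{\D{x}}{\theta}} and likewise for the product. I would make explicit that the residual differential symbols \(\D{c}\) with no associated equation are set to $0$ by \rref{def:derivation}, which is precisely what makes the gradient a $D_x$-gradient (differentiating only with respect to the $x$ variables) rather than a total differential in all variables. The hardest part, conceptually, is not any single calculation but articulating cleanly that the syntactic derivation axioms were designed exactly to mirror the algebraic rules obeyed by the Lie derivative, so that the induction goes through without any gap; once the product-rule case is aligned, the polynomial case is closed since polynomials are generated from constants and variables by $+$ and $\itimes$. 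Since the lemma restricts $e$ to polynomial terms, no chain-rule or quotient subtleties arise, which keeps the induction entirely routine after the base and product cases are settled.
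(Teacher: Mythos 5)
Your proof is correct and is essentially the argument the paper relies on: the paper discharges this lemma by citing a previous result (Lem.~3.3 of the differential-invariants paper), and that cited result is established by exactly the structural induction you describe --- constants and variables as base cases (with the Kronecker-delta behavior for $x_i$ and the zero substitution for other differential symbols $\D{c}$), linearity for sums, and the syntactic Leibniz rule aligning with the analytic product rule. There is no gap; your reconstruction matches the intended proof.
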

\begin{proofatend}
By a notational variation of a previous result \cite[Lem.\,3.3]{DBLP:journals/lmcs/Platzer12}.
\end{proofatend}

The rules in \rref{fig:diffgameind} assume the well-definedness condition from \rref{lem:well-defined}.
A complete axiomatization for the other hybrid game operators of \dGL \cite{DBLP:journals/tocl/Platzer15} is in \rref{app:dGL-HG-axiomatization}.
They play no further role for this article, though, except to manifest how seamlessly differential games proving integrates with hybrid games proving in \dGL.

While a strong point of \dGL is that it enables such a seamless integration of differential games and hybrid games in modeling and analysis, the subsequent examples focus primarily on differential games in order to highlight its novel aspects.
Consider the strength game with \(-1{\leq} y{\leq}1\) abbreviated by \(y\in I\), which proves easily with \irref{diffgameind}:
{\def\arraystretch{1.1}%
\begin{sequentdeduction}[array]
  \linfer[diffgameind]
  {\linfer
    {\linfer[qear]
      {\lclose}
      {\lsequent{} {\lexists{y\in I}{\lforall{z\in I}{0\leq 3x^2(-1+2y+z)}}}}
    }
    {\lsequent{} {\lexists{y\in I}{\lforall{z\in I}{\subst[(0\leq 3x^2\D{x})]{\D{x}}{-1+2y+z}}}}}
  }
  {\lsequent{1\leq x^3} {\dbox{\pdiffgame{\D{x}=-1+2y+z}{y\in I}{z\in I}}{\,1\leq x^3}}}
\end{sequentdeduction}
}%
Using vectorial notation, let \(y\in B\) be \(y_1^2+y_2^2\leq1\).
Let terms \m{L\leq M} denote the maximum speeds of vectors $l$ and $m$.
The simple pursuit \cite{Isaacs:DiffGames}, that vector $m$ can escape the vector $l$, proves easily:
\begin{sequentdeduction}[array]
  \linfer[diffgameind]
  {\linfer
    {\linfer[qear]
      {\lclose}
      {\lsequent{} {\lexists{y\in B}{\lforall{z\in B}{(2(l-m)\stimes(Lz-My)\geq0)}}}}
    }
    {\lsequent{} {\lexists{y\in B}{\lforall{z\in B}{\subst[(2(l-m)\stimes(\D{l}-\D{m})\geq0)]{\D{m}}{My}\subst[\,]{\D{l}}{Lz}}}}}
  }
  {\lsequent{\norm{l{-}m}^2{>}0} {\dbox{\pdiffgame{\D{m}=My\syssep\D{l}=Lz}{y{\in} B}{z{\in} B}}{\,\norm{l{-}m}^2{>}0}}}
\end{sequentdeduction}
Almost the same proof shows that a positive distance \({\norm{l{-}m}^2{\geq}1}\) can be maintained.
A non-convex  region \(y\in Y\) defined as \(y^2=1\) or games with input by just one player work as well (similar for higher dimensions):
{\def\arraystretch{1.1}%
\begin{sequentdeduction}[array]
  \linfer[diffgameind]
  {\linfer
    {\linfer[qear]
      {\lclose}
      {\lsequent{} {\lexists{y^2=1}{(3x^2 x^3y\geq4x x^3y)}}}
    }
    {\lsequent{} {\lexists{y^2=1}{\subst[(3x^2\D{x}\geq4x\D{x})]{\D{x}}{x^3y}}}}
  }
  {\lsequent{x^3>2x^2-2} {\dbox{\pdiffgame{\D{x}=x^3y}{y^2=1}{}}{\,x^3>2x^2-2}}}
\end{sequentdeduction}
}%
To fit to the simple well-definedness condition (\rref{lem:well-defined}), the differential equation\\ \(\D{x}=\max(\min(x^3y,k),-k)\) could be used instead, which proves for all bounds $k\geq0$.
Alternatively, global bounding \(\D{x}=x^3y/(1+\sqrt{(x^3y)^2})\), which does not change the game outcome \cite{DBLP:journals/siamco/GruneS11}, proves, too.
These simple proofs entail for all nonanticipative strategies the existence of measurable control functions to win the game.

The last example proof and the following \irref{diffgamerefine} refinement proof
\begin{sequentdeduction}[array]
\linfer[diffgamerefine]
  {\linfer[qear]
    {\lclose}
    {\lsequent{} {\lforall{u^2=1}{\lexists{0{\leq}y{\leq}1}{(2x^3y-x^3=x^3u)}}}}
  }
  {\lsequent{\dbox{\pdiffgame{\D{x}=x^3y}{y^2=1}{}}{x^3>2x^2-2}} {\dbox{\pdiffgame{\D{x}=2x^3y-x^3}{0{\leq}y{\leq}1}{}}{\,x^3>2x^2-2}}}
\end{sequentdeduction}
combine with a cut to a proof of:
\[
  {{x^3>2x^2-2} \limply {\dbox{\pdiffgame{\D{x}=2x^3y-x^3}{0\leq y\leq1}{}}{\,x^3>2x^2-2}}}
\]
Another example for the use of proof rule \irref{diffgamerefine} is in the proof of \rref{lem:freezing}.
With the above use of the well-definedness condition, the spiral game proves using rule \irref{diffgamefin}:
\begin{sequentdeduction}[array]
\linfer[diffgamefin]
  {\linfer
    {\linfer[qear]
      {\lclose}
      {\lsequent{} {\lexists{\varepsilon{>}0}{\lforall{x}{\lforall{u}{\lexists{{-}1{\leq}z{\leq}1}{\lforall{{-}2{\leq}y{\leq}2}
      {\big( x^2+u^2\geq1 \limply -2x(zx-yu)-2u(zu+yx) \geq \varepsilon \big)}
      }}}}}}
    }
    {\lsequent{} {\lexists{\varepsilon{>}0}{\lforall{x}{\lforall{u}{\lexists{{-}1{\leq}z{\leq}1}{\lforall{{-}2{\leq}y{\leq}2}
      {\big( 1-x^2-u^2\leq0 \limply \subst[(-2x\D{x}-2u\D{u})]{\D{x}}{zx-yu}\subst[\,]{\D{u}}{zu+yx} \geq \varepsilon \big)}
    }}}}}}
  }
  {\lsequent{} {\ddiamond{\pdiffgame{\D{x}=zx-yu\syssep\D{u}=zu+yx}{{-}2{\leq}y{\leq}2}{{-}1{\leq}z{\leq}1}}{\,1-x^2-u^2\geq0}}}
\end{sequentdeduction}

\begin{example}[Zeppelin]
\def\arraystretch{1.2}%
First continue the differential game of \rref{ex:Zeppelin} in isolation, focusing on an obstacle \(o=(0,0)\) at the origin with radius $c=1$ for simplicity.
If the Zeppelin propeller outpowers the wind and turbulence (\(p-r\geq\norm{v},r\geq0\)), the Zeppelin easily wins from any safe position, as proved by arithmetic simplification:
\begin{sequentdeduction}[array]
  \linfer[diffgameind]
  {\linfer
    {\linfer[qear]
      {\lclose}
      {\lsequent{} {\lexists{y\in B}{\lforall{z\in B}{(2x_1(v_1+py_1+rz_1) + 2x_2(v_2+py_2+rz_2)\geq0)}}}}
    }
    {\lsequent{} {\lexists{y\in B}{\lforall{z\in B}{\subst[(2x\stimes\D{x}\geq0)]{\D{x}}{v+py+rz}}}}}
  }
  {\lsequent{\norm{x}^2\geq c^2} {\dbox{\pdiffgame{\D{x}=v+py+rz}{y\in B}{z\in B}}{\,\norm{x}^2\geq c^2}}}
\end{sequentdeduction}
For a mediocre propeller (with \(0\leq r<p \leq \norm{v} + r\)), the differential game is significantly more challenging, but the Zeppelin still wins when it starts at sufficient distance to the obstacle.
It may take up to duration $\frac{c}{p-r}$ to progress by a distance of $c$ in the direction orthogonal to $v$, during which the wind field displaces the Zeppelin by $\frac{c}{p-r}v$.
With focal point \(q\mdefeq\frac{-c}{p-r}v\), which has orthogonal complement \(\ortho{q}=(-q_2,q_1)\), choose condition $C$ as the regions outside the tangents through $q$ to the circle of radius $c$ (see \rref{fig:Zeppelin-safe}):%
\begin{align*}
C \mdefequiv\,&
c q\stimes(x-q)  \pm  \sqrt{\norm{q}^2-c^2} \ortho{q}\stimes x \geq0
\\
\mequiv\, & c q\stimes(x-q)  +  \sqrt{\norm{q}^2-c^2} \ortho{q}\stimes x \geq0 
~\lor~ c q\stimes(x-q)  -  \sqrt{\norm{q}^2-c^2} \ortho{q}\stimes x \geq0
\\
\mequiv\, &
c (q_1(x_1-q_1)+q_2(x_2-q_2)) + \sqrt{q_1^2+q_2^2-c^2} (q_1 x_2 - q_2 x_1)\geq0\\
\lor\,& c (q_1(x_1-q_1)+q_2(x_2-q_2)) - \sqrt{q_1^2+q_2^2-c^2} (q_1 x_2 - q_2 x_1)\geq0
\end{align*}
\begin{figure}[tb]
  \includegraphics[height=6cm]{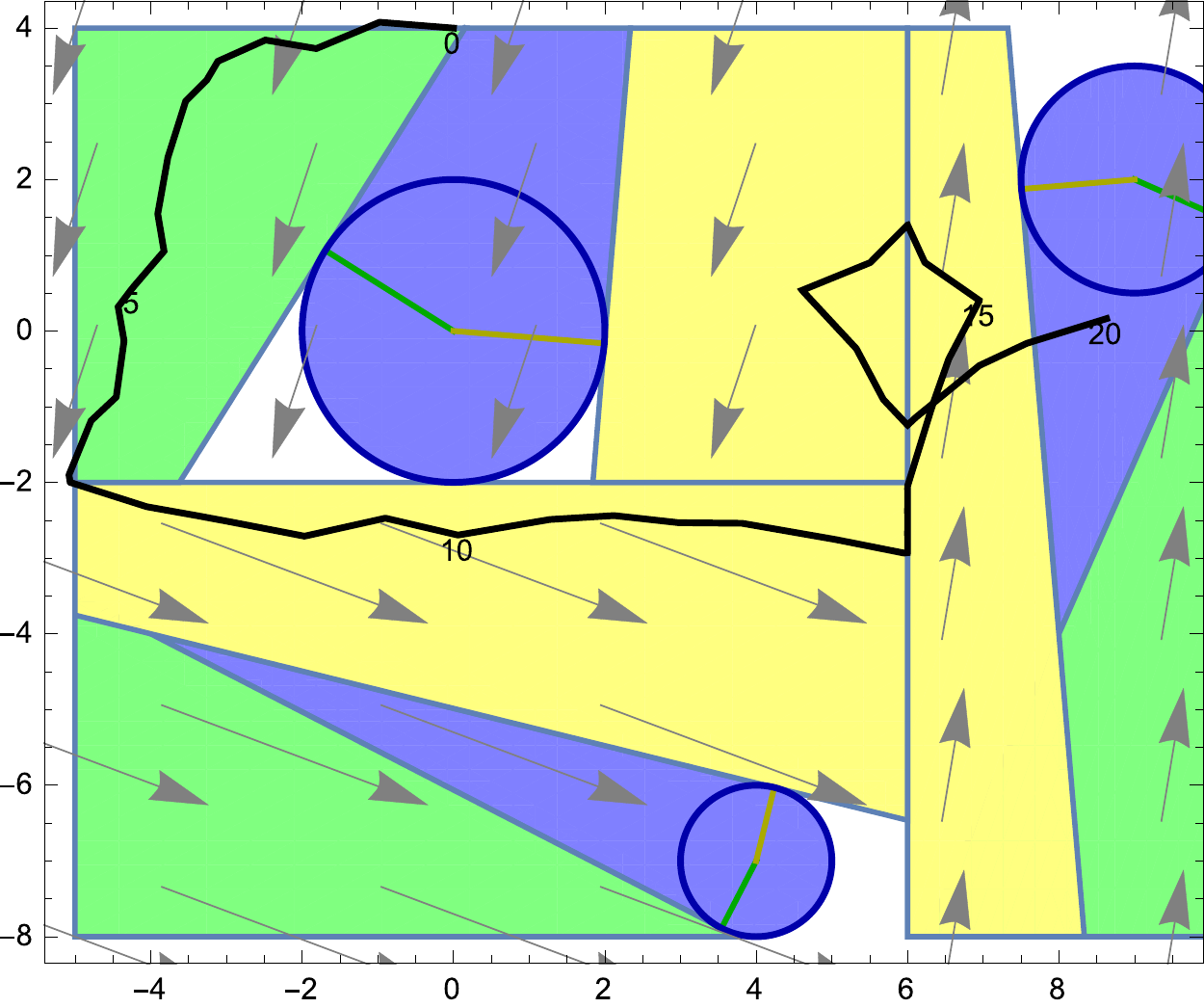}%
  \hfill%
  \includegraphics[height=6cm]{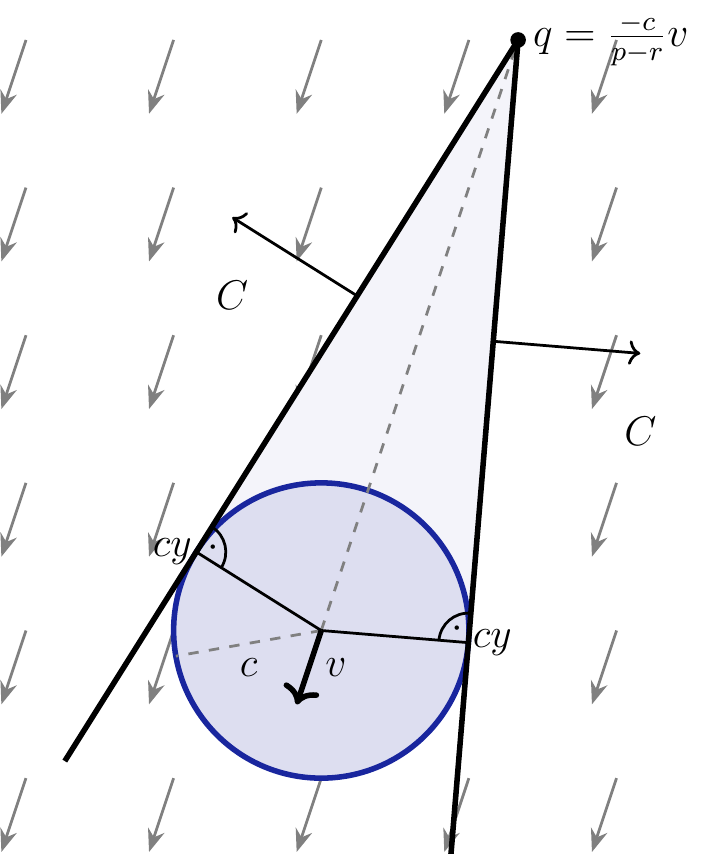}
  \caption{Local safety zones for Zeppelin obstacle parcours with same response trajectory (left) and illustration of construction of safety condition $C$ and witness for $y$ (right)}
  \label{fig:Zeppelin-safe}
\end{figure}%
Both disjuncts of $C$ prove to be differential game invariants:
\begin{sequentdeduction}[array]
\linfer[diffgameind]
  {\linfer
    {\linfer[qear]
      {\lclose}
      {\lsequent{} {\lexists{y\in B}{\lforall{z\in B}{({c q\stimes(v+py+rz)  \pm  \sqrt{\norm{q}^2-c^2} \ortho{q}\stimes (v+py+rz) \geq0
})}}}}
    }
    {\lsequent{} {\lexists{y\in B}{\lforall{z\in B}{\subst[(
    c q\stimes\D{x}  \pm  \sqrt{\norm{q}^2-c^2} \ortho{q}\stimes \D{x} \geq0
)]{\D{x}}{v+py+rz}}}}}
  }
  {\lsequent{C} {\dbox{\pdiffgame{\D{x}=v+py+rz}{y\in B}{z\in B}}{\,c q\stimes(x-q)  \pm  \sqrt{\norm{q}^2-c^2} \ortho{q}\stimes x \geq0
}}}
\end{sequentdeduction}
using the (two) tangent points of the tangent through $q$ to the circle of radius $c$ as witnesses for $y$, after scaling the tangent points by $\frac{1}{c}$ to be in $B$:
\[
y\mdefeq\frac{c}{\norm{q}^2} (c q \pm \sqrt{\norm{q}^2-c^2} \ortho{q}) / c
=\frac{1}{\norm{q}^2} (c q \pm \sqrt{\norm{q}^2-c^2} \ortho{q})
\]
The tangent through the (unscaled) tangent point $cy$ is
\(c q\stimes(x-q) \pm \sqrt{\norm{q}^2-c^2} \ortho{q}\stimes x=0\), which, indeed, goes through $cy$ and $q$. And $cy$ is on the circle of radius $c$, as $\norm{cy}=c$.

Further, $C$ itself is an invariant by monotonicity (rule \irref{M} from \rref{app:dGL-HG-axiomatization}) after splitting into both cases (by rule \irref{orl}).
By monotonicity (\irref{M}), the proof continues to prove
\[
{c q\stimes(x-q)  \pm  \sqrt{\norm{q}^2-c^2} \ortho{q}\stimes x \geq0}
\limply
\dbox{\pdiffgame{\D{x}=v+py+rz}{y\in B}{z\in B}}{\,\norm{x}^2\geq c^2}
\]
essentially using the Cauchy-Schwarz inequality for arithmetic.
The case for \(o\neq(0,0)\) results from the above proof by replacing $x$ with $x-o$ everywhere, including in $C$.
This proves the \dGL formula \rref{eq:Zeppelin} in \rref{ex:Zeppelin} with
\begin{equation}
  c q\stimes(x-o-q)  \pm  \sqrt{\norm{q}^2-c^2} \ortho{q}\stimes (x-o) \geq0
  \land c>0 \land 0\leq r<p \leq \norm{v} + r
  \label{eq:Zeppelin-invariant}
\end{equation}
as a loop invariant if \rref{eq:Zeppelin-invariant} is  assumed initially.
Otherwise, iteration (by axiom \irref{iterateb} from \rref{app:dGL-HG-axiomatization}) shows the postcondition holds after 0 iterations of the loop anyway and that \rref{eq:Zeppelin-invariant} is an invariant after the first loop iteration.
The proof would be similar without the assumption \(p \leq \norm{v} + r\) when performing a corresponding case distinction whether the wind field is outpowered or whether the propeller is mediocre.
Observe that \rref{fig:Zeppelin-safe} illustrates that the response from \rref{fig:Zeppelin} is outside the respective (green and yellow) safety zones for the obstacles.
It ends squarely within a provably unsafe zone (blue) and would, thus, continue toward a collision under a best response by the opponent.
\end{example}

\section{Soundness Proof}

The differential game invariant proof rule \irref{diffgameind} is a natural generalization of differential invariants \cite{DBLP:journals/logcom/Platzer10,DBLP:conf/cav/PlatzerC08,DBLP:journals/lmcs/Platzer12} for differential equations, also with disturbance, to differential games.
Its quantifier pattern directly corresponds to the information pattern of the differential game.

The only difficulty is its soundness proof.
The premise of \irref{diffgameind} shows that, at every point in space $x$, a local control action $y\in Y$ exists for Demon that will, for all local control actions $z\in Z$ that Angel could respond with, make the Lie-derivative \(\dder[\D{x}][f(x,y,z)]{F}\) true.
In conventional wisdom, this makes the truth-value of $F$ never change.
However, it is not particularly obvious whether those various local control actions for each $x$ at various points of the state space can be reassembled into a single coherent control signal that is measurable as a function of time and passes muster on leading the whole differential game to a successful response for each nonanticipative strategy of Angel.
Certainly, the original quantification over nonanticipative strategies and measurable control signals from the semantics is hard to capture in useful first-order proof rules.
It also took decades to justify Isaacs' equations for differential games, however innocent they may look.
Fortunately, unlike Isaacs \citeyear{Isaacs:DiffGames}, differential game invariants already have most required advances of mathematics already at their disposal.

This section proves \emph{soundness} of the differential game proof rules:
If their premise is valid, then so is their conclusion.
Proving soundness assumes the \emph{premise} (above rule bar) to be valid and considers a state $\iget[state]{\I}\in\linterpretations{\Sigma}{V}$ in which the \emph{antecedent} (left of $\limply$) of the \emph{conclusion} (below bar) is true to show that its \emph{succedent} (right of $\limply$) is true in $\iget[state]{\I}$, too.

The remainder of this section proves soundness, first of differential game refinements (\rref{sec:diffgamerefine}) then of differential game invariants (\rref{sec:differential-game-value}--\rref{sec:diffgameind}) and differential game variants (\rref{sec:diffgamefin}).
The soundness proof for \irref{diffgameind} proves the arithmetized postcondition to be a viscosity subsolution (\rref{sec:viscosity-PDE}) of the lower Isaacs partial differential equation that characterizes (\rref{sec:differential-game-Isaacs}) the lower value whose sign characterizes (\rref{sec:differential-game-value}) winning regions (\rref{sec:dGL}) independently of premature stopping (\rref{sec:frozen-games}).

\subsection{Differential Game Refinement} \label{sec:diffgamerefine}

Rule \irref{diffgamerefine} can be proved sound using the notions introduced so far.
The key is to exploit the Borel measurability and existence of semialgebraic Skolem functions to extract measurable and nonanticipative correspondence functions from its premise.
Semialgebraic functions are Borel measurable and, thus, suitable for composition (\rref{rem:preimage}).

\begin{lemma} \label{lem:semialgebraic->Borel}
  Semialgebraic functions are Borel measurable.
\end{lemma}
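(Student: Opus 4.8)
The plan is to prove that every semialgebraic function is Borel measurable by reducing measurability to the behavior of preimages on a generating family of the Borel $\sigma$-algebra, and then exploiting the definability of semialgebraic sets. Recall that a function $f\colon X\to\reals^m$ between semialgebraic sets is measurable iff the preimage $\ipreimage{f}(A)$ of every Borel set $A$ is Borel. Since the Borel $\sigma$-algebra on $\reals^m$ is generated by the open boxes (equivalently, by the half-spaces $\{y : y_i < r\}$ for rational $r$), and since preimages commute with complements, countable intersections, and countable unions by \rref{rem:preimage} (\rref{case:preimage-subseteq}--\rref{case:preimage-compose}), it suffices to show that $\ipreimage{f}(B)$ is Borel for every such generating set $B$. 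The collection of sets whose preimage is Borel forms a $\sigma$-algebra, so once the generators are handled, the full claim follows.

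First I would observe that the graph $\Gamma_f = \{(x,y)\in X\times\reals^m : y = f(x)\}$ is semialgebraic by definition of a semialgebraic function. For a semialgebraic (in particular, for an open box or half-space) set $B\subseteq\reals^m$, the preimage $\ipreimage{f}(B)$ can be written as the projection onto $X$ of the semialgebraic set $\Gamma_f \cap (X\times B)$, namely $\{x : \mexists{y}{((x,y)\in\Gamma_f \land y\in B)}\}$. By the Tarski--Seidenberg projection theorem \cite{tarski_decisionalgebra51}, the class of semialgebraic sets is closed under projection (equivalently, under existential quantification over real variables), so $\ipreimage{f}(B)$ is itself semialgebraic.

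The final step is to note that every semialgebraic set is Borel. A semialgebraic set is a finite Boolean combination of sets of the form $\{x : p(x) > 0\}$ and $\{x : p(x) = 0\}$ for polynomials $p$; each such basic set is open or closed, hence Borel, and finite unions, intersections, and complements of Borel sets are Borel. Therefore $\ipreimage{f}(B)$ is Borel for every generating set $B$, and by the $\sigma$-algebra argument above, $\ipreimage{f}(A)$ is Borel for every Borel $A$, establishing that $f$ is Borel measurable.

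The main obstacle, and the only genuinely nontrivial ingredient, is the appeal to Tarski--Seidenberg to guarantee that projecting out the $y$-coordinate of the semialgebraic set $\Gamma_f \cap (X\times B)$ keeps it semialgebraic; everything else is bookkeeping with the preimage identities of \rref{rem:preimage} and the elementary fact that semialgebraic sets are Borel. One should take mild care that the generating family chosen (open boxes or rational half-spaces) really is countable so that the closure of the preimage class under \emph{countable} operations is what is invoked, but this is standard and poses no real difficulty.
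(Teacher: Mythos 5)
Your proof is correct and follows essentially the same route as the paper: establish that preimages of semialgebraic sets under a semialgebraic function are semialgebraic (hence Borel), and then climb the Borel $\sigma$-algebra using the preimage identities of \rref{rem:preimage}. The only difference is that you derive the key projection fact inline from the semialgebraic graph via Tarski--Seidenberg, where the paper simply cites it from Basu--Pollack--Roy.
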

\begin{proofatend}
Let $f$ be a semialgebraic function.
The proof of its Borel measurability is by induction along the Borel hierarchy using \rref{rem:preimage}.

\begin{inparaenum}[\noindent\itshape 1\upshape)]
\item The preimage \(\ipreimage{f}(A)\) of any semialgebraic set $A$ under a semialgebraic function $f$ is semialgebraic \cite[Proposition 2.83]{BasuPR06}, thus, Borel measurable.

\item By \rref{rem:preimage}, the preimage \(\ipreimage{f}(\scomplement{A})=\scomplement{(\ipreimage{f}(A))}\) of the complement $\scomplement{A}$ of any set $A$ whose preimage \(\ipreimage{f}(A)\) is Borel is a complement of a Borel set and, thus, Borel.

\item By \rref{rem:preimage}, the preimage \(\ipreimage{f}(\capfold_{i\in I} A_i)=\capfold_{i\in I} \ipreimage{f}(A_i)\) of an intersection of any family of sets $A_i$ whose preimages \(\ipreimage{f}(A_i)\) are Borel, is an intersection of Borel sets and, thus, Borel.
\qedhere
\end{inparaenum}
\end{proofatend}

The proof of soundness of rule \irref{diffgamerefine} is based on composing the winning strategy from the antecedent with a semialgebraic Skolem function extracted as a witness from the local semialgebraic correspondence of the variables from the premise to obtain a winning strategy for the succedent.
This construction can be shown to preserve measurability of the resulting controls by \rref{lem:semialgebraic->Borel} and to lead to a subsumption of the differential games.
Since the semialgebraic Skolem function for $y$ from the premise is Borel, its composition can be used to show that Angel already had all control choices in the antecedent's differential game that she has in the succedent's differential game.

\begin{theorem}[Differential game refinement]
\label{thm:diffgamerefine}
Differential game refinements are sound (proof rule \irref{diffgamerefine}).
\end{theorem}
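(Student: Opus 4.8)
The plan is to prove soundness of \irref{diffgamerefine} by showing that Demon's winning strategy for the antecedent differential game \(\pdiffgame{\D{x}=g(x,u,v)}{u\in U}{v\in V}\) can be transported to a winning strategy for the succedent differential game \(\pdiffgame{\D{x}=f(x,y,z)}{y\in Y}{z\in Z}\). Assume the premise is valid, i.e.\ \(\entails\lforall{u\in U}{\lexists{y\in Y}{\lforall{z\in Z}{\lexists{v\in V}{\lforall{x}{(f(x,y,z)=g(x,u,v))}}}}}\), and assume \(\iget[state]{\I}\in\dstrategyfor[\pdiffgame{\D{x}=g(x,u,v)}{u\in U}{v\in V}]{\imodel{\I}{F}}\). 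I must show \(\iget[state]{\I}\in\dstrategyfor[\pdiffgame{\D{x}=f(x,y,z)}{y\in Y}{z\in Z}]{\imodel{\I}{F}}\). By \rref{def:DHG-semantics}, the latter requires that for every time horizon \(T\geq0\) and every nonanticipative strategy \(\beta\in\stratA{\eta}\) for Angel on \(Z\), there is a Demon control \(y\in\controlD{\eta}\) such that for all stopping times \(0\leq\zeta\leq T\) the response \(\response[f]{\zeta}{\iget[state]{\I}}{y}{\beta(y)}\in\imodel{\I}{F}\).

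The key construction uses semialgebraic Skolem functions extracted from the premise. Since the premise is a valid first-order real-arithmetic formula, Skolemizing the existential quantifiers \(\lexists{y\in Y}{}\) and \(\lexists{v\in V}{}\) yields semialgebraic functions \(Y_{\!s}:U\to Y\) and \(V_{\!s}:U\times Z\to V\) satisfying, for all \(u\in U\) and \(z\in Z\) and all \(x\), the identity \(f(x,Y_{\!s}(u),z)=g(x,u,V_{\!s}(u,z))\); such semialgebraic Skolem functions exist by definability of choice in real-closed fields. By \rref{lem:semialgebraic->Borel} both \(Y_{\!s}\) and \(V_{\!s}\) are Borel measurable, which is exactly what makes them safe for composition with measurable controls (\rref{rem:preimage}, \rref{case:preimage-compose}). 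First I would verify that given any measurable Demon control \(u\in\controlD{\eta}\) on \(U\), the composition \(s\mapsto Y_{\!s}(u(s))\) is a measurable control on \(Y\), and symmetrically that given any measurable Angel control \(z\in\controlA{\eta}\) on \(Z\), the map \(s\mapsto V_{\!s}(u(s),z(s))\) is a measurable control on \(V\).

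Next I would translate Angel's nonanticipative strategy. Given \(\beta\in\stratA{\eta}\) on \(Z\) for the succedent game, I define an induced strategy \(\tilde\beta\in\stratA{\eta}\) on \(V\) for the antecedent game by \(\tilde\beta(u)(s)\mdefeq V_{\!s}(u(s),\beta(\hat y^u)(s))\), where \(\hat y^u(s)\mdefeq Y_{\!s}(u(s))\) converts a \(U\)-control to the corresponding \(Y\)-control. The crucial check is that \(\tilde\beta\) is again nonanticipative: if \(u=\hat u\) a.e.\ on \([\eta,s]\), then \(\hat y^u=\hat y^{\hat u}\) a.e.\ on \([\eta,s]\) by pointwise composition, hence \(\beta(\hat y^u)=\beta(\hat y^{\hat u})\) a.e.\ on \([\eta,s]\) by nonanticipativity of \(\beta\), and therefore \(\tilde\beta(u)=\tilde\beta(\hat u)\) a.e.\ on \([\eta,s]\). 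Applying the antecedent hypothesis to \(T\) and \(\tilde\beta\) yields a Demon control \(u\in\controlD{\eta}\) on \(U\) with \(\response[g]{\zeta}{\iget[state]{\I}}{u}{\tilde\beta(u)}\in\imodel{\I}{F}\) for all \(0\leq\zeta\leq T\). I then set the witness Demon control for the succedent game to be \(y(s)\mdefeq Y_{\!s}(u(s))=\hat y^u(s)\).

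The final step is to identify the two responses: for \(z\mdefeq\beta(y)=\beta(\hat y^u)\) on \(Z\), the Skolem identity gives \(f(x,y(s),z(s))=g(x,u(s),V_{\!s}(u(s),z(s)))=g(x,u(s),\tilde\beta(u)(s))\) for every \(x\) and a.e.\ \(s\), so the two Carath\'eodory initial value problems \rref{eq:diffgame} have identical right-hand sides a.e.; by uniqueness in \rref{lem:response} (and its a.e.\ robustness clause) the responses coincide, \(\response[f]{\zeta}{\iget[state]{\I}}{y}{\beta(y)}=\response[g]{\zeta}{\iget[state]{\I}}{u}{\tilde\beta(u)}\), which therefore lies in \(\imodel{\I}{F}\) for all \(0\leq\zeta\leq T\). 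Since \(T\) and \(\beta\) were arbitrary, \(\iget[state]{\I}\) is in Demon's winning region for the succedent game, establishing soundness. I expect the main obstacle to be the measurability and nonanticipativity bookkeeping: one must be careful that Skolemization genuinely yields semialgebraic (not merely measurable) witnesses so that \rref{lem:semialgebraic->Borel} applies, that the composed controls land in the correct measurable-control spaces \(\controlD{\eta},\controlA{\eta}\), and that the dependence of \(\tilde\beta\) on \(\beta\) preserves the causal/nonanticipative structure without sneaking in any forbidden dependence on future values.
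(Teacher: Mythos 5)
Your proposal is correct and follows essentially the same route as the paper's proof: extract semialgebraic (hence Borel measurable) Skolem functions for $y$ and $v$ via definable choice, use them to transport Angel's nonanticipative strategy $\beta$ on $Z$ to a nonanticipative strategy on $V$ (your $\tilde\beta$ is the paper's $\gamma$), pull back Demon's control $u$, and identify the two responses by uniqueness from \rref{lem:response}. The measurability and nonanticipativity bookkeeping you flag is exactly what the paper's proof carries out.
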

\begin{proof}
\renewcommand*{\sol}[1][]{x_{#1}}%
The formulas \(u\in U, v\in V, y\in Y, z\in Z\) only have the indicated free variables, so write \m{\imodel{\I}{Z}} for the set of values for $z$ that satisfy \(z\in Z\), etc.
The premise implies
\[
\entails \lforall{u\in U}{\lexists{y\in Y}{\lforall{z\in Z}{\lexists{v\in V}{\lforall{x}{\big(f(x,y,z)=g(x,u,v)\big)}}}}}
\]
Since this formula and its parts describe semialgebraic sets and real-closed fields have definable Skolem functions by the definable choice theorem \cite[Corollary 3.3.26]{Marker_2002}, this induces a \emph{semialgebraic}, so by \rref{lem:semialgebraic->Borel} also Borel measurable, function \(\bar{y}:\imodel{\I}{U}\to\imodel{\I}{Y}\) such that\footnote{%
Substitution of a semialgebraic function $\bar{y}(u)$ for $y$ into a formula \(F(u,y)\) of real arithmetic is definable, e.g., by
\(\lforall{y}{(y=\bar{y}(u)\limply F(u,y))}\).
The subsequent proof only needs $\bar{y}$ to be measurable, which the measurable selection theorem \cite[\S6 Theorem 6.13]{RepovsSemenov}  guarantees. 
Its inconvenience is that $\bar{y}$ cannot be syntactically inserted into the logical formulas but their mathematical equivalents would be used.
}
\begin{equation}
\entails {\lforall{z\in Z}{\lexists{v\in V}{\lforall{x}{\big(f(x,\bar{y}(u),z)=g(x,u,v)\big)}}}}
\label{eq:DGR-projection-function}
\end{equation}
The validity \rref{eq:DGR-projection-function} similarly induces a semialgebraic, thus, by \rref{lem:semialgebraic->Borel} Borel measurable function \(\bar{v}:\imodel{\I}{U}\times\imodel{\I}{Z}\to\imodel{\I}{V}\) such that
\begin{equation}
\entails {\lforall{x}{\big(f(x,\bar{y}(u),z)=g(x,u,\bar{v}(u,z))\big)}}
\label{eq:DGR-projection-function-v}
\end{equation}
To show validity of the conclusion, consider any state $\iget[state]{\I}$ in which its antecedent is true and show that its succedent is true.
That is, assume
\(\iget[state]{\I} \in \dstrategyfor[\pdiffgame{\D{x}=g(x,u,v)}{u\in U}{v\in V}]{\imodel{\I}{F}}\), i.e.\
\begin{equation}
\mforall{T{\geq}0}{\mforall{\gamma\in\stratA[U][V]{\eta}}{\mexists{u\in\controlD[U]{\eta}}{\mforall{0{\leq}\zeta{\leq}T}{}}}} \response[g]{\zeta}{\iget[state]{\I}}{u}{\gamma(u)} \in \imodel{\I}{F}
\label{eq:DGR-antecedent}
\end{equation}
It remains to be shown that
\(\iget[state]{\I} \in \dstrategyfor[\pdiffgame{\D{x}=f(x,y,z)}{y\in Y}{z\in Z}]{\imodel{\I}{F}}\),
which is
\begin{equation}
\mforall{T{\geq}0}{\mforall{\beta\in\stratA{\eta}}{\mexists{y\in\controlD{\eta}}{\mforall{0{\leq}\zeta{\leq}T}{}}}} \response[f]{\zeta}{\iget[state]{\I}}{y}{\beta(y)} \in \imodel{\I}{F}
\label{eq:DGR-succedent}
\end{equation}
Consider any \(T\geq0\) and \(\beta\in\stratA{\eta}\).
From \rref{eq:DGR-antecedent}, obtain some \(u\in\controlD[U]{\eta}\) corresponding to
\[
\gamma(u)(s) \mdefeq \bar{v}\big(u(s),\beta(\bar{y}\compose u)(s)\big)
\]
which defines a function \m{\gamma:\controlD[U]{\eta}\to\controlA[V]{\eta}},
because the composition \(\bar{y}\compose u\) of \emph{Borel} measurable function $\bar{y}$ with measurable $u$ is measurable (\rref{rem:preimage}), which makes \(\beta(\bar{y}\compose u)\) measurable and so is its composition with the Borel measurable $\bar{v}$ since $u$ was measurable to begin with.
The function $\gamma$ is also nonanticipative, so is a strategy \m{\gamma\in\stratA[U][V]{\eta}}, because
for all \(\eta\leq s\leq T\) and \(u,\hat{u}\in\controlD[U]{\eta}\):
\begin{align*}
&\text{if}~u(\tau)=\hat{u}(\tau) &&\text{for a.e.}~\eta\leq\tau\leq s\\
&\text{so}~(\bar{y}\compose u)(\tau)=(\bar{y}\compose\hat{u})(\tau) &&\text{for a.e.}~\eta\leq\tau\leq s\\
&\text{then}~\beta(\bar{y}\compose u)(\tau)=\beta(\bar{y}\compose\hat{u})(\tau) &&\text{for a.e.}~\eta\leq\tau\leq s\\
&\text{hence}~\gamma(u)(\tau)=\gamma(\hat{u})(\tau) &&\text{for a.e.}~\eta\leq\tau\leq s
\end{align*}
because \m{\beta\in\stratA{\eta}} and the compositions with Borel measurable functions $\bar{y}$ and $\bar{v}$ preserve equality a.e.\footnote{\label{foot:almost-everywhere-composition}%
If $f$ is function and \(g(\tau)=\hat{g}(\tau)\) for a.e.\ \(\tau\), then \(f(g(\tau))=f(\hat{g}(\tau))\) for a.e.\ \(\tau\), because the composition $f\compose g$ satisfies that
\(\{\tau \with f(g(\tau))\neq f(\hat{g}(\tau))\} 
\subseteq \{\tau \with g(\tau)\neq \hat{g}(\tau)\}\)
is contained in a set of measure 0.
}
Define the control $y$ for strategy $\beta$ by \(y(s) \mdefeq (\bar{y}\compose u)(s) = \bar{y}(u(s))\).
The corresponding responses \(\sol[f]\) and \(\sol[g]\) of the respective differential games satisfy
\begin{align*}
\hspace*{-0.4cm}
  \D{\sol[f]}(s) &{=} f(\sol[f](s), y(s), \beta(y)(s)) {=} f\big(\sol[f](s), (\bar{y}\compose u)(s), \beta(\bar{y}\compose u)(s)\big)\\
\hspace*{-0.2cm}
  \D{\sol[g]}(s) &{=} g(\sol[g](s), u(s), \gamma(u)(s)) {=} g\big(\sol[g](s), u(s), \bar{v}\big(u(s),\beta(\bar{y}\compose u)(s)\big)\big)
\end{align*}
which \rref{eq:DGR-projection-function-v} equates as follows:
\[
\hspace*{-0.2cm}
f\big(\sol[f](s), \bar{y}(u(s)), \beta(\bar{y}\compose u)(s)\big)
= g\big(\sol[f](s), u(s), \bar{v}\big(u(s),\beta(\bar{y}\compose u)(s)\big)\big)
\]
so that the response \(\sol[f]\) solves the same differential equation that \(\sol[g]\) does, which shows \(\sol[f]=\sol[g]\) by uniqueness (\rref{lem:response}).
Consequently, the antecedent \rref{eq:DGR-antecedent} implies \rref{eq:DGR-succedent}, which shows the conclusion of \irref{diffgamerefine} to be valid since the initial state $\iget[state]{\I}$ was arbitrary.
\end{proof}

\subsection{Values of Differential Games} \label{sec:differential-game-value}

Differential games have a unique payoff \rref{eq:payoff} for each pair of controls \(y\in\controlD{\eta}, z\in\controlA{\eta}\) and initial data $\eta,\xi$ by \rref{lem:response}.
The payoff may change when the players change their control, though.
How the players best change their controls depends on their respective opponent's control, and vice versa.
Still there is a sense in which there is an optimal payoff if both players rationally optimize their respective control.
Different choices for the informational advantage give rise to two (generally different) ways of assigning optimal payoff to a differential game: the lower and the upper value, whose signs can ultimately be related to the existence of corresponding winning strategies.

Using the response \(x(s) = \response[f]{s}{\iget[state]{\I}}{y}{\beta(y)}\) of differential game \rref{eq:diffgame} for initial condition \(x(\eta)=\iget[state]{\I}\) with time horizon $T$, the \emph{lower value} of differential game \rref{eq:diffgame} with the player for $Z$ minimizing payoff $g(x(T))$ and the player for $Y$ maximizing $g(x(T))$ captures the optimal payoff with nonanticipative strategies \m{\beta\in\stratA{\eta}} for minimizer for $Z$, i.e.\ when the minimizer for $Z$ has the informational advantage to move last \cite{DBLP:journals/tams/ElliottK74,DBLP:journals/indianam/EvansSouganidis84,BardiRP99}.
It is defined as:
\begin{align}
  \hspace*{-2pt}
  V(\eta,\iget[state]{\I}) &= 
  \inf_{\beta\in\stratA{\eta}} \sup_{y\in\controlD{\eta}}  g(\response[f]{T}{\iget[state]{\I}}{y}{\beta(y)})
  \label{eq:lower-value}
  \\
  &= \inf_{\beta\in\stratA{\eta}} \sup_{y\in\controlD{\eta}} V(\eta+\sigma, \response[f]{\eta+\sigma}{\iget[state]{\I}}{y}{\beta(y)})
  \label{eq:V-dynamic-programming}
\intertext{%
where \rref{eq:V-dynamic-programming} is the dynamic programming optimality condition \cite[Thm 3.1]{DBLP:journals/tams/ElliottK74,DBLP:journals/indianam/EvansSouganidis84} for any \(0\leq \eta<\eta+\sigma\leq T\) and $\iget[state]{\I}\in\reals^n$.
With the response \(x(s) = \response[f]{s}{\iget[state]{\I}}{\alpha(z)}{z}\),
the \emph{upper value} of differential game \rref{eq:diffgame} captures the optimal payoff when maximizer for $Y$ moves last and is defined as:}
  \hspace*{-2pt}
  U(\eta,\iget[state]{\I}) &= \sup_{\alpha\in\stratD{\eta}} \inf_{z\in\controlA{\eta}} 
  g(\response[f]{T}{\iget[state]{\I}}{\alpha(z)}{z})
  \label{eq:upper-value}
  \\
  &= \sup_{\alpha\in\stratD{\eta}} \inf_{z\in\controlA{\eta}} U(\eta+\sigma,\response[f]{\eta+\sigma}{\iget[state]{\I}}{\alpha(z)}{z}) \hspace*{-5pt}
  \label{eq:U-dynamic-programming}
\end{align}
for any \(0\leq\eta<\eta+\sigma\leq T\) and $\iget[state]{\I}\in\reals^n$, again with \rref{eq:U-dynamic-programming} being the dynamic programming optimality condition for differential games.
The lower and upper values are bounded and Lipschitz \cite[3.2]{DBLP:journals/tams/ElliottK74,DBLP:journals/indianam/EvansSouganidis84}:
\begin{theorem}[{Continuous values%
}] \label{thm:UV-bounded-Lipschitz}
  For any $T\geq0$, both $V$ and $U$ are bounded and Lipschitz (in $\eta,\xi$).
\end{theorem}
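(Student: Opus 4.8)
The plan is to derive boundedness immediately and then to establish Lipschitz continuity in each of the two arguments separately with uniform constants, after which the triangle inequality yields joint Lipschitz continuity. Boundedness is the easy part: since the terminal payoff $g$ is bounded, say $\abs{g}\leq M$, every payoff $g(\response[f]{T}{\xi}{y}{\beta(y)})$ lies in $[-M,M]$, and since taking $\sup_{y\in\controlD{\eta}}$ followed by $\inf_{\beta\in\stratA{\eta}}$ (respectively $\sup_\alpha\inf_z$ for $U$) of a function valued in $[-M,M]$ stays in $[-M,M]$, we get $\abs{V(\eta,\xi)}\leq M$ and $\abs{U(\eta,\xi)}\leq M$ for every $\eta,\xi$.

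For Lipschitz continuity in $\xi$ at a fixed $\eta$, I would fix a strategy $\beta\in\stratA{\eta}$ and a control $y\in\controlD{\eta}$ and compare the two responses issued from $\xi$ and $\hat\xi$ under the \emph{same} control pair $(y,\beta(y))$. Writing $w(s)$ for the difference of these two responses, $w$ is absolutely continuous with $w(\eta)=\xi-\hat\xi$, and, using that $f$ is uniformly Lipschitz in $x$ with some constant $L$, one obtains $\frac{d}{ds}\norm{w(s)}^2\leq 2L\norm{w(s)}^2$ for almost every $s$. The integral form of Grönwall's inequality, which is valid for absolutely continuous functions and hence tolerates the Carath\'eodory ``almost everywhere'' subtlety, then gives $\norm{w(s)}\leq \norm{\xi-\hat\xi}\,e^{L(s-\eta)}\leq\norm{\xi-\hat\xi}\,e^{LT}$. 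Since $g$ is $L_g$-Lipschitz, the two terminal payoffs differ by at most $L_g e^{LT}\norm{\xi-\hat\xi}$, \emph{uniformly} in the pair $(\beta,y)$. Because both $\sup$ and $\inf$ are nonexpansive (that is, $\abs{\sup_i a_i-\sup_i b_i}\leq\sup_i\abs{a_i-b_i}$, and likewise for $\inf$), this uniform pointwise bound survives taking $\sup_y$ and then $\inf_\beta$, yielding $\abs{V(\eta,\xi)-V(\eta,\hat\xi)}\leq L_g e^{LT}\norm{\xi-\hat\xi}$; the identical argument with $\sup_\alpha\inf_z$ handles $U$.

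For Lipschitz continuity in $\eta$ I would route the estimate through the dynamic programming optimality condition \rref{eq:V-dynamic-programming}. For $\eta<\hat\eta\leq T$ and $\sigma=\hat\eta-\eta$, that condition expresses $V(\eta,\xi)$ as $\inf_\beta\sup_y V(\hat\eta,\response[f]{\hat\eta}{\xi}{y}{\beta(y)})$. Since $f$ is bounded by some $b$, every response satisfies the drift bound $\norm{\response[f]{\hat\eta}{\xi}{y}{\beta(y)}-\xi}\leq b\,\abs{\hat\eta-\eta}$, so by the Lipschitz-in-$\xi$ estimate just obtained, $\abs{V(\hat\eta,\response[f]{\hat\eta}{\xi}{y}{\beta(y)})-V(\hat\eta,\xi)}\leq L_g e^{LT}b\,\abs{\hat\eta-\eta}$ uniformly in $(\beta,y)$. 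Writing the constant $V(\hat\eta,\xi)$ trivially as $\inf_\beta\sup_y V(\hat\eta,\xi)$ and invoking nonexpansiveness of $\inf$ and $\sup$ once more gives $\abs{V(\eta,\xi)-V(\hat\eta,\xi)}\leq L_g e^{LT}b\,\abs{\hat\eta-\eta}$, and symmetrically for $U$ via \rref{eq:U-dynamic-programming}. Combining the two one-variable bounds by the triangle inequality then produces joint Lipschitz continuity in $(\eta,\xi)$.

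The main obstacle is the $\eta$-direction. Because the strategy and control spaces $\stratA{\eta}$ and $\controlD{\eta}$ themselves depend on the initial time, one cannot directly compare strategies for different initial times, and a naive attempt would force one to splice strategies across $[\eta,\hat\eta]$ and $[\hat\eta,T]$ and then verify that the splice remains nonanticipative and measurable. Passing through the dynamic programming principle sidesteps this entirely, reducing time-regularity to the already-established space-regularity together with the elementary drift bound $\norm{x(\hat\eta)-\xi}\leq b\,\abs{\hat\eta-\eta}$ furnished by boundedness of $f$. The only remaining care is the measure-theoretic bookkeeping in the Grönwall step, which the integral form of Grönwall dispatches cleanly for Carath\'eodory solutions.
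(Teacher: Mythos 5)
The paper offers no proof of this theorem to compare against: it is stated with a citation (Elliott--Kalton and Evans--Souganidis, \S3.2), and by the paper's stated convention only uncited results get inline proofs. Judged on its own, your argument is correct and is essentially the standard proof from those references. Boundedness follows because $\sup$ and $\inf$ preserve the uniform bound $|g|\leq M$. The $\xi$-direction is the usual Gr\"onwall comparison of the two responses driven by the \emph{same} pair $(y,\beta(y))$, using the uniform Lipschitz constant $L$ of $f$ in $x$ to get $\|w(s)\|\leq \|\xi-\hat\xi\|e^{L(s-\eta)}\leq \|\xi-\hat\xi\|e^{LT}$, and then nonexpansiveness of $\sup$ and $\inf$ transfers the uniform pointwise estimate $L_g e^{LT}\|\xi-\hat\xi\|$ to the value; note that the constant is uniform in the time variable, which your later step silently needs and does get. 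The $\eta$-direction via the drift bound $\|x(\hat\eta)-\xi\|\leq b\,|\hat\eta-\eta|$ from boundedness of $f$, composed with the space-regularity just proved, is also the standard route.

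The one caveat worth flagging is that your time-direction argument leans on the dynamic programming optimality condition, which the paper likewise only cites. Within the paper's framework that is a legitimate dependency, but the DPP is itself proved by precisely the restriction-and-splicing of nonanticipative strategies across $[\eta,\hat\eta]$ and $[\hat\eta,T]$ that you describe yourself as sidestepping, so the difficulty is relocated into a cited black box rather than eliminated. A self-contained alternative is to compare $V(\eta,\xi)$ and $V(\hat\eta,\xi)$ directly by restricting controls and strategies from $[\eta,T]$ to $[\hat\eta,T]$ and extending in the other direction by (say) constants on the initial segment, checking nonanticipativity of the extension; the drift bound controls the initial segment and the $\xi$-Lipschitz estimate controls the tail, and the constants come out the same. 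Either way the conclusion stands.
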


Lower and upper values are mixed infima/suprema, so it is not clear whether the optima are achievable by any concrete control or a concrete nonanticipative strategy.
The following observation relates signs of values $V$ and $U$ to the existence of strategies and controls for winning their corresponding differential game at time $T$.
Positive values, e.g., are equivalent to winning strategies winning with positive lower bounds.
\begin{lemma}[Signs of value] \label{lem:lower-value}
  Let $T\geq0$.
  \begin{enumerate}
  \item
  \label{case:V>0}%
  \(V(0,\xi)>0\) iff \(\mexists{b{>}0}{\mforall{\beta\in\stratA{\eta}}{\mexists{y\in\controlD{\eta}}{g(\response[f]{T}{\iget[state]{\I}}{y}{\beta(y)})>b}}}\).%
  \item
  \label{case:V<0}%
  \(V(0,\xi)<0\) iff \(\mexists{b{<}0}{\mexists{\beta\in\stratA{\eta}}{\mforall{y\in\controlD{\eta}}{g(\response[f]{T}{\iget[state]{\I}}{y}{\beta(y)})<b}}}\).%
  \item
  \label{case:V>=0}%
  \(V(0,\xi)\geq0\) iff \(\mforall{b{<}0}{\mforall{\beta\in\stratA{\eta}}{\mexists{y\in\controlD{\eta}}{g(\response[f]{T}{\iget[state]{\I}}{y}{\beta(y)})\geq b}}}\).%
  \item
  \label{case:V<=0}%
  \(V(0,\xi)\leq0\) iff \(\mforall{b{>}0}{\mexists{\beta\in\stratA{\eta}}{\mforall{y\in\controlD{\eta}}{g(\response[f]{T}{\iget[state]{\I}}{y}{\beta(y)})\leq b}}}\).%

  \item
  \label{case:V=0}%
  \(V(0,\xi)=0\) iff \(\mforall{b{>}0}{}\)$($
  \({\mforall{\beta\in\stratA{\eta}}{\mexists{y\in\controlD{\eta}}{g(\response[f]{T}{\iget[state]{\I}}{y}{\beta(y)})\geq -b}}}\) and\\
  \phantom{\(V(0,\xi)=0\) iff \(\mforall{b{>}0}{}\)$($ } \
  \({\mexists{\beta\in\stratA{\eta}}{\mforall{y\in\controlD{\eta}}{g(\response[f]{T}{\iget[state]{\I}}{y}{\beta(y)})\leq b}}}\)$)$.%
  {}

  Similar relations hold for the upper value, e.g.:
  \item \(U(0,\xi)>0\) iff \(\mexists{b{>}0}{\mexists{\alpha\in\stratD{\eta}}{\mforall{z\in\controlA{\eta}}{g(\response[f]{T}{\iget[state]{\I}}{\alpha(z)}{z})>b}}}\).%
  \label{case:U>0}
  
  \item \(U(0,\xi)\geq0\) iff \(\mforall{b{<}0}{\mexists{\alpha\in\stratD{\eta}}{\mforall{z\in\controlA{\eta}}{g(\response[f]{T}{\iget[state]{\I}}{\alpha(z)}{z})\geq b}}}\).%
  \label{case:U>=0}

  \end{enumerate}
\end{lemma}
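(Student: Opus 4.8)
The plan is to read every item as an elementary characterization of the \emph{sign} of a nested extremum. For the lower value, \rref{eq:lower-value} gives \(V(0,\iget[state]{\I})=\inf_{\beta\in\stratA{\eta}}\sup_{y\in\controlD{\eta}} g(\response[f]{T}{\iget[state]{\I}}{y}{\beta(y)})\), and for the upper value \(U(0,\iget[state]{\I})=\sup_{\alpha\in\stratD{\eta}}\inf_{z\in\controlA{\eta}} g(\response[f]{T}{\iget[state]{\I}}{\alpha(z)}{z})\). The single fact I invoke throughout is the definition of supremum and infimum over the (nonempty) control and strategy sets: if a supremum exceeds a number \(c\) then some control already exceeds \(c\), and dually if an infimum falls below \(c\) then some control already falls below \(c\); conversely, a single witnessing control pins the extremum on the correct side of \(c\). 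The auxiliary bound \(b\) appearing in each equivalence is exactly the slack that lets me pass between an order relation on a (possibly non-attained) extremum and an honest witnessing control.

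First I would establish the two strict cases \rref{case:V>0} and \rref{case:V<0} directly, since all other lower-value items reduce to them. For \rref{case:V>0}: if \(V(0,\iget[state]{\I})>0\), put \(b\mdefeq V(0,\iget[state]{\I})/2>0\); for every \(\beta\) the inner supremum is \(\geq V(0,\iget[state]{\I})>b\), so some \(y\) gives payoff \(>b\); conversely a uniform \(b>0\) with payoff \(>b\) for every \(\beta\) forces each inner supremum, hence \(V(0,\iget[state]{\I})\), to be \(\geq b>0\). For \rref{case:V<0}: \(V(0,\iget[state]{\I})<0\) means the infimum over \(\beta\) is negative, so some \(\beta\) has inner supremum \(s<0\); choosing \(b\) strictly between \(s\) and \(0\) makes the payoff \(<b<0\) for every \(y\); the converse is immediate because payoff \(<b\) for all \(y\) bounds the inner supremum, hence the outer infimum, by \(b<0\).

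The remaining lower-value items are then purely formal. Case \rref{case:V>=0} is the logical negation of \rref{case:V<0}: negating ``there are \(b<0\) and \(\beta\) with payoff \(<b\) for all \(y\)'' yields ``for all \(b<0\) and all \(\beta\) there is a \(y\) with payoff \(\geq b\)'', and \(V(0,\iget[state]{\I})\geq0\) is the negation of \(V(0,\iget[state]{\I})<0\). Likewise \rref{case:V<=0} is the negation of \rref{case:V>0}. Case \rref{case:V=0} is the conjunction of \rref{case:V>=0} and \rref{case:V<=0}: after renaming \(b\mapsto-b\) in \rref{case:V>=0}, both items quantify over \(b>0\), so their two universal \(b\)-quantifiers merge into the single one displayed in \rref{case:V=0}. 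The upper-value items follow from the same manipulations applied to the \(\sup\)--\(\inf\) expression for \(U\): \rref{case:U>0} mirrors \rref{case:V>0} with \(\inf\) and \(\sup\) (hence the two players and the direction in which a witness is extracted) interchanged, while \rref{case:U>=0} is the \(\sup\geq0\) characterization, proved by extracting, for each \(b<0\), a strategy \(\alpha\) whose inner infimum exceeds \(b\) and hence every one of whose controls gives payoff \(\geq b\).

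No genuine difficulty arises, because the underlying game-theoretic content — existence and uniqueness of responses (\rref{lem:response}) and the boundedness and Lipschitz regularity of the values (\rref{thm:UV-bounded-Lipschitz}) — is already secured. The only point demanding care is that neither the infimum over nonanticipative strategies nor the supremum over measurable controls need be \emph{attained}; this is precisely why every equivalence is phrased with a strict auxiliary bound \(b\). Shrinking \(b\) toward \(0\) (for instance replacing it by \(b/2\)) or inserting a value strictly between a non-attained extremum and \(0\) is the maneuver that converts an inequality about the extremum into a concrete winning control, and it is legitimate exactly because such shrinking preserves the sign of \(b\).
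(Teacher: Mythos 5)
Your proposal is correct and follows essentially the same route as the paper: both prove the strict-sign cases directly from the $\inf$--$\sup$ definition of $V$ using an auxiliary slack $b$ (the paper's $2b$ trick versus your $V/2$ or intermediate value), obtain the weak-sign cases by contraposition, combine them for $V=0$, and dualize for $U$. No gaps.
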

\begin{proofatend}
\rref{case:V>=0} is the contrapositive of \rref{case:V<0}, which proves as follows.
If \m{V(0,\iget[state]{\I})<0}, then \(V(0,\iget[state]{\I})<2b\) for some $b<0$.
Consequently, by definition \rref{eq:lower-value}, \(\mexists{\beta\in\stratA{\eta}}{}\) such that
\[\sup_{y\in\controlD{\eta}}  g(\response[f]{T}{\iget[state]{\I}}{y}{\beta(y)}) < b\].
Hence,
\(\mexists{\beta\in\stratA{\eta}}{\mforall{y\in\controlD{\eta}}{g(\response[f]{T}{\iget[state]{\I}}{y}{\beta(y)}) < b <0}}\).
The converse direction proves accordingly, where $b,\beta$ are witnesses for the inequality
\[V(0,\iget[state]{\I})=\inf_{\beta\in\stratA{\eta}} \sup_{y\in\controlD{\eta}} g(\response[f]{T}{\iget[state]{\I}}{y}{\beta(y)}) \leq b<0\]

\noindent
\rref{case:V<=0} is the contrapositive of \rref{case:V>0}, which proves as follows.
If \(V(0,\iget[state]{\I})>0\), then \(V(0,\iget[state]{\I})>2b\) for some \(b>0\).
Thus, by \rref{eq:lower-value},
\(\mforall{\beta\in\stratA{\eta}}{\sup_{y\in\controlD{\eta}} g(\response[f]{T}{\iget[state]{\I}}{y}{\beta(y)})>2b}\).
Hence,
\(\mforall{\beta\in\stratA{\eta}}{\mexists{y\in\controlD{\eta}}{g(\response[f]{T}{\iget[state]{\I}}{y}{\beta(y)})>b}}\).

\noindent
\rref{case:V=0} combines \rref{case:V>=0} with \rref{case:V<=0}.
Cases~\ref{case:U>0} and~\ref{case:U>=0} are dual.
\end{proofatend}

Contrary to occasional misconceptions in the literature, 
\(V(0,\xi)\geq0\) does \emph{not} imply the existence of a control achieving nonnegative value for each nonanticipative strategy.
As elaborated in its consequence, \rref{case:V=0}, value \(V(0,\xi)=0\), which satisfies $\geq0$ as well, merely implies that controls can get arbitrarily close to payoff 0 without revealing a prediction about its sign.
This is problematic, because it is precisely the sign  that matters for determining whether there really is an actual winning strategy or not.

With significantly more thought, however, there is a way of rescuing the situation for the differential games of \dGL.
The following \rref{lem:V>=0} is a stronger version of \rref{case:V>=0} and shows that the simplicity of \rref{case:V>0} does, indeed, continue to hold for $\geq$ instead of $>$.
The proof is a more complex functional-analytic argument based on the results developed in the remainder of this section using Tychonoff's theorem, the Borel swap, and a continuous dependency result for Carath\'edory solutions that justifies continuous responses of differential games.
This stronger version, \rref{lem:V>=0}, makes it possible to lift differential game invariants to closed sets.
It has been stated in the literature before \cite[Lem.\,8]{DBLP:journals/tac/MitchellBT05} but only without proof or with incorrect proof.

\begin{lemma}[Closed signs of values] \label{lem:V>=0}
  Let $T\geq0$.
  Then
  \(V(0,\iget[state]{\I})\geq0\) iff
  \\
  \({\mforall{\beta\in\stratA{\eta}}{\mexists{y\in\controlD{\eta}}{g(\response[f]{T}{\iget[state]{\I}}{y}{\beta(y)})\geq 0}}}\).%
\end{lemma}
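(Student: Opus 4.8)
The statement is an equivalence, and the direction \(\Leftarrow\) is immediate: if for every \(\beta\in\stratA{\eta}\) there is a \(y\in\controlD{\eta}\) with \(g(\response[f]{T}{\iget[state]{\I}}{y}{\beta(y)})\geq0\), then \(\sup_{y}g(\response[f]{T}{\iget[state]{\I}}{y}{\beta(y)})\geq0\) for every \(\beta\), so taking the infimum over \(\beta\) gives \(V(0,\iget[state]{\I})\geq0\) by \rref{eq:lower-value}; this direction in fact already follows from \rref{case:V>=0}. All the content is in the forward direction, which must upgrade the ``\(\geq b\) for every \(b<0\)'' conclusion of \rref{case:V>=0} to an attained ``\(\geq0\)''.

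So I would assume \(V(0,\iget[state]{\I})\geq0\) and fix an arbitrary \(\beta\in\stratA{\eta}\); by \rref{eq:lower-value} this gives \(\sup_{y\in\controlD{\eta}}g(\response[f]{T}{\iget[state]{\I}}{y}{\beta(y)})\geq0\), hence a sequence \(y_n\in\controlD{\eta}\) with \(g(\response[f]{T}{\iget[state]{\I}}{y_n}{\beta(y_n)})\geq-\tfrac1n\). The goal is to manufacture a single control \(y\) attaining \(\geq0\), so the crux is an attainment-of-the-supremum argument for the fixed, possibly discontinuous, strategy \(\beta\). The plan is to extract limits of these near-optimal plays and show the limit is itself an admissible play of value \(\geq0\).

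The first step exploits that \(f\) is bounded (\rref{def:diffgame}): each response \(x_n\mdefeq\response[f]{\cdot}{\iget[state]{\I}}{y_n}{\beta(y_n)}\) is then uniformly Lipschitz in time with the common bound of \(f\) and starts at \(\iget[state]{\I}\), so \(\{x_n\}\) is equi-Lipschitz and uniformly bounded on \([\eta,T]\). By Arzel\`a--Ascoli a subsequence converges uniformly to some Lipschitz \(x\), and since \(g\) is continuous and each \(g(x_n(T))\geq-\tfrac1n\), the limit satisfies \(g(x(T))\geq0\); this is exactly where closedness of the payoff condition enters. What remains, and is the heart of the matter, is to realise the limiting trajectory \(x\) as a genuine response \(\response[f]{\cdot}{\iget[state]{\I}}{y}{\beta(y)}\) of the same game under the same strategy \(\beta\).

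For this I would pass to the limit on the controls, not just the trajectories, using compactness of the control space: viewing controls into the compact sets \(Y,Z\) through their primitives (or as Young measures) gives, via Tychonoff together with Arzel\`a--Ascoli on the integrated controls, a compact topology in which a subsequence of the pairs \((y_n,\beta(y_n))\) converges, whereupon the continuous-dependence result for Carath\'eodory solutions developed later in this section forces the responses to converge and identifies their limit with \(x\) through \rref{lem:response}. The hard part, and the reason earlier proofs of this statement were incomplete or incorrect, is precisely the last move: because \(\beta\) can be highly discontinuous one cannot pass to the limit in \(y_n\mapsto\beta(y_n)\) naively, and one must both keep the nonanticipative constraint \(z=\beta(y)\) intact in the limit and recover an \emph{honest} measurable control \(y\) (not merely a relaxed one). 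Reconciling the weak compactness of controls with the requirement of a genuine, measurable, nonanticipative limiting play is the one genuinely delicate point, and is exactly what the \emph{Borel swap}/measurable-selection step is designed to supply.
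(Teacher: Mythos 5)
Your backward direction and your diagnosis of where the difficulty lies are both right, but the forward argument as structured cannot be completed, and the gap sits exactly at the point you flag without resolving it. Fixing $\beta$ and extracting limits of near-optimizers $y_n$ gives you, after Tychonoff/Arzel\`a--Ascoli and continuous dependence, a convergent subsequence $(y_{n_k},\beta(y_{n_k}))\to(y,z)$ and hence $g(\response[f]{T}{\xi}{y}{z})\geq0$ for the limit \emph{pair}. But nothing forces $z=\beta(y)$, not even a.e.: $\beta$ is merely nonanticipative, not continuous in the product topology, so $\beta(y_{n_k})$ need not converge to $\beta(y)$ --- the paper makes exactly this caveat immediately after \rref{lem:continuous-response}. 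The limiting trajectory is therefore a response of the game, but not a response \emph{against $\beta$}, and the desired conclusion that some $y$ satisfies $g(\response[f]{T}{\xi}{y}{\beta(y)})\geq0$ does not follow. No measurable-selection or Young-measure device repairs this within your setup, because the constraint $z=\beta(y)$ is simply not closed under pointwise limits of the pairs $(y_n,\beta(y_n))$; this loss of the coupling between $y$ and $\beta(y)$ in the limit is essentially the flaw in the prior literature that the paper is correcting.

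The paper's proof never takes limits along a fixed $\beta$. It restricts the infimum to the subclass of strategies induced by Borel-measurable maps (a subset of $\stratA{\eta}$, so the infimum can only increase) and then applies the Borel swap (\rref{lem:Borel-swap}) --- a minimax identity on the compact control spaces with the continuous payoff functional --- to convert $\inf_{\bar b}\sup_y$ into $\max_y\min_z$ with \emph{attained} extrema. This yields a single control $y$ with $g(\response[f]{T}{\xi}{y}{z})\geq0$ for \emph{every} $z\in\controlA{\eta}$ simultaneously; only afterwards is $z\mdefeq\beta(y)$ substituted, for an arbitrary $\beta$. The reorganization from ``a $y$ good against the particular $z$'s that $\beta$ produces along a sequence'' to ``one $y$ uniformly good against all $z$'' is the missing idea. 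Your compactness machinery (Tychonoff on the control spaces, continuity of the response functional) is the right supporting material, but in the paper it serves to verify the hypotheses of the swap, not to pass to the limit against a fixed strategy.
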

\begin{proofatend}
\begin{inparaenum}
\item[``$\mylpmi$'':]
This direction follows from \rref{case:V>=0} of \rref{lem:lower-value} as \(0\geq b\) for all \(b<0\).

\item[``$\mimply$'':]
Let \(\bar{B}\mdefeq\{\bar{b}:\controlD{\eta}\to\controlA{\eta} ~\text{Borel measurable}\}\).
Note that \(\bar{B} \subseteq \stratA{\eta}\), because the mapping \(\bar{b}(y)(s)\mdefeq \bar{b}(y(s))\) is nonanticipative, even independent of other times.
The infimum over bigger sets is smaller, thus, by \(V(0,\iget[state]{\I})\geq0\):
\begin{align*}
0 &\leq \inf_{\beta\in\stratA{\eta}} \sup_{y\in\controlD{\eta}}  g(\response[f]{T}{\iget[state]{\I}}{y}{\beta(y)})
\leq \inf_{\bar{b}\in\bar{B}} \sup_{y\in\controlD{\eta}}  g(\response[f]{T}{\iget[state]{\I}}{y}{\bar{b}(y)}))\\
&= \max_{y\in\controlD{\eta}} \min_{z\in\controlA{\eta}}  g(\response[f]{T}{\iget[state]{\I}}{y}{z}) &&\text{by \rref{lem:Borel-swap}}
\end{align*}
Hence,
\(\mexists{y\in\controlD{\eta}}\mforall{z\in\controlA{\eta}} g(\response[f]{T}{\iget[state]{\I}}{y}{z})\geq0\)
as $\min,\max$ extrema will happen for some concrete $y,z$.
Since this applies for all possible values \(\beta(y)\in\controlA{\eta}\) of any \(\beta\in\stratA{\eta}\), this implies \({\mforall{\beta\in\stratA{\eta}}{\mexists{y\in\controlD{\eta}}{g(\response[f]{T}{\iget[state]{\I}}{y}{\beta(y)})\geq 0}}}\).
The last step is the counterpart of Herbrandization for measurable functions.

It remains to see that \rref{lem:Borel-swap} is applicable.
The $[0,T]$-fold product \(\{y:[0,T]\to Y\}\) of compact space $Y$ is compact by Tychonoff's theorem \cite[\S9.5.3]{Bourbaki:top1} with respect to the product topology, i.e. the topology of pointwise convergence, i.e.\ \(y_n\to y\) for \(n\to\infty\) iff \(y_n(s)\to y(s)\) for \(n\to\infty\) for all $s$.
As pointwise limits of measurable functions are measurable \cite[9.9]{Walter:Ana2}, $\controlD{\eta}$ is a closed subset, so remains compact \cite[\S9.3.3]{Bourbaki:top1}. %
Similarly, $\controlA{\eta}$ is compact.
That \(g(\response[f]{T}{\iget[state]{\I}}{y}{z})\) is continuous (in the product topology which is the one of pointwise convergence) as a functional of $\passfunction{y}$ and $\passfunction{z}$, as required by \rref{lem:Borel-swap}, follows from \rref{lem:continuous-response} and continuity of $g$ (\rref{def:diffgame}).
\qedhere
\end{inparaenum}
\end{proofatend}

\rref{lem:V>=0} would not hold for infinite time horizon $T=\infty$ or non-compact control sets $Y,Z$.
For example, \(\pevolve{\D{x}=-x}\) converges to 0 for $T\to\infty$ without ever reaching it
and \(\pdiffgame{\D{x}=-xy}{y\in[0,\infty)}{}\) converges to 0 for $y\to\infty$ at $T=1$.
Likewise \(\pdiffgame{\D{x}=-x/y}{y\in(0,1]}{}\) converges to 0 for $y\to0$ at $T=1$.

The next lemma explains how the quantifier order seemingly swaps in the proof of \rref{lem:V>=0}.
The quantifier swap is accompanied by a change of types\footnote{%
This quantifier swap is related to the swap from \(\lforall{x}{\lexists{y}{p(x,y)}}\) in first-order logic to \(\lexists{F}{\lforall{x}{p(x,F(x))}}\) in second-order logic with a function $F$.
}, though, to move from Borel-measurable strategies (which are functions on controls) to plain controls by measurable Herbrandization.
The maximum over a compact $A$ of the minimum over a compact $B$ of a continuous function is the same as the infimum over all Borel-measurable responses \(\bar{b}:A\to B\) of the supremum over $A$ of said function.

\begin{lemma}[{Borel swap \cite{Quincampoix:SADCO}}] \label{lem:Borel-swap}
  If \(g:A\times B\to\reals\) is continuous on  compact $A,B$ and
  \(\bar{B} \mdefeq \{\bar{b}:A\to B ~\text{Borel measurable}\}\) then:
  \begin{align*}
    \max_{a\in A}\min_{b\in B} g(a,b) &= \inf_{\bar{b}\in\bar{B}} \sup_{a\in A} g(a,\bar{b}(a))\\
    \min_{a\in A}\max_{b\in B} g(a,b) &= \sup_{\bar{b}\in\bar{B}} \inf_{a\in A} g(a,\bar{b}(a))
  \end{align*}
\end{lemma}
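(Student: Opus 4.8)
The plan is to reduce the identity to two facts: continuity of the inner value function, and the existence of a Borel measurable selection of minimizers. Put $h(a) \mdefeq \min_{b\in B} g(a,b)$; this minimum is attained because $g(a,\cdot)$ is continuous on the compact set $B$, and $h$ is continuous because $g$ is uniformly continuous on the compact product $A\times B$. Hence $L \mdefeq \max_{a\in A} h(a)$ is attained at some $a^\ast\in A$, and the left-hand side of the first identity equals $L$. I would treat only the first identity; the second follows by applying it to $-g$ (again continuous on $A\times B$), since $\max_a\min_b(-g)=-\min_a\max_b g$ and $\inf_{\bar b}\sup_a(-g(a,\bar b(a)))=-\sup_{\bar b}\inf_a g(a,\bar b(a))$. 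Abbreviate the right-hand side by $R \mdefeq \inf_{\bar b\in\bar B}\sup_{a\in A} g(a,\bar b(a))$.

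For $L\le R$ I would beat every strategy at the single point $a^\ast$: for any $\bar b\in\bar B$ the value $\bar b(a^\ast)$ lies in $B$, so
\[
  \sup_{a\in A} g(a,\bar b(a)) \ \ge\ g\big(a^\ast,\bar b(a^\ast)\big) \ \ge\ \min_{b\in B} g(a^\ast,b) \ =\ h(a^\ast) \ =\ L,
\]
and taking the infimum over $\bar b$ gives $R\ge L$. This direction needs no measurability at all.

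For $R\le L$ I would exhibit a single Borel measurable map realizing the inner minimum everywhere. Consider the correspondence $M(a) \mdefeq \arg\min_{b\in B} g(a,b) = \{b\in B : g(a,b)=h(a)\}$; it is nonempty and compact-valued, and its graph $\{(a,b) : g(a,b)-h(a)=0\}$ is closed since $g$ and $h$ are continuous. A measurable selection theorem then yields a Borel measurable $\bar b^\ast\colon A\to B$ with $\bar b^\ast(a)\in M(a)$, hence $g(a,\bar b^\ast(a))=h(a)$ for all $a$. Therefore
\[
  R \ \le\ \sup_{a\in A} g\big(a,\bar b^\ast(a)\big) \ =\ \sup_{a\in A} h(a) \ =\ \max_{a\in A} h(a) \ =\ L,
\]
so $R=L$ and, incidentally, the infimum defining $R$ is attained by $\bar b^\ast$.

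The crux is the selection step. When $A,B$ are compact metric, the closed graph makes $M$ upper hemicontinuous and measurable, so the Kuratowski--Ryll-Nardzewski theorem supplies the required Borel selection directly. In the intended application, however, $A$ and $B$ are spaces of measurable controls carrying the topology of pointwise convergence over the uncountable time interval $[\eta,T]$; these are compact by Tychonoff but not metrizable, so the standard selection theorem does not apply verbatim and the argument must instead draw on the selection machinery of the cited source \cite{Quincampoix:SADCO}. This non-metrizable measurable selection is the only genuine obstacle; everything else is the elementary minimax bookkeeping above.
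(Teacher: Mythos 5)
Your easy direction ($L\le R$, via evaluation at a maximizer $a^\ast$) is fine, and your reduction of the second identity to the first via $-g$ matches the paper's duality remark. The problem is the hard direction. You correctly diagnose that the crux is producing a Borel measurable near-minimizing map $\bar b\colon A\to B$, and you correctly observe that Kuratowski--Ryll-Nardzewski does not apply because in the intended application $A$ and $B$ are compact but non-metrizable (products over the uncountable index set $[\eta,T]$). But having identified this obstacle you do not overcome it: you defer the selection step to ``the selection machinery of the cited source.'' That is precisely the step the lemma exists to supply, so as written the proof has a genuine gap at its only nontrivial point.

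The paper closes this gap without any selection theorem, by exploiting that the right-hand side is an \emph{infimum} over $\bar B$: one does not need an exact Borel selector of the argmin correspondence, only an $\varepsilon$-optimal one for each $\varepsilon>0$. Concretely, for each $a$ pick $b_a$ with $g(a,b_a)\le \inf_{b\in B}g(a,b)+\varepsilon/2$; by continuity of $g$ and of $a\mapsto\inf_b g(a,b)$ this inequality persists (with $\varepsilon$ in place of $\varepsilon/2$) on an open neighborhood of $a$; compactness of $A$ yields a finite subcover $O_1,\dots,O_m$ with associated constants $b_1,\dots,b_m$, and the map $b(a)\mdefeq b_i$ on $O_i\setminus\bigcup_{j<i}O_j$ is piecewise constant on finitely many Borel sets, hence trivially Borel measurable, and satisfies $\sup_a g(a,b(a))\le \sup_a\inf_b g(a,b)+\varepsilon$. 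Letting $\varepsilon\to0$ gives $R\le L$. This works in any compact Hausdorff setting and is the elementary argument you were missing; if you want to keep your exact-selection formulation you would have to actually prove a selection theorem valid for these non-metrizable spaces, which is considerably harder than the lemma itself.
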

\begin{proofatend}
Both equations imply each other by duality. 
It suffices to prove the first one.
As $A,B$ are compact and $g$ is continuous, 
\(\sup_{a\in A} \inf_{b\in B} g(a,b) = \max_{a\in A} \min_{b\in B} g(a,b)\),
because continuous functions assume their extremal values on compact sets.

\begin{inparaenum}
\item[``$\leq$'':]
Fix $\bar{b}\in\bar{B}$. For any $a\in A$:
\(g(a,\bar{b}(a)) \geq \inf_{b\in B} g(a,b)\).
Since this inequality is a weak inequality and holds for all $a\in A$, it continues to hold for the supremum:
\(\sup_{a\in A} g(a,\bar{b}(a)) \geq \sup_{a\in A} \inf_{b\in B} g(a,b)\).
Since $\bar{b}\in\bar{B}$ was arbitrary, this weak inequality continues to hold for the infimum:
\(\inf_{\bar{b}\in\bar{B}} \sup_{a\in A} g(a,\bar{b}(a)) \geq \sup_{a\in A} \inf_{b\in B} g(a,b)\).

\item[``$\geq$'':]
Fix $\varepsilon>0$. For any $a\in A$ choose a $b_a\in B$ such that
\(g(a,b_a) \leq \inf_{b\in B} g(a,b) + \frac{\varepsilon}{2}\), which is possible by the definition of infima.
Since $g$ is continuous and $B$ compact, the function \(a\mapsto \inf_{b\in B} g(a,b)\) is continuous.
As a continuous image of a compact set there, thus, is a finite open cover \(O_i\subseteq B\) and \(b_i \in A\) such that
\(g(a,b_i) \leq \inf_{b\in B} g(a,b) + \varepsilon\) for all \(a\in O_i\).
Thus,
\(g(a,b(a)) \leq \inf_{b\in B} g(a,b) + \varepsilon\) for all \(a\in A\)
for the function $b\in\bar{B}$:
\begin{equation}
b(a) \mdefeq b_i ~\text{if}~ a\in O_i \setminus \cupfold_{j<i} O_j
\label{eq:Borel-swap-piece}
\end{equation}
which is Borel measurable as a piecewise constant composition of constants on a finite number of Borel sets.
Since the above inequality holds for all $a\in A$, it continues to hold for the supremum: 
\(\sup_{a\in A} g(a,b(a)) \leq \sup_{a\in A}\inf_{b\in B} g(a,b) + \varepsilon\).
Since this upper bound holds for $b\in\bar{B}$ from \rref{eq:Borel-swap-piece}, it continues to hold for the infimum over all $\bar{b}\in\bar{B}$:
\(\inf_{\bar{b}\in\bar{B}} \sup_{a\in A} g(a,\bar{b}(a)) \leq \sup_{a\in A}\inf_{b\in B} g(a,b) + \varepsilon \to \sup_{a\in A}\inf_{b\in B} g(a,b)\) (for \(\varepsilon\to0\)).
\qedhere
\end{inparaenum}
\end{proofatend}

Continuous dependency results for Carath\'eodory solutions on their initial data are standard \cite{Walter:DGL}.
Continuity of the payoff functional in the product topology for the proof of \rref{lem:V>=0} needs continuous dependence on the right-hand side of the differential equation \rref{eq:diffgame}, though.
The following lemma shows that Carath\'eodory solutions, fortunately, also depend continuously on the right-hand side if uniformly bounded and uniformly Lipschitz.
Even Carath\'eodory solutions of differential equations are smoother than the equations in the sense that pointwise converge of the equations implies not just pointwise but even uniform convergence of the solution.
\begin{lemma}[Continuous dependence] \label{lem:uniform-Caratheodory-convergence}
  Let \(h_n{:}[\eta,T]{\times}\reals^k{\to}\reals^k\) be a sequence of functions that are measurable in $t$, uniformly $L$-Lipschitz in $x$, and with common supremum bound.
  If \(h_n\to h\) for \(n\to\infty\) pointwise and $x,x_n$ are Carath\'eodory solutions of
  \[
  \D{x}(s) = h(s,x(s)) \quad \D{x_n}(s) = h_n(s,x_n(s)) \quad x(\eta)=x_n(\eta)
  \]
  then \(x_n\to x\) uniformly on $[\eta,T]$ for \(n\to\infty\).
\end{lemma}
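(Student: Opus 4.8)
The plan is to reduce the statement to a Gr\"onwall estimate on the pointwise error \m{\delta_n(s) \mdefeq \abs{x_n(s)-x(s)}}, driven by a single scalar quantity that vanishes by dominated convergence. Since Carath\'eodory solutions are absolutely continuous and satisfy their equations a.e., they obey the integral equations \m{x(s) = x(\eta) + \int_\eta^s h(\tau,x(\tau))\,d\tau} and \m{x_n(s) = x_n(\eta) + \int_\eta^s h_n(\tau,x_n(\tau))\,d\tau}. Subtracting these, using \m{x_n(\eta)=x(\eta)}, and inserting and removing the term \m{h_n(\tau,x(\tau))}, I would split the integrand as the sum of \m{h_n(\tau,x_n(\tau)) - h_n(\tau,x(\tau))} and \m{h_n(\tau,x(\tau)) - h(\tau,x(\tau))}.

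The first summand is bounded in norm by \m{L\,\delta_n(\tau)} thanks to the uniform \m{L}-Lipschitz assumption, and it will be absorbed by Gr\"onwall. The second summand is the crucial decoupling step: its norm \m{e_n(\tau) \mdefeq \abs{h_n(\tau,x(\tau)) - h(\tau,x(\tau))}} is evaluated along the \emph{fixed} limit trajectory \m{x}, so it no longer involves \m{x_n}. Writing \m{E_n \mdefeq \int_\eta^T e_n(\tau)\,d\tau} and using \m{e_n \geq 0}, the triangle inequality gives \m{\delta_n(s) \leq E_n + L\int_\eta^s \delta_n(\tau)\,d\tau} for all \m{s \in [\eta,T]}; since \m{\delta_n} is continuous, Gr\"onwall's inequality yields the uniform bound \m{\delta_n(s) \leq E_n\,e^{L(T-\eta)}}.

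It then remains to show \m{E_n \to 0}, which is where the pointwise convergence hypothesis enters and which I regard as the main (though still routine) obstacle. For each fixed \m{\tau}, pointwise convergence \m{h_n\to h} evaluated at the point \m{(\tau,x(\tau))} gives \m{e_n(\tau)\to 0}. Moreover \m{h} inherits the common supremum bound \m{M} as a pointwise limit, so \m{e_n(\tau)\leq 2M}, a constant that is integrable on the finite interval \m{[\eta,T]}. Each \m{e_n} is measurable in \m{\tau} because the superposition of a Carath\'eodory function (measurable in \m{t}, continuous in \m{x}) with the continuous map \m{x} is measurable, and \m{h} is measurable in \m{t} as a pointwise limit of such functions. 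Dominated convergence then gives \m{E_n\to 0}, whence \m{\delta_n(s) \leq E_n\,e^{L(T-\eta)} \to 0} uniformly in \m{s}, which is exactly the claimed uniform convergence \m{x_n\to x} on \m{[\eta,T]}.
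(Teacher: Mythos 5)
Your proposal is correct and follows essentially the same route as the paper's proof: pass to the integral equations, insert $h_n(\tau,x(\tau))$ to split the integrand into a Lipschitz-controlled term and an error term along the fixed limit trajectory, apply Gr\"onwall, and kill the error integral by dominated convergence with the common bound. The only (harmless) cosmetic differences are that you majorize the error integral by its value $E_n$ over all of $[\eta,T]$ before invoking Gr\"onwall and that you spell out the measurability of $e_n$, which the paper leaves implicit.
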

\begin{proofatend}
The assumptions state that
\(\norm{h_n(t,x)}\leq B\), 
\(\norm{h_n(t,x)-h_n(t,y)}\leq L\norm{x-y}\) for all $n,t,x,y$ 
and \(h_n(t,x)\to h(t,x)\) for \(n\to\infty\) and all $t,x$.
By \cite[\S10.XIX]{Walter:DGL}, $x$ and $x_n$ are Carath\'eodory solutions of their respective differential equation iff they satisfy corresponding Lebesgue integral equations:
\begin{align*}
x(t) &= x(\eta) + \int_\eta^t h(s,x(s)) ds\\
x_n(t) &= x_n(\eta) + \int_\eta^t h_n(s,x_n(s)) ds
\intertext{Consequently, they differ by}
\norm{x(t)-x_n(t)}
&= \norm{\int_\eta^t h(s,x(s))-h_n(s,x_n(s)) ds}\\
& = \norm{\int_\eta^t h(s,x(s))-h_n(s,x(s)) + h_n(s,x(s))-h_n(s,x_n(s)) ds}\\
& {\leq} \int_\eta^t \norm{h(s,x(s)){-}h_n(s,x(s))}ds {+} \int_\eta^t \norm{h_n(s,x(s)){-}h_n(s,x_n(s))} ds\\
& \leq \int_\eta^t \norm{h(s,x(s))-h_n(s,x(s))}ds + \int_\eta^t L \norm{x(s)-x_n(s)} ds\\
\intertext{Due to its norm, the first term is nondecreasing, hence Gr\"onwall's inequality implies:} %
& \norm{x(t)-x_n(t)}
\leq \underbrace{e^{\int_\eta^t Lds}}_{e^{Lt-L\eta}} \int_\eta^t \norm{h(s,x(s))-h_n(s,x(s))} ds \to 0
\end{align*}
for \(n\to\infty\) by dominated convergence \cite[9.14]{Walter:Ana2}, as \(\norm{h(s,x(s))-h_n(s,x(s))}\to0\) for all $s$ and \(\norm{h(s,x(s))-h_n(s,x(s))}\) is bounded by the Lebesgue-integrable function $2B$ since all $h_n$ are bounded by the same $B$ and so is $h$ as their pointwise limit.
\end{proofatend}

Since the responses of differential games are Carath\'eodory solutions of differential equation \rref{eq:diffgame}, \rref{lem:uniform-Caratheodory-convergence} generalizes to a continuous dependency result for differential game responses (in the product topology corresponding to pointwise convergence, which, as in \rref{lem:uniform-Caratheodory-convergence}, even leads to a uniformly convergent response).

\begin{lemma}[Continuous response]
  \label{lem:continuous-response}
  Responses of a differential game depend continuously on the controls.
  That is, if \(y_n\to y\) and \(z_n\to z\) for \(n\to\infty\) pointwise, then their responses converge 
  \(\response[f]{\argholder}{\xi}{y_n}{z_n} \to \response[f]{\argholder}{\xi}{y}{z}\) for \(n\to\infty\) uniformly on $[\eta,T]$.
\end{lemma}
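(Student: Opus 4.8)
The plan is to reduce the statement directly to the continuous-dependence result for Carath\'eodory solutions, \rref{lem:uniform-Caratheodory-convergence}, by absorbing the controls into the right-hand side of the differential equation. Concretely, for the given convergent sequences I would set
\[
h_n(s,x) \mdefeq f(s,x,y_n(s),z_n(s)), \qquad h(s,x) \mdefeq f(s,x,y(s),z(s)),
\]
so that, by \rref{lem:response}, the responses $\response[f]{\argholder}{\xi}{y_n}{z_n}$ and $\response[f]{\argholder}{\xi}{y}{z}$ are exactly the unique Carath\'eodory solutions $x_n,x$ of \m{\D{x_n}(s)=h_n(s,x_n(s))} and \m{\D{x}(s)=h(s,x(s))} with the common initial condition $x_n(\eta)=x(\eta)=\xi$. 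It then suffices to check that $h_n$ and $h$ meet the hypotheses of \rref{lem:uniform-Caratheodory-convergence}.

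Next I would verify those hypotheses, and this is where the \emph{uniform} requirements of \rref{def:diffgame} pay off. Measurability of each $h_n$ in $s$ follows exactly as in the proof of \rref{lem:response}: $f$ is continuous hence Borel measurable, and $y_n,z_n$ are measurable, so their composition is measurable. The family is uniformly $L$-Lipschitz in $x$ with a single constant $L$, because $f$ is uniformly Lipschitz in $x$ with a constant independent of the control values, so that \m{\norm{h_n(s,x)-h_n(s,a)}=\norm{f(s,x,y_n(s),z_n(s))-f(s,a,y_n(s),z_n(s))}\le L\norm{x-a}}. Likewise all $h_n$ share the supremum bound of $f$. These two uniformities are precisely what a single pair of Lipschitz constant and bound in \rref{lem:uniform-Caratheodory-convergence} demands.

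The only genuinely substantive step is pointwise convergence $h_n\to h$. Fixing $(s,x)$, I would use that $y_n(s)\to y(s)$ in $Y$ and $z_n(s)\to z(s)$ in $Z$ (which is exactly pointwise convergence of the controls), so that continuity of $f$ in its $(y,z)$ arguments---a consequence of the uniform continuity assumed in \rref{def:diffgame}---yields $f(s,x,y_n(s),z_n(s))\to f(s,x,y(s),z(s))$, i.e.\ $h_n(s,x)\to h(s,x)$. With all hypotheses in hand, \rref{lem:uniform-Caratheodory-convergence} gives $x_n\to x$ uniformly on $[\eta,T]$, which is the claimed uniform convergence of responses.

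I expect the main point to be conceptual rather than computational: recognizing that merely \emph{pointwise} convergence of the controls, hence of the frozen right-hand sides $h_n$, already forces \emph{uniform} convergence of the solutions. That smoothing phenomenon is not visible at the level of the $h_n$ themselves and is exactly the content that \rref{lem:uniform-Caratheodory-convergence} supplies; the remaining work is only to confirm that a common Lipschitz constant and a common bound survive the freezing of the controls, which they do because \rref{def:diffgame} states these bounds uniformly in $y\in Y$ and $z\in Z$.
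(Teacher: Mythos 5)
Your proposal is correct and follows essentially the same route as the paper's proof: freeze the controls into right-hand sides $h_n(s,x)=f(s,x,y_n(s),z_n(s))$, observe pointwise convergence $h_n\to h$ from continuity of $f$, check the uniform Lipschitz constant and common bound inherited from \rref{def:diffgame}, and invoke \rref{lem:uniform-Caratheodory-convergence}. Your write-up is merely more explicit about verifying the hypotheses (measurability in $s$, uniformity of the constants) than the paper's terser argument.
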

\begin{proofatend}
Let \(y_n\to y\) and \(z_n\to z\) for \(n\to\infty\) pointwise.
Then the respective right-hand sides of \rref{eq:diffgame} converge: \(f(s,x,y_n(s),z_n(s))\to f(s,x,y(s),z(s))\) for \(n\to\infty\) pointwise by continuity of $f$.
The responses \(\response[f]{\argholder}{\xi}{y}{z}\) 
and \(x_n(s)\mdefeq \response[f]{s}{\xi}{y_n}{z_n}\) 
solve \rref{eq:diffgame}, which, with the abbreviations \(h(s,x) \mdefeq f(s,x,y(s),z(s))\)
and \(h_n(s,x) \mdefeq f(s,x,y_n(s),z_n(s))\), is
\begin{align*}
\D{x}(s) &= h(s,x(s)) & x(\eta)&=\xi
\\
\D{x_n}(s) &= h_n(s,x_n(s)) & x_n(\eta)&=\xi
\end{align*}
Since \(h_n(s,x) \to h(s,x)\) pointwise for \(n\to\infty\), $h_n$ and $h$ satisfy the assumptions of \rref{lem:uniform-Caratheodory-convergence} using the Lipschitz constant $L$ of $f$ in $x$ and a bound on $f$ from \rref{def:diffgame}.
Consequently, \(x_n\to x\) uniformly for \(n\to\infty\) by \rref{lem:uniform-Caratheodory-convergence}.
\end{proofatend}
Controls are usually not continuous over time, nor continuous functions of the state \cite[\S2.2]{Hajek}.
Yet, \rref{lem:continuous-response} entails that the responses still depend continuously on the controls in the product topology.
\rref{lem:continuous-response} may not hold when replacing $z_n$ by $\beta(y_n)$, because the nonanticipative strategy $\beta$ does not generally depend continuously on $y_n$, so \(\beta(y_n)\) may not converge to \(\beta(y)\) as \(y_n\to y\).
This is despite the observation:
\begin{remark}
$\stratA{\eta}$ is compact in the product topology of pointwise convergence.
\end{remark}
\begin{proofatend}
By Tychonoff's theorem \cite[\S9.5.3]{Bourbaki:top1}, also the product space \(\{\beta:\controlD{\eta}\to \controlA{\eta}\}\) is compact since \(\controlA{\eta}\) is compact (proof of \rref{lem:V>=0}).
Since pointwise limits of nonanticipative functions are nonanticipative, $\stratA{\eta}$ is a closed subset, thus, still compact \cite[\S9.3.3]{Bourbaki:top1}. %
To see that pointwise limits of nonanticipative functions are nonanticipative,
let \(\beta_n\to\beta\), i.e.\ \(\beta_n(y)\to\beta(y)\) for all $y$, which, because of the nested product topology, is \(\beta_n(y)(s)\to\beta(y)(s)\) for all $s$ and all $y$.
Let \(y(\tau)=\tilde{y}(\tau)\) for a.e. \(\eta\leq\tau\leq s\).
Then,
\(\beta_n(y)(\tau)=\beta_n(\tilde{y})(\tau)\) for a.e.\ \(\eta\leq\tau\leq s\), as \(\beta_n\in\stratA{\eta}\) for all $n$.
This equality a.e.\ is preserved for both limits \(\beta_n(y)(\tau)\to\beta(y)(\tau)\) and \(\beta_n(\tilde{y})(\tau)\to\beta(\tilde{y})(\tau)\) such that
\(\beta(y)(\tau)=\beta(\tilde{y})(\tau)\) for a.e.\ \(\eta\leq\tau\leq s\).
\end{proofatend}

Equations \rref{eq:lower-value}--\rref{eq:U-dynamic-programming} define the lower and upper values of a differential game, which, by Lemmas~\ref{lem:lower-value} and \ref{lem:V>=0} characterize the existence of winning strategies, but neither the original definitions \rref{eq:lower-value},\rref{eq:upper-value} nor the dynamic programming equations \rref{eq:V-dynamic-programming},\rref{eq:U-dynamic-programming} are computable principles \cite{DBLP:journals/jar/Platzer08} except possibly by discrete approximation, which can lead to erroneous decisions.
This is what makes the proof rules in \rref{fig:diffgameind} interesting.

\subsection{Viscosity Solutions} \label{sec:viscosity-PDE}

The lower \rref{eq:lower-value} and upper values \rref{eq:upper-value} of a differential game, whose sign characterize winning regions (Lemmas~\ref{lem:lower-value} and \ref{lem:V>=0}), can be characterized as satisfying a partial differential equation when using a suitably generalized notion of solutions that tolerates the fact that value functions are often non-differentiable, so are no classical solutions.

This section recalls viscosity solutions, which have been identified as the appropriate notion of weak solutions for Hamilton-Jacobi type partial differential equations \cite{DBLP:journals/tams/CrandallL83,Barles13}.
The presentation uses an elegant characterization of viscosity solutions with Fr\'echet sub- and superdifferentials, which capture all derivatives from below and from above a function \cite{Bressan:HJ,Barles13}.
The conceptual simplifications made possible by Fr\'echet sub/superdifferentials for differential games are also exploited in the proofs about the expressive power of \dGL (\rref{sec:differential-game-embedding}).
They are based on single-sided understandings of the \emph{gradient operator} \(D = (\Dp[x_1]{}, \dots, \Dp[x_n]{})\).
To emphasize the affected variables $x$, the gradient operator $D$ is also written as $D_x$.
Another common notation for a single variable $t$ is to write \(x_t\) instead of \(D_t x\).
\begin{definition}[Subdifferentials, superdifferentials] \label{def:subsuperdifferentials}
  Let $\Omega\subseteq\reals^n$ be open.
  The \emph{superdifferential} $\superdiff{u}(x)$ of a function \m{u:\Omega\to\reals} at $x\in\Omega$ and the \emph{subdifferential} $\subdiff{u}(x)$ of $u$ at $x$ are defined as:
  \begin{align*}
  \superdiff{u}(x) &\mdefeq \{p\in\reals^n : \limsup_{y\to x} \frac{u(y)-u(x)-p\stimes(y-x)}{\norm{y-x}} \leq0\}\\
  \subdiff{u}(x) &\mdefeq \{p\in\reals^n : \liminf_{y\to x} \frac{u(y)-u(x)-p\stimes(y-x)}{\norm{y-x}} \geq0\}
  \end{align*}
\end{definition}%
\begin{figure}[tb]
  \includegraphics[width=6cm]{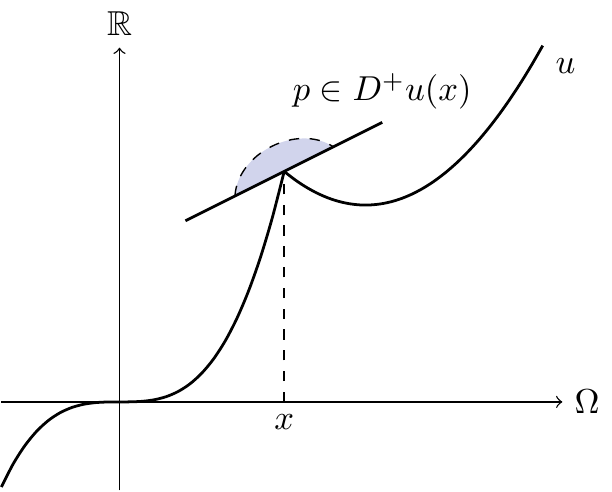}%
  \hfill%
  \includegraphics[width=6cm]{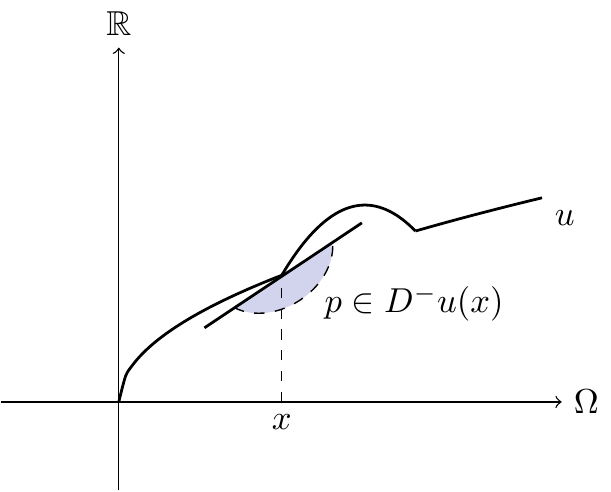}
  \caption{One of infinitely many superdifferentials $p\in\superdiff{u}(x)$ at $x$ (left) and one of many subdifferentials $p\in\subdiff{u}(x)$ at $x$ (right) for two functions}
  \label{fig:subsuperdifferentials}
\end{figure}
Both \(\superdiff{u}(x)\), \(\subdiff{u}(x)\) are closed and convex \cite[\S6.4.3]{AubinFrankowska}.
They align with geometric notions (illustrated in \rref{fig:subsuperdifferentials}) and with the classical conditions for viscosity solutions in terms of test functions as follows \cite[Thm.\,3.3]{Barles13}.

\begin{lemma}[Characterizations {\cite[Lem.\,2.2,2.5]{Bressan:HJ}}]%
\label{lem:subsuperdifferentials}%
  Let $u$ be a continuous function on an open set $\Omega\subseteq\reals^n$, i.e.\ $u\in\continuouss{\Omega}{}$. Then
  \begin{enumerate}
  \item $p\in \superdiff{u}(x)$ iff the hyperplane \(y\mapsto u(x)+p\stimes(y-x)\) is tangent from above to the graph of $u$ at $x$.
  That is:\\
  \m{u(x) + p\stimes(y-x) \geq u(y) ~ \text{for all $y$ sufficiently close to $x$}}
  \\
  Similarly, $p\in \subdiff{u}(x)$ iff the hyperplane is tangent to the graph of $u$ at $x$ from below:\\
  \m{u(x) + p\stimes(y-x) \leq u(y) ~\text{for all $y$ sufficiently close to $x$}}
  
  \item $p\in \superdiff{u}(x)$ iff there is a $v\in\continuouss[1]{\Omega}{}$, i.e.\ a continuously differentiable function \m{v:\Omega\to\reals}, such that \(Dv (x) = p\) and \(u-v\) has a local maximum\footnote{%
  The \emph{test function} $v$ can be assumed to satisfy $v(x)=u(x)$ without loss of generality in both cases. %
  Furthermore, \(u-v\) can be assumed to have a strict local maximum/minimum at $x$. %
  The property is also equivalent when using smooth $v\in\continuouss[\infty]{\Omega}{}$ instead \cite{DBLP:journals/tams/CrandallEL84}.} at $x$.
  \item $p\in \subdiff{u}(x)$ iff there is a $v\in\continuouss[1]{\Omega}{}$ such that \(Dv (x) = p\) and \(u-v\) has a local minimum at $x$.
  \item If \(\superdiff{u}(x)\neq\emptyset\) and \(\subdiff{u}(x)\neq\emptyset\), then $u$ is differentiable at $x$.
  \item If $u$ is differentiable at $x$, then \(\superdiff{u}(x)=\subdiff{u}(x)=\{Du (x)\}\) is the gradient $Du (x)$.
  \item \(\{x\in\Omega \with \superdiff{u}(x)\neq\emptyset\}\) and \(\{x\in\Omega \with \subdiff{u}(x)\neq\emptyset\}\) are dense in $\Omega$.
  \end{enumerate}
\end{lemma}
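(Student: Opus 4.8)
The plan is to treat the six items in three groups: the definitional equivalences (items~1, 4, 5), the test-function characterizations (items~2, 3), and the density claim (item~6), and to observe that once the forward direction of item~2 is in hand, everything else follows cheaply by symmetry and elementary comparisons.

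First I would unpack item~1 directly from the definition: the condition $\limsup_{y\to x}\frac{u(y)-u(x)-p\stimes(y-x)}{\norm{y-x}}\leq0$ says exactly that for every $\varepsilon>0$ there is a neighborhood of $x$ on which $u(y)\leq u(x)+p\stimes(y-x)+\varepsilon\norm{y-x}$, i.e.\ tangency from above to first order (the dual inequality for $\subdiff{u}(x)$ being identical with $\liminf$ and $\geq$). Items~4 and~5 are then short directional arguments: given $p\in\superdiff{u}(x)$ and $q\in\subdiff{u}(x)$, the two first-order inequalities combine into $(p-q)\stimes(y-x)\geq -o(\norm{y-x})$; probing along $y=x+tw$ and $y=x-tw$ with $t\downarrow0$ forces $(p-q)\stimes w=0$ for every unit $w$, hence $p=q$, and the sandwiched inequalities then read as Fr\'echet differentiability of $u$ at $x$ with gradient $p$. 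Item~5 runs the same comparison between an arbitrary element of $\superdiff{u}(x)$ (or $\subdiff{u}(x)$) and the true gradient $Du(x)$ supplied by differentiability.

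The real content is the ``$\Rightarrow$'' direction of item~2; item~3 follows by applying it to $-u$, and the ``$\Leftarrow$'' direction of both is immediate, since a local maximum of $u-v$ with $v\in\continuouss[1]{\Omega}{}$ gives $u(y)-u(x)\leq v(y)-v(x)=Dv(x)\stimes(y-x)+o(\norm{y-x})$. For the forward direction I would first normalize by subtracting the affine function $u(x)+p\stimes(y-x)$, so that it suffices to build, for a continuous $w$ with $w(x)=0$ and $\limsup_{y\to x}w(y)/\norm{y-x}\leq0$, a function $v\in\continuouss[1]{\Omega}{}$ with $v(x)=0$, $Dv(x)=0$, and $w\leq v$ near $x$. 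The construction is radial: set $g(r)\mdefeq\max_{\norm{y-x}\leq r}w(y)$, which is nondecreasing and continuous with $g(0)=0$ and, crucially, $g(r)/r\to0$ as $r\downarrow0$; then smooth it by averaging, $\phi(r)\mdefeq\frac1r\int_r^{2r}g(s)\,ds$, and put $v(y)\mdefeq\phi(\norm{y-x})$. Monotonicity of $g$ gives $\phi(r)\geq g(r)\geq w(y)$ whenever $\norm{y-x}\leq r$, so $v$ dominates $w$.

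The step I expect to be the main obstacle is verifying that this $\phi$ (equivalently $v$) is genuinely $C^1$ up to $r=0$ with vanishing derivative there, rather than merely continuous. Writing $\phi(r)=(G(2r)-G(r))/r$ with $G'=g$ makes $\phi$ visibly $C^1$ for $r>0$; the delicate part is the limit at $0$, where the bound $g(s)\leq\varepsilon s$ (valid for small $s$ by $g(r)/r\to0$) must be pushed through both $\phi(r)=o(r)$ (giving $v$ differentiable at $x$ with $Dv(x)=0$) and $\phi'(r)\to0$ (giving $Dv$ continuous at $x$, since $Dv(y)=\phi'(\norm{y-x})\,(y-x)/\norm{y-x}\to0$). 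Once item~2 is established, item~6 follows by the standard coercive-perturbation trick: in any ball $\overline{B(x_0,\rho)}\subseteq\Omega$ the barrier $\psi(y)=(\rho^2-\norm{y-x_0}^2)^{-1}$ blows up at $\partial B$, so $u-\psi$ attains an interior maximum at some $x^*$, whence $D\psi(x^*)\in\superdiff{u}(x^*)$ by item~2; using $u+\psi$ instead gives an interior minimum and a point of nonempty $\subdiff{u}$ by item~3, proving both sets dense.
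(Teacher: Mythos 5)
The paper offers no proof of this lemma: it is imported verbatim from the cited source (Bressan, Lem.\,2.2 and 2.5), consistent with the paper's convention that only uncited results are proved inline. Your argument is correct and is essentially the standard proof from that source: the radial construction $v(y)\mdefeq\phi(\norm{y-x})$ with $\phi(r)=\frac1r\int_r^{2r}g(s)\,ds$ and $g(r)=\max_{\norm{y-x}\le r}w(y)$ does yield a $C^1$ test function with $Dv(x)=0$ dominating $w$, the bound $0\le g(s)\le\varepsilon s$ propagates to $\phi(r)=O(\varepsilon r)$ and $\phi'(r)=O(\varepsilon)$ exactly as you anticipate, and the coercive-perturbation argument for item~6 is the standard one. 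The only point worth flagging is your (correct) reading of item~1: the displayed inequality $u(x)+p\stimes(y-x)\ge u(y)$ is literally an ``iff'' only when the first-order error term $\varepsilon\norm{y-x}$ (equivalently $o(\norm{y-x})$) is retained, as in the cited source --- without it the forward implication fails already for $u(y)=\norm{y-x}^{3/2}$, $p=0$ --- so your restoration of that term is the right interpretation of the statement, not a deviation from it.
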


Superdifferentials of minima (e.g., \rref{fig:subsuperdifferentials}left) as well as subdifferentials of maxima (\rref{fig:subsuperdifferentials}right) are well-behaved even if differentials and gradients are ill-defined at the points of non-differentiability.
\begin{lemma} \label{lem:superdiffmin}
  The superdifferential \(\superdiff{u}(x)\) of the pointwise minimum \(u(x)\mdefeq\min_i u_i(x)\) of the functions \(u_1,\dots,u_k:\Omega\to\reals\) at \(x\in\Omega\) is the convex hull of their support
  (the case \(\subdiff{\max_i u_i}(x)\) is analogous):
  \[
  \superdiff{u}(x) = \convexhull \cupfold_{u_i(x)=u(x)} \superdiff{u_i}(x)
  \]
\end{lemma}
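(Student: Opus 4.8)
The plan is to prove the minimum case; the maximum case then follows by applying it to $-u_i$ and using the elementary dualities $\superdiff{(-w)}(x) = -\subdiff{w}(x)$ and $\subdiff{(-w)}(x) = -\superdiff{w}(x)$, which convert $\max_i u_i = -\min_i(-u_i)$ and its subdifferential into the asserted form. First I would discard the inactive pieces: if $u_i(x) > u(x)$ then, by continuity of the $u_i$, we have $u_i(y) > u(y)$ for all $y$ near $x$, so $u$ coincides with the minimum of the \emph{active} pieces $\{u_i : u_i(x)=u(x)\}$ on a neighbourhood of $x$. Since $\superdiff{u}(x)$ is a local notion, neither side of the claimed identity is affected, and I may assume every $u_i(x)=u(x)$.

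For the inclusion $\supseteq$, I would argue straight from \rref{def:subsuperdifferentials}. Fix an active $i$ and $p\in\superdiff{u_i}(x)$. Because $u\leq u_i$ everywhere and $u(x)=u_i(x)$, for every $y$ we have $u(y)-u(x)-p\stimes(y-x)\leq u_i(y)-u_i(x)-p\stimes(y-x)$, so the defining $\limsup$ for $u$ is dominated by the one for $u_i$, hence is $\leq0$; thus $p\in\superdiff{u}(x)$. Therefore $\superdiff{u_i}(x)\subseteq\superdiff{u}(x)$ for each active $i$, and since $\superdiff{u}(x)$ is convex (as recalled just after \rref{def:subsuperdifferentials}) it contains $\convexhull\cupfold_{u_i(x)=u(x)}\superdiff{u_i}(x)$.

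For the converse inclusion $\subseteq$, the key tool is the one-sided directional consequence of membership: taking $y=x+td$ in \rref{def:subsuperdifferentials} shows that $p\in\superdiff{u}(x)$ implies $\limsup_{t\to0^+}\frac{u(x+td)-u(x)}{t}\leq p\stimes d$ for every direction $d$. In the case most relevant to the application, where the pieces are differentiable at $x$ (e.g.\ the polynomial pieces of an arithmetization $\arithmetize{F}$), \rref{lem:subsuperdifferentials} gives $\superdiff{u_i}(x)=\{Du_i(x)\}$, and a short computation shows that for small $t>0$ only active pieces attain the minimum, so the limit exists and $\lim_{t\to0^+}\frac{u(x+td)-u(x)}{t}=\min_{u_i(x)=u(x)} Du_i(x)\stimes d$. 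Combining the two displays yields $p\stimes d\geq\min_{u_i(x)=u(x)} Du_i(x)\stimes d$ for \emph{all} $d$, and the support-function characterization of the convex hull of a finite set, $\convexhull\{q_i\}=\{p : p\stimes d\geq\min_i q_i\stimes d \text{ for all } d\}$, gives $p\in\convexhull\cupfold_{u_i(x)=u(x)}\superdiff{u_i}(x)$. This settles the differentiable case, where the hull is a polytope and hence automatically closed.

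I expect the main obstacle to be the $\subseteq$ direction for pieces that are merely continuous, where $\superdiff{u_i}(x)$ may be a nontrivial (even unbounded) set and the convex hull need not obviously be closed. There the clean identity for the directional derivative of the minimum is unavailable, and one must instead separate $p$ from the hull, track an index that attains the minimum along a whole sequence $y_n\to x$ in the separating direction, and use the test-function form of \rref{lem:subsuperdifferentials} to convert that sequential touching-from-above into a genuine element of some active $\superdiff{u_i}(x)$; controlling closedness of the hull (via Carath\'eodory's theorem together with compactness of the relevant superdifferentials) is the accompanying technical nuisance, and the characterizations cited from Bressan are exactly what make this step go through.
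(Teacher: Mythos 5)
Your ``$\supseteq$'' direction coincides with the paper's: domination of the defining $\limsup$ by the active piece, then convexity of $\superdiff{u}(x)$. Where you genuinely diverge is ``$\subseteq$'', and your route is the more solid one. The paper disposes of that inclusion in one line by chaining $u(y)-u(x)\leq u_i(y)-u_i(x)$ for an active $i$; that chain relates $p$ to a single piece and does not by itself place $p$ in the convex hull of the active superdifferentials. Your argument --- for each direction $d$, membership $p\in\superdiff{u}(x)$ gives $\limsup_{t\to0^+}\frac{u(x+td)-u(x)}{t}\leq p\stimes d$, while differentiability of the active pieces gives $\lim_{t\to0^+}\frac{u(x+td)-u(x)}{t}=\min_{u_i(x)=u(x)}Du_i(x)\stimes d$, and the support-function description of the (automatically closed) hull of finitely many points finishes --- is a complete proof in the differentiable case. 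That is exactly the case the soundness proof of \irref{diffgameind} needs, since there the pieces are polynomials $g_i$ with $\superdiff{g_i}(x)=\{Dg_i(x)\}$ by \rref{lem:subsuperdifferentials}.

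Your suspicion that the merely continuous case is the real obstacle is not just well-founded but understated: the stated identity is \emph{false} without differentiability of the pieces, so the separation/test-function program you sketch cannot succeed in general. Take $u_1(y)=\abs{y}+y$ and $u_2(y)=\abs{y}-y$ on $\reals$ at $x=0$. Both pieces are active and $u=\min(u_1,u_2)\equiv0$, so $\superdiff{u}(0)=\{0\}$ by \rref{def:subsuperdifferentials}; but $u_1,u_2$ are convex with kinks at $0$, so $\superdiff{u_1}(0)=\superdiff{u_2}(0)=\emptyset$ and the convex hull of their union is empty. The lemma must therefore be read with (continuously) differentiable pieces, which is how it is applied; under that reading your proof is correct, and its ``$\subseteq$'' half is more careful than the paper's.
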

\begin{proofatend}
\begin{inparaenum}
\item[``$\supseteq$'':]
Let \(p\in\superdiff{u_i}(x)\) for some $i$ with \(u_i(x)=u(x)\), then \(p\in\superdiff{u}(x)\) because: 
\[
 \limsup_{y\to x} \frac{u(y)-u(x)-p\stimes(y-x)}{\norm{y-x}} 
\leq  \limsup_{y\to x} \frac{u_i(y)-u_i(x)-p\stimes(y-x)}{\norm{y-x}} \leq 0
\]
because \(u_i(x)=u(x)\) and \(u(y)\leq u_i(y)\) for all $y$ and $i$.
Since $\superdiff{u}(x)$ is convex \cite[\S6.4.3]{AubinFrankowska}, $\superdiff{u}(x)$ thus contains the convex hull of all such vectors \(p\in\superdiff{u_i}(x)\) for some $i$ with \(u_i(x)=u(x)\), which results in the right-hand side.

\item[``$\subseteq$'':]
Consider any $x$ and let \(u(x)=u_i(x)\).
Let \(p\in\superdiff{u}(x)\), i.e.\ for all $y$ close to $x$:
\[p\stimes(y-x) \leq u(y)-u(x)
= \min_j u_j(y) - u_i(x)
\leq u_i(y)-u_i(x)
\qedhere
\]
\end{inparaenum}
\end{proofatend}

Subdifferentials and superdifferentials enable a conceptually easy definition of viscosity solutions of partial differential equations: subsolutions via lower bounds for all superdifferentials and supersolution via upper bounds for all subdifferentials.
\begin{definition}[Viscosity solution] \label{def:viscosity}
  Let \(F:\Omega\times\reals\times\reals^n\to\reals\) be continuous with an open \(\Omega\subseteq\reals^n\).
  A continuous function \(u\in\continuouss{\Omega}{}\) is a \emph{viscosity solution} of the first-order \emph{partial differential equation} (PDE)
  \begin{equation}
  F(x,u(x),Du (x)) = 0
  \label{eq:PDE}
  \end{equation}
  for \emph{terminal} boundary problems
  iff it satisfies both:
  \begin{align*}
   \text{subsolution:}~&F(x,u(x),p) \geq 0 ~\text{for all}~ p\in\superdiff{u}(x) ~\text{and all}~ x\in\Omega\\
   \text{supersolution:}~&F(x,u(x),p) \leq 0 ~\text{for all}~ p\in\subdiff{u}(x) ~\text{and all}~ x\in\Omega
   \end{align*}
\end{definition}
By \rref{lem:subsuperdifferentials}, viscosity solutions are classical solutions, i.e.\ equation \rref{eq:PDE} holds for the gradient $Du(x)$, at points $x$ where they are actually differentiable.
Otherwise only the viscosity inequalities hold for the super- and subdifferentials, respectively.
PDEs are not extensional, though: \rref{eq:PDE} and \(-F(x,u,D u)=0\) can have different viscosity solutions \cite[Remark 4.4]{Bressan:HJ}, yet have the same classical solutions (if any).

The partial differential equation of relevance for differential games is the terminal\footnote{\label{foot:initial-terminal}%
Signs in \emph{terminal} value problems reverse compared to \emph{initial} value problems \cite[Chapter 10.3]{DBLP:journals/indianam/EvansSouganidis84,Evans}.
A terminal subsolution $u$ of \rref{eq:HamiltonJacobi-TVP} induces a corresponding initial subsolution \(w(t,x)= u(T-t,x)\) of \(w_t-H(T-t,x,Dw)=0, w(0,x)=g(x)\) and likewise for supersolutions.
}
evolutionary \emph{Hamilton-Jacobi equation}
  \begin{subnumcases}{\label{eq:HamiltonJacobi-TVP}}
  u_t + H(t,x,Du)=0 &\text{in}~\((0,T)\times\reals^n\) \label{eq:HamiltonJacobi-PDE}\\
  u(T,x) = g(x) &\text{in}~\(\reals^n\)
  \end{subnumcases}
with a continuous \emph{Hamiltonian} \(H:[0,T]\times\reals^n\times\reals^n\to\reals\) and
  a bounded and uniformly continuous \(g:\reals^n\to\reals\) as terminal value at $T$.
  Bounded, uniformly continuous solutions suffice here by \rref{thm:UV-bounded-Lipschitz}.
  By \rref{eq:HamiltonJacobi-PDE}, the Hamiltonian $H$ describes the time-derivative $u_t$ of $u$ but its value depends on the space-derivatives $Du=D_x u$ of $u$.

Comparison theorems \cite[Thm.\,5.3]{Bressan:HJ}\cite[Thm.\,5.2]{Barles13}\cite[\S2, Thm.\,3.3]{BardiRP99} that propagate inequalities$^{\ref{foot:initial-terminal}}$ of sub- and supersolutions on the boundary to inequalities on the whole domain are the major workhorses for PDEs.

\begin{theorem}[{Comparison}]
  \label{thm:comparison}%
  Let $u,v$ be bounded, uniformly continuous sub- and supersolutions of \rref{eq:HamiltonJacobi-PDE} and \(u \leq v\) on \(\{T\}\times\reals^n\), then \(u\leq v\) on \([0,T]\times\reals^n\)
  provided at least one of the following conditions is true:
  \begin{enumerate}

  \item \(H\) is Lipschitz, i.e.\ there is a $C$ such that
  \begin{align*}
  \norm{H(t,x,p)-H(t,x,q)} &\leq C\norm{p-q}\\
  \norm{H(t,x,p)-H(s,y,p)} &\leq C(\norm{t-s}+\norm{x-y})(1+\norm{p})
  \end{align*}
  \item $u$ is Lipschitz in $x$ uniformly in $t$, i.e.
  \(\norm{u(t,x)-u(t,y)}\leq L\norm{x-y}\) for all $x,y,t$
  \item $v$ is Lipschitz in $x$ uniformly in $t$
  \end{enumerate}
\end{theorem}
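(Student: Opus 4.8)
The plan is to prove this by the doubling-of-variables technique of Crandall and Lions, arguing by contradiction. Suppose instead that $M \mdefeq \sup_{[0,T]\times\reals^n}(u-v)>0$. Since $u\leq v$ on $\{T\}\times\reals^n$, the terminal slice contributes nothing positive, so the positive supremum is approached at times $t<T$. First I would reduce to a \emph{strict} subsolution: for $\sigma>0$ the perturbation $u_\sigma(t,x)\mdefeq u(t,x)-\sigma(T-t)$ still satisfies $u_\sigma\leq v$ on $\{T\}\times\reals^n$, and because $-\sigma(T-t)$ has time-derivative $+\sigma$, every $(a,p)\in\superdiff{u_\sigma}(t,x)$ satisfies $(a-\sigma,p)\in\superdiff{u}(t,x)$ by the test-function characterization in \rref{lem:subsuperdifferentials}, so the subsolution inequality upgrades to $a+H(t,x,p)\geq\sigma$. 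For $\sigma$ small, $\sup(u_\sigma-v)>0$ still holds, and a standard time-penalization confines the maximizing times to the open interval $(0,T)$ where the equation is available; so I may assume $u$ is this strict subsolution.

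Next I would double the space and time variables and localize in space to cope with the unbounded domain. For parameters $\varepsilon,\beta>0$ set
\[
\Phi(t,x,s,y)\mdefeq u(t,x)-v(s,y)-\frac{\norm{x-y}^2+\norm{t-s}^2}{2\varepsilon}-\beta(\norm{x}^2+\norm{y}^2).
\]
Boundedness of $u,v$ together with the coercive term $-\beta(\norm{x}^2+\norm{y}^2)$ forces $\Phi$ to attain its maximum at some point $(t_\varepsilon,x_\varepsilon,s_\varepsilon,y_\varepsilon)$. The classical penalization estimates, which follow from the uniform continuity of $u$ and $v$, give $\norm{x_\varepsilon-y_\varepsilon}^2/\varepsilon\to0$ and $\norm{t_\varepsilon-s_\varepsilon}^2/\varepsilon\to0$ as $\varepsilon\to0$, while $\beta\norm{x_\varepsilon}^2$ and $\beta\norm{y_\varepsilon}^2$ stay bounded; for $\beta$ small the maximum value stays positive, so the maximizers cluster where $u-v>0$ and hence $0<t_\varepsilon<T$, placing the argument in the interior where the equation holds.

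I would then read off the viscosity inequalities. Freezing $(s,y)=(s_\varepsilon,y_\varepsilon)$, the map $(t,x)\mapsto\Phi(t,x,s_\varepsilon,y_\varepsilon)$ is maximized at $(t_\varepsilon,x_\varepsilon)$; since it equals $u(t,x)$ minus a smooth function, \rref{lem:subsuperdifferentials} places the gradient of that smooth function in $\superdiff{u}(t_\varepsilon,x_\varepsilon)$. Writing $a\mdefeq(t_\varepsilon-s_\varepsilon)/\varepsilon$ and $p\mdefeq(x_\varepsilon-y_\varepsilon)/\varepsilon+2\beta x_\varepsilon$, the strict subsolution inequality yields $a+H(t_\varepsilon,x_\varepsilon,p)\geq\sigma$. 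Symmetrically, freezing $(t,x)=(t_\varepsilon,x_\varepsilon)$ maximizes $(s,y)\mapsto\Phi(t_\varepsilon,x_\varepsilon,s,y)$, i.e.\ minimizes $v(s,y)$ plus a smooth function, so \rref{lem:subsuperdifferentials} places the corresponding gradient in $\subdiff{v}(s_\varepsilon,y_\varepsilon)$; with $q\mdefeq(x_\varepsilon-y_\varepsilon)/\varepsilon-2\beta y_\varepsilon$ and the \emph{same} time-slot $a$, the supersolution inequality gives $a+H(s_\varepsilon,y_\varepsilon,q)\leq0$. Subtracting eliminates $a$ and leaves
\[
\sigma\leq H(t_\varepsilon,x_\varepsilon,p)-H(s_\varepsilon,y_\varepsilon,q).
\]

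The main obstacle is to bound this right-hand side and send $\varepsilon\to0$, then $\beta\to0$, to contradict $\sigma>0$; this is exactly where the three hypotheses enter. Under (1), splitting off the momentum and using the Lipschitz structure of $H$ bounds the right-hand side by $C(\norm{t_\varepsilon-s_\varepsilon}+\norm{x_\varepsilon-y_\varepsilon})(1+\norm{p})+C\norm{p-q}$; the decisive cancellation is $\norm{x_\varepsilon-y_\varepsilon}(1+\norm{p})\leq\norm{x_\varepsilon-y_\varepsilon}+\norm{x_\varepsilon-y_\varepsilon}^2/\varepsilon+2\beta\norm{x_\varepsilon-y_\varepsilon}\norm{x_\varepsilon}\to0$, while $\norm{p-q}=2\beta\norm{x_\varepsilon+y_\varepsilon}\to0$ after $\beta\to0$. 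Under (2) or (3), rather than controlling growth in $p$, the Lipschitz bound on $u$ (resp.\ $v$) in $x$ yields an a priori momentum bound $\norm{x_\varepsilon-y_\varepsilon}/\varepsilon\leq L+O(\beta)$ — obtained by comparing the value of $\Phi$ at its maximizer against a spatially shifted competitor — so that $p$ and $q$ remain in a fixed compact set, whereupon the mere uniform continuity of $H$ there (with $\norm{t_\varepsilon-s_\varepsilon},\norm{x_\varepsilon-y_\varepsilon}\to0$ and $\norm{p-q}\to0$) drives the right-hand side to $0$. In every case the right-hand side tends to $0$, contradicting the strict margin $\sigma>0$; hence $M\leq0$, i.e.\ $u\leq v$ on $[0,T]\times\reals^n$. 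The technical points I expect to wrestle with are the penalization estimate $\norm{x_\varepsilon-y_\varepsilon}^2/\varepsilon\to0$ together with interiority of the maximizing time, and, for (2)/(3), extracting the a priori momentum bound; the order of limits ($\varepsilon\to0$ before $\beta\to0$) must be respected because of the noncompact domain.
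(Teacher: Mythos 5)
This theorem is not proved in the paper at all: it is imported from the literature (the citations to Bressan, Barles, and Bardi et al.\ attached to the statement), so there is no in-paper proof to compare against. Your argument is the standard Crandall--Lions doubling-of-variables proof, which is essentially what those references do, and the skeleton is right: strictification of the subsolution, the quadratic penalization $\Phi$ with spatial localization $-\beta(\norm{x}^2+\norm{y}^2)$, reading off $(a,p)\in\superdiff{u}(t_\varepsilon,x_\varepsilon)$ and $(a,q)\in\subdiff{v}(s_\varepsilon,y_\varepsilon)$ with the same time slot $a$ via \rref{lem:subsuperdifferentials}, subtracting to get $\sigma\leq H(t_\varepsilon,x_\varepsilon,p)-H(s_\varepsilon,y_\varepsilon,q)$, and estimating. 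Under condition (1) your cancellation $\norm{x_\varepsilon-y_\varepsilon}(1+\norm{p})\to0$ is exactly what the structure condition is for, and the order of limits ($\varepsilon\to0$ before $\beta\to0$) is handled correctly.

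Two points are glossed. First, the perturbation $u-\sigma(T-t)$ strictifies the subsolution inequality but does nothing to keep the maximizer of $\Phi$ away from $t=0$, which is the boundary where the equation is unavailable in this terminal-value setting (the data at $t=T$ excludes the other end, as you note). You need either a penalization that blows up at $t=0$, e.g.\ $u-\sigma/t$, whose time derivative $\sigma/t^{2}\geq\sigma/T^{2}$ delivers the strictness and the interiority in one stroke, or the standard lemma that the viscosity inequality extends to the boundary slice $t=0$ for solutions continuous up to it; invoking an unspecified ``standard time-penalization'' leaves unverified that the perturbation is compatible with the sign convention here (subsolution means $a+H\geq0$, so only perturbations with nonnegative time derivative are admissible). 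Second, in cases (2)/(3) the final step ``uniform continuity of $H$ there drives the right-hand side to $0$'' does not follow from the mere continuity of $H$ assumed in \rref{def:viscosity}: the maximizers $x_\varepsilon,y_\varepsilon$ are only confined to a ball of radius $O(\beta^{-1/2})$, which grows as $\beta\to0$, so the modulus of continuity of $H$ on that compact set can degenerate exactly as fast as $\norm{p-q}=O(\sqrt{\beta})$ shrinks. The cited sources assume for this variant that $H$ is uniformly continuous on $[0,T]\times\reals^n\times B_R$ for each $R$, uniformly over the unbounded spatial variable; that hypothesis should be made explicit. (It is harmless for the paper, since the Isaacs Hamiltonians actually used satisfy condition (1) outright.)
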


\noindent
Generalizations to bounded open $\Omega$ or to non-Lipschitz Hamiltonians satisfying modules of continuity exist \cite[Thm.\,5.2]{Barles13}.
Comparison theorems are powerful but limited to comparing subsolution $u$ and supersolutions $v$ of a single PDE.
Fortunately, they can be generalized to a monotone comparison principle for two different PDEs with related Hamiltonians.
If $u$ is growing faster than $v$ but ends below, so \(u\leq v\) at $T$, then $u$ must have been smaller all along, which remains true for viscosity solutions:
\begin{corollary}[Monotone comparison]
  \label{cor:monotone-comparison}%
  Assume one of the conditions of \rref{thm:comparison} holds or that Hamiltonian $J$ is Lipschitz.
  Let $u$ be a viscosity subsolution of \rref{eq:HamiltonJacobi-PDE}
  and let $v$ be a viscosity supersolution of
  \[
  v_t + J(t,x,Dv)=0 \quad\text{in}~(0,T)\times\reals^n
  \]
  If \(u \leq v\) on \(\{T\}\times\reals^n\) and \(H\leq J\), then \(u \leq v\) on \([0,T]\times\reals^n\).
\end{corollary}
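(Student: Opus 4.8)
The plan is to use the monotonicity hypothesis $H\leq J$ to transfer the viscosity sub/supersolution properties between the two Hamilton-Jacobi equations, and then to invoke the single-PDE comparison \rref{thm:comparison}. Writing points of $(0,T)\times\reals^n$ as $(t,x)$ and decomposing a gradient vector $p\in\reals^{n+1}$ as $p=(p_t,p_x)$ into its time-component $p_t$ and space-component $p_x$, the operator associated with \rref{eq:HamiltonJacobi-PDE} is $F((t,x),u,p)=p_t+H(t,x,p_x)$. By \rref{def:viscosity}, the subsolution property of $u$ for the $H$-equation thus reads $p_t+H(t,x,p_x)\geq0$ for all $p\in\superdiff{u}(t,x)$, while the supersolution property of $v$ for the $J$-equation reads $p_t+J(t,x,p_x)\leq0$ for all $p\in\subdiff{v}(t,x)$.

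First I would upgrade $u$ to a subsolution of the $J$-equation: for every $p\in\superdiff{u}(t,x)$, since $H\leq J$ pointwise, $p_t+J(t,x,p_x)\geq p_t+H(t,x,p_x)\geq0$, which is exactly the subsolution inequality for $J$. Symmetrically, $v$ is automatically a supersolution of the $H$-equation, because for every $p\in\subdiff{v}(t,x)$ we have $p_t+H(t,x,p_x)\leq p_t+J(t,x,p_x)\leq0$. Hence $u$ is a subsolution of both equations and $v$ is a supersolution of both. The point is that the lower bound defining a subsolution survives enlarging the Hamiltonian, while the upper bound defining a supersolution survives shrinking it, which is precisely why the monotonicity direction $H\leq J$ is compatible with the inequality directions of \rref{def:viscosity}.

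With this in hand I would split on which hypothesis is available and apply \rref{thm:comparison} to whichever single PDE makes one of its conditions fire. If condition~1 of \rref{thm:comparison} holds for $H$ (that is, $H$ is Lipschitz), apply \rref{thm:comparison} to the $H$-equation, using $u$ as subsolution (by assumption) and $v$ as supersolution (just established). If instead $J$ is Lipschitz, apply \rref{thm:comparison} to the $J$-equation, using $u$ as subsolution (just established) and $v$ as supersolution (by assumption). If the available hypothesis is condition~2 or~3 (Lipschitz-in-$x$, uniformly in $t$, of $u$ or of $v$), this condition is independent of the Hamiltonian, so either equation works. In every case the boundary hypothesis $u\leq v$ on $\{T\}\times\reals^n$, together with the boundedness and uniform continuity inherited for the relevant sub- and supersolutions, lets \rref{thm:comparison} conclude $u\leq v$ on $[0,T]\times\reals^n$, as desired.

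The entire mathematical content is the one-line sign check of the second paragraph; the rest is an invocation of comparison on a common equation. The only thing to watch, and the reason there is anything to prove at all, is the bookkeeping that the specific comparison condition supplied actually applies to the PDE we reduce to. This is exactly why the corollary lists \emph{$J$ Lipschitz} as a separate alternative to the conditions of \rref{thm:comparison}: it covers the case where only the target Hamiltonian $J$, rather than $H$, is known to be Lipschitz, in which case we must reduce to the $J$-equation rather than the $H$-equation.
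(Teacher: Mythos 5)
Your proof is correct and follows essentially the same route as the paper's: use $H\leq J$ to transfer the supersolution property of $v$ down to the $H$-equation (respectively the subsolution property of $u$ up to the $J$-equation), then invoke \rref{thm:comparison} on whichever single PDE the available hypothesis applies to. The only difference is cosmetic — you perform both transfers up front and then case-split, whereas the paper performs only the one transfer needed in each case.
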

\begin{proofatend}
$v$ is a supersolution of \(v_t + J(t,x,Dv)=0\)  if:
\begin{equation}
\tau + J(t,x,p)\leq0 \quad \mforallr{(\tau,p)\in\subdiff{v}(x)}
\label{eq:HamiltonJacobi-supersolution-J}
\end{equation}
Thus, $v$ is also a supersolution of
\(v_t + H(t,x,Dv)=0\), i.e.\
\[
\tau + H(t,x,p)\leq0 \quad \mforallr{(\tau,p)\in\subdiff{v}(x)}
\]
which follows from \rref{eq:HamiltonJacobi-supersolution-J} using \(H\leq J\).
In the case where the conditions of \rref{thm:comparison} are satisfied, this implies \(u\leq v\) by \rref{thm:comparison}.
Otherwise $J$ is Lipschitz, and the proof proceeds as follows.
First, $u$ is a subsolution of \(u_t + H(t,x,Du)=0\) if:
\begin{equation}
\tau + H(t,x,p)\geq0 \quad \mforallr{(\tau,p)\in\superdiff{u}(x)}
\label{eq:HamiltonJacobi-subsolution}
\end{equation}
Thus, $u$ is also a subsolution of
\(u_t + J(t,x,Du)=0\), i.e.\
\[
\tau + J(t,x,p)\geq0 \quad \mforallr{(\tau,p)\in\superdiff{u}(x)}
\]
which follows from \rref{eq:HamiltonJacobi-subsolution} using \(J\geq H\).
As $J$ is Lipschitz, \rref{thm:comparison} implies \(u\leq v\).
\end{proofatend}

\subsection{Isaacs Equations} \label{sec:differential-game-Isaacs}

Seminal results \cite{DBLP:journals/diffeq/Souganidis85,DBLP:journals/diffeq/BarronEJ84,DBLP:journals/indianam/EvansSouganidis84} characterize the upper and lower values of differential games as weak solutions of the Isaacs partial differential equation \cite{Isaacs:DiffGames}, which is a Hamilton-Jacobi PDE.
Isaacs intuitively identified these PDEs for differential games, which were only justified to be in correct alignment with differential games after an appropriate notion of weak solutions had been developed decades later \cite{DBLP:journals/diffeq/Souganidis85}.
For reference, \rref{app:Isaacs} provides a proof of \rref{thm:differential-game-Isaacs} for the differential games in this article.

\begin{theorem}[Isaacs PDE {\cite[Thm 4.1]{DBLP:journals/indianam/EvansSouganidis84}}] \label{thm:differential-game-Isaacs}
  The lower value $V$ from \rref{eq:lower-value} of differential game \rref{eq:diffgame} is the unique bounded, uniformly continuous viscosity solution of the lower Isaacs partial differential equation:
  \begin{align}
  &
  \begin{cases}
  V_t + H^-(t,x,D V) = 0 & (0\leq t\leq T, x\in\reals^n)\\
  V(T,x) = g(x) & (x\in\reals^n)
  \end{cases}
  \label{eq:lower-Isaacs}
  \\
  &H^-(t,x,p) = \max_{y\in Y} \min_{z\in Z} f(t,x,y,z)\stimes p \runcost{+ h(t,x,y,z)}
  \notag
  \end{align}
  The upper value $U$ from \rref{eq:upper-value} is the unique such solution of the upper Isaacs equation:
  \begin{align}
  &
  \begin{cases}
  U_t + H^+(t,x,D U) = 0 & (0\leq t\leq T, x\in\reals^n)\\
  U(T,x) = g(x) & (x\in\reals^n)
  \end{cases}
  \label{eq:upper-Isaacs}
  \\
  &H^+(t,x,p) = \min_{z\in Z} \max_{y\in Y} f(t,x,y,z)\stimes p \runcost{+ h(t,x,y,z)}
  \notag
  \end{align}
\end{theorem}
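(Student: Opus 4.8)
The plan is to verify, one at a time, the four defining properties of the lower value $V$: boundedness and uniform continuity, the terminal condition, the viscosity sub/supersolution inequalities, and uniqueness; the argument for $U$ is then identical with the roles of $\max$ and $\min$ interchanged. Three of these are immediate. Boundedness and uniform continuity of $V$ are exactly \rref{thm:UV-bounded-Lipschitz}, which moreover yields that $V$ is Lipschitz in $\xi$ uniformly in the time variable. The terminal condition $V(T,x)=g(x)$ drops out of the definition \rref{eq:lower-value} by taking $\eta=T$, where no control is exercised and the payoff is simply $g(\xi)$. Uniqueness follows from the comparison principle \rref{thm:comparison}: if $W$ is any second bounded, uniformly continuous viscosity solution with $W(T,\cdot)=g$, then viewing $V$ as a subsolution and $W$ as a supersolution that agree at $T$, condition~2 of \rref{thm:comparison} applies because $V$ is Lipschitz in $x$, giving $V\leq W$ on $[0,T]\times\reals^n$; interchanging the roles and invoking condition~3 (again using that $V$ is Lipschitz) gives $W\leq V$, hence $V=W$.

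The substance is therefore the viscosity solution property, which I would extract from the dynamic programming optimality condition \rref{eq:V-dynamic-programming}, already available as a hypothesis. Fix an interior point $(t_0,x_0)$ of $\Omega\mdefeq(0,T)\times\reals^n$. By the test-function characterization in \rref{lem:subsuperdifferentials}, it suffices to show: (i) for every $\phi\in\continuouss[1]{\Omega}{}$ such that $V-\phi$ has a local maximum at $(t_0,x_0)$, writing $(\tau,p)=D\phi(t_0,x_0)$, one has $\tau+H^-(t_0,x_0,p)\geq0$; and (ii) for every such $\phi$ with a local \emph{minimum} at $(t_0,x_0)$ one has $\tau+H^-(t_0,x_0,p)\leq0$. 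In both cases I would normalize $\phi(t_0,x_0)=V(t_0,x_0)$, substitute the local inequality between $V$ and $\phi$ into \rref{eq:V-dynamic-programming} over a short horizon $[t_0,t_0+\sigma]$, Taylor-expand $\phi$ along the response $x(s)=\response[f]{s}{x_0}{y}{\beta(y)}$ via the fundamental theorem of calculus (so that $\phi(t_0+\sigma,x(t_0+\sigma))-\phi(t_0,x_0)=\int_{t_0}^{t_0+\sigma}\big(\phi_t+f\stimes D\phi\big)\,ds$), divide by $\sigma$, and let $\sigma\downarrow0$, exploiting uniform continuity of $f$ and the fact that the responses stay in a ball of radius $O(\sigma)$.

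For the supersolution inequality~(ii) I would use the side of the game in which Demon commits to a constant control: freezing $y\equiv y_0\in Y$ over the short interval while the minimizing nonanticipative strategy responds, and passing to the limit, produces $\tau+\min_{z\in Z}f(t_0,x_0,y_0,z)\stimes p\leq0$; since $y_0\in Y$ is arbitrary and $Y$ is compact, maximizing over $y$ gives $\tau+H^-(t_0,x_0,p)\leq0$. For the subsolution inequality~(i) I would instead use that a nonanticipative strategy for $z$ can react to Demon's instantaneous input, so that an $\varepsilon$-optimal pointwise response collapses in the limit to the inner minimization over $z$, while Demon's freedom to choose $y$ yields the outer maximization; letting $\varepsilon,\sigma\downarrow0$ gives $\tau+\max_{y\in Y}\min_{z\in Z}f(t_0,x_0,y,z)\stimes p\geq0$.

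I expect the main obstacle to be precisely this last passage, where the infinite-dimensional $\inf_\beta\sup_y$ over nonanticipative strategies and measurable controls must be reduced to the finite-dimensional pointwise Hamiltonian $\max_{y}\min_{z}f\stimes p$ in the infinitesimal limit. The delicate points are that the near-optimal responses realizing the inner $\min_z$ must be selected measurably in time (a measurable-selection argument, of the kind already used for the Borel swap in \rref{lem:Borel-swap}), and that the error incurred by replacing $f(s,x(s),y(s),z(s))$ by its value frozen at $(t_0,x_0)$ vanishes uniformly as $\sigma\downarrow0$ --- which is exactly where boundedness and the uniform Lipschitz/continuity hypotheses on $f$ from \rref{def:diffgame} are indispensable. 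The order of $\max$ over $y$ and $\min$ over $z$ in $H^-$ is the infinitesimal shadow of the informational advantage granted to the $z$-player by the nonanticipative strategies in \rref{eq:lower-value}, and keeping that ordering straight through the limit is the crux. The proof for $U$ is verbatim with $\sup_\alpha\inf_z$ in place, producing the reversed order $\min_z\max_y$ in $H^+$.
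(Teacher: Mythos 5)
Your plan matches the paper's own proof in \rref{app:Isaacs}: both extract the viscosity inequalities from the dynamic programming principle via the test-function characterization of \rref{lem:subsuperdifferentials} and the fundamental theorem of calculus applied to $\phi$ along the response, handling one inequality with a single frozen control and the other with a finite-cover/piecewise-constant (measurable) pointwise selection assembled into a nonanticipative strategy -- exactly the content of the paper's \rref{lem:U-v-difference} and \rref{lem:4.3}, merely phrased for $V$ rather than $U$ and as a direct limit rather than a contradiction. You also correctly supply the regularity and uniqueness parts (from \rref{thm:UV-bounded-Lipschitz} and \rref{thm:comparison}) that the appendix leaves implicit, so the proposal is sound and essentially the same argument.
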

The first equation of \rref{lem:Borel-swap} illustrates the swapped quantification order of lower value \rref{eq:lower-value} compared to its Hamiltonian \rref{eq:lower-Isaacs} due to different types.
The second equation of \rref{lem:Borel-swap} similarly explains the quantifier swap from upper value  \rref{eq:upper-value} compared to its Hamiltonian \rref{eq:upper-Isaacs}.
The following result has been reported without a detailed proof, but is straightforward with the help of monotone comparisons (\rref{cor:monotone-comparison}).
\begin{corollary}[{Minimax \cite[Corollary 4.2]{DBLP:journals/indianam/EvansSouganidis84}}] \label{cor:minimax}
  \(V\leq U\) holds always.
  If
  \(H^+(t,x,p) = H^-(t,x,p)\)
  for all $0\leq t\leq T, x,p\in\reals^n$, then
  \(V=U\), i.e.\ the game has value.
\end{corollary}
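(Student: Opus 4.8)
The plan is to derive both claims from the Isaacs characterization (\rref{thm:differential-game-Isaacs}) and the monotone comparison principle (\rref{cor:monotone-comparison}), the only genuinely game-theoretic input being the elementary minimax inequality relating the two Hamiltonians. First I would observe that for every fixed \((t,x,p)\) the maximin is dominated by the minimax,
\(H^-(t,x,p) = \max_{y\in Y} \min_{z\in Z} f(t,x,y,z)\stimes p \leq \min_{z\in Z} \max_{y\in Y} f(t,x,y,z)\stimes p = H^+(t,x,p)\),
which is the standard fact that \(\max_y\min_z \phi \leq \min_z\max_y \phi\) for any \(\phi\), applied pointwise to \(\phi(y,z) = f(t,x,y,z)\stimes p\); the extrema exist because \(Y,Z\) are compact and \(f\) is continuous. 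Hence \(H^-\leq H^+\) everywhere.

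Next, by \rref{thm:differential-game-Isaacs} the lower value \(V\) is a viscosity solution, hence in particular a viscosity subsolution, of \(V_t + H^-(t,x,DV)=0\), while the upper value \(U\) is a viscosity solution, hence a supersolution, of \(U_t + H^+(t,x,DU)=0\). Both share the terminal data \(V(T,x)=g(x)=U(T,x)\), so \(V\leq U\) (with equality) on \(\{T\}\times\reals^n\). I would then invoke \rref{cor:monotone-comparison} with \(u\mdefeq V\), \(v\mdefeq U\), \(H\mdefeq H^-\), and \(J\mdefeq H^+\): its applicability hypothesis is met because \(U\) is Lipschitz in \(x\) uniformly in \(t\) by \rref{thm:UV-bounded-Lipschitz}, so condition~3 of \rref{thm:comparison} holds. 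Since \(H^-\leq H^+\) and \(V\leq U\) at \(T\), the monotone comparison yields \(V\leq U\) on all of \([0,T]\times\reals^n\); specializing to \(t=0\) gives \(V(0,\xi)\leq U(0,\xi)\), the first assertion.

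Finally, for the value claim, if \(H^+(t,x,p)=H^-(t,x,p)\) for all \(t,x,p\), then the lower Isaacs equation \rref{eq:lower-Isaacs} and the upper Isaacs equation \rref{eq:upper-Isaacs} are literally the same boundary problem. By the uniqueness statement in \rref{thm:differential-game-Isaacs}, that common problem has exactly one bounded, uniformly continuous viscosity solution; since both \(V\) and \(U\) are such solutions, \(V=U\), i.e.\ the game has a value. The one step deserving care is confirming that the comparison machinery applies at all, namely that \(H^-\leq H^+\) together with the Lipschitz regularity of \(V,U\) from \rref{thm:UV-bounded-Lipschitz} fulfils the hypotheses of \rref{cor:monotone-comparison}; everything else is either the elementary minimax inequality or a direct appeal to the uniqueness already packaged in \rref{thm:differential-game-Isaacs}.
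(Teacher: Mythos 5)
Your proof is correct and takes essentially the same route as the paper: the pointwise minimax inequality \(H^-\leq H^+\) combined with the monotone comparison principle \rref{cor:monotone-comparison} yields \(V\leq U\), and you even verify the applicability hypothesis (Lipschitz regularity from \rref{thm:UV-bounded-Lipschitz}) that the paper leaves implicit. The only divergence is in the equality case: the paper applies \rref{cor:monotone-comparison} a second time with the roles reversed to obtain \(U\leq V\), whereas you invoke the uniqueness clause of \rref{thm:differential-game-Isaacs} for the now-identical boundary problem; both steps are valid and rest on the same underlying comparison machinery.
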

\begin{proofatend}
\(H^- \leq H^+\) holds by definition, so monotone comparison \rref{cor:monotone-comparison} implies \(V\leq U\).
If \(H^- = H^+\) holds, too, then \rref{cor:monotone-comparison} also implies \(U\leq V\).
\end{proofatend}
The fact \(V\leq U\) follows from the observation that the player who chooses last is at an advantage for optimizing the resulting value.
The assumption \(H^+(t,x,p) = H^-(t,x,p)\) corresponds to the Hamiltonians being independent of the order of choice, which implies \(V=U\) so that the order of choice in the whole differential game is irrelevant.

If one fixed finite time horizon $T$ were sufficient, \rref{thm:differential-game-Isaacs} could be used with \rref{lem:lower-value} and~\ref{lem:V>=0} to answer the question of the existence of winning strategies for this time horizon $T$ \emph{if} its PDE \rref{eq:lower-Isaacs} can be solved.
Numerical approximation schemes for \rref{eq:lower-Isaacs} are, indeed, one way of answering game questions, but they are inherently subject to discrete approximation errors that may lead to erroneous decisions that have not yet been overcome \cite{DBLP:journals/tac/MitchellBT05}.
By contrast, \irref{diffgameind} provides a sound way of proving the existence of winning strategies even for all time horizons.
Yet, proving proof rule \irref{diffgameind} itself to be sound requires more effort, which the subsequent sections pursue.

\subsection{Frozen Games} \label{sec:frozen-games}

For a fixed time horizon $T$, the results from \rref {sec:differential-game-value} and \ref{sec:differential-game-Isaacs} characterize winning regions of differential games by signs of the solutions of their corresponding PDEs, but that only helps if Angel commits to a fixed time horizon $T$ and maximal stopping time \(\zeta=T\) by advance notice.
Lifting these characterizations to the case where Angel decides to stop early by choosing \(\zeta<T\) is possible by repeating the above analysis for minimum payoff games \cite{Serea02}.
This leads to less convenient PDEs, though.

A more modular way is to add an extra freeze input \cite{DBLP:journals/tac/MitchellBT05} for Angel player, which she can control to slow down or lock the system in place.
A freeze factor \(c\in[0,1]\) multiplies the differential game and is under Angel's control, which will keep the system unmodified (\(c=1\)), in stasis (\(c=0\)), or in slow motion (\(0<c<1\)).
Angel controls both time $\zeta$ and freeze factor $c$.
So the frozen system does not actually need early stopping, because she can freeze it with control choice \(c=0\) instead in order to lock its state in place. The quantifier for stopping time $\zeta$ in \rref{def:DHG-semantics} is, thus, irrelevant:

\begin{lemma}[Frozen values] \label{lem:nostopping-whenfrozen}
  For any atomically open formula $F$ it is the case that:\\
  \m{\iget[state]{\I} \in \dstrategyfor[\pdiffgame{\D{x}=c\genDG{x}{y}{z}}{y\in Y}{z\in Z\land c\in[0,1]}]{\imodel{\I}{F}}}
  iff its lower value satisfies \(V(0,\iget[state]{\I})>0\) for all \(T\geq0\)
  with the arithmetization \(g\mdefeq\arithmetize{F}\) as payoff.
  Accordingly for atomically closed $F$.
\end{lemma}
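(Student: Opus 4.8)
The plan is to strip the winning-region condition down to a sign condition on the lower value $V$, using the freeze factor $c$ to discard the stopping time $\zeta$ and the value-sign lemmas to connect strategies with values. First I would unfold Demon's winning region for the frozen game by \rref{def:DHG-semantics} and rewrite $\imodel{\I}{F}$ via \rref{lem:arithmetize} as $\{\arithmetize{F}>0\}$ for atomically open $F$ (resp.\ $\{\arithmetize{F}\ge0\}$ for atomically closed $F$), with $g\mdefeq\arithmetize{F}$. Membership of $\iget[state]{\I}$ then reads: for every horizon $T\ge0$ and every nonanticipative Angel strategy $\beta\in\stratA{\eta}$ (which now also produces the freeze factor $c$) there is a Demon control $y\in\controlD{\eta}$ with $g(\response[f]{\zeta}{\iget[state]{\I}}{y}{\beta(y)})>0$ for all $0\le\zeta\le T$ (resp.\ $\ge0$).

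The first step is to remove the quantifier $\zeta$ at each fixed $T$, which is exactly what the freeze factor buys. One direction is immediate by taking $\zeta=T$. For the converse I would have Angel answer a fixed $\beta$ with the modified strategy $\beta'$ that copies $\beta$ until the observed response first leaves $\imodel{\I}{F}$ --- i.e.\ first reaches the closed set $\{g\le0\}$ in the open case --- and from then on sets $c=0$ to lock the state. Because this freeze decision reads only the past trajectory, $\beta'$ is again nonanticipative; and if, for the original $\beta$, every Demon control let the response dip out of $\imodel{\I}{F}$ at some time, then under $\beta'$ the terminal state at $T$ is that dip state, so $g(\response[f]{T}{\iget[state]{\I}}{y}{\beta'(y)})\le0$ for all $y$, contradicting the terminal statement applied to $\beta'$. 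Hence the winning region coincides, at each $T$, with the terminal condition $\forall\beta\,\exists y\; g(\response[f]{T}{\iget[state]{\I}}{y}{\beta(y)})>0$ (resp.\ $\ge0$).

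For atomically closed $F$ the terminal condition is bridged directly by \rref{lem:V>=0}: $\forall\beta\,\exists y\; g(\response[f]{T}{\iget[state]{\I}}{y}{\beta(y)})\ge0$ is equivalent to $V(0,\iget[state]{\I})\ge0$, and quantifying over $T$ gives the claim. For atomically open $F$ a fixed-$T$ equivalence with $V>0$ fails --- value $0$ is compatible with each individual play staying strictly positive --- so I would route through determinacy. By \rref{thm:dGL-determined}, Demon's winning region for $\{g>0\}$ is the complement of Angel's winning region for the closed set $\{g\le0\}$. I would characterize the latter by the same freeze reduction together with a compactness argument: if $V(0,\iget[state]{\I})\le0$ at some $T$, choose Angel strategies $\beta_n$ with $\sup_{y} g(\response[f]{T}{\iget[state]{\I}}{y}{\beta_n(y)})\le 1/n$; by compactness of $\stratA{\eta}$ in the topology of pointwise convergence (the remark following \rref{lem:continuous-response}) a subsequence converges to some $\beta^\ast$, and since the Demon control $y$ is held fixed while only the strategy index $n$ varies, \rref{lem:continuous-response} applies and yields $g(\response[f]{T}{\iget[state]{\I}}{y}{\beta^\ast(y)})\le0$ for every $y$. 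Thus Angel reaches $\{g\le0\}$ iff $V\le0$ for some $T$, and complementation gives membership iff $V(0,\iget[state]{\I})>0$ for all $T$.

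The main obstacle is the closed-case bridge $V(0,\iget[state]{\I})\ge0\Rightarrow\forall\beta\,\exists y\; g(\response[f]{T}{\iget[state]{\I}}{y}{\beta(y)})\ge0$ --- exactly the non-attainment subtlety flagged before \rref{lem:V>=0}, where a nonnegative value need not be realized by any single control against a given strategy because $y\mapsto\beta(y)$ is not continuous. This step is carried entirely by \rref{lem:V>=0} and its functional-analytic apparatus (Tychonoff compactness of the control spaces, continuous dependence of responses from \rref{lem:continuous-response}, and the Borel swap). The remaining points are routine: verifying that the freeze-on-target strategy $\beta'$ is measurable (the first hitting time of a closed set by a continuous response is measurable) and that the open-case compactness limit invokes \rref{lem:continuous-response} only with a fixed Demon control, so that the failure of continuity of $\beta\mapsto\beta(y)$ in $y$ does no harm.
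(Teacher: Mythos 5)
Your overall route is the paper's: use the freeze factor to eliminate the stopping-time quantifier and bridge to the sign of $V$ via \rref{lem:lower-value} and \rref{lem:V>=0}. For \emph{open} $F$ your argument is sound and in places more explicit than the paper's -- you check nonanticipativity and measurability of the first-hitting-time freeze, and you supply the compactness/continuous-response argument for the direction from winning-region membership to $V>0$, which the paper compresses into a one-line citation. (One quibble: the fixed-$T$ equivalence you say ``fails'' in fact holds -- by exactly the Tychonoff-plus-\rref{lem:continuous-response} limit argument you then run on the complement -- it is merely not elementary.)

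The genuine gap is the closed case, which you dispatch with ``(resp.\ $\ge0$)'' as if it were symmetric to the open one. It is not. Your $\beta'$ freezes when the response first leaves $\imodel{\I}{F}$. For open $F$ that is the first hitting time of the \emph{closed} set $\{g\le0\}$: it is attained, the state there satisfies $g\le0$, and locking it in contradicts the terminal guarantee $g>0$. For closed $F$ the response must instead be caught inside the \emph{open} set $\{g<0\}$; the infimum of those hitting times lands on the boundary where $g=0$ by continuity and the intermediate value theorem, so freezing there locks in terminal payoff $0$, which does \emph{not} contradict the terminal guarantee $g\ge0$. No nonanticipative rule freezes ``strictly inside'' $\{g<0\}$ uniformly over Demon's controls, so the strategy-level reduction breaks down exactly where the lemma is needed for the soundness proof of \irref{diffgameind}. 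The repair -- and what the paper's displayed step function with a \emph{fixed} $\zeta$ is actually doing -- is to run the freeze at the level of individual controls after the Borel swap: the proof of \rref{lem:V>=0} yields a single $y\in\controlD{\eta}$ with $g(\response[f]{T}{\iget[state]{\I}}{y}{z})\ge0$ for \emph{every} $z\in\controlA{\eta}$, not merely for every $\beta(y)$. If some $z$ and some time $\zeta$ gave $g(\response[f]{\zeta}{\iget[state]{\I}}{y}{z})<0$, the control that copies $z$ on $[0,\zeta]$ and sets $c=0$ afterwards is measurable (the step is at a \emph{constant} time $\zeta$) and produces a strictly negative terminal payoff against that same $y$ -- a contradiction involving no hitting times and no nonanticipativity. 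Since every value $\beta(y)$ of a nonanticipative strategy is such a $z$, the winning-region condition for closed $F$ follows.
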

\begin{proofatend}
\begin{inparaenum}
\item[``$\mimply$'':] by \rref{case:V>0} of \rref{lem:lower-value} (or \rref{lem:V>=0}) for $b\mdefeq g(\xi)/2$ by \rref{lem:arithmetize}.

\item[``$\mylpmi$'':]
By \rref{lem:arithmetize} and \rref{case:V>0} of \rref{lem:lower-value}, it only remains to be shown that stopping time $\zeta$ can always be instantiated to time horizon $T$ in \rref{def:DHG-semantics} for this game.
Instead of stopping prematurely at \m{\zeta<T}, Angel can set her extra freeze input $c$ to 0 at time $\zeta$, because \(c=0\) will already keep the value of $x$ constant.
The step function
\[
c(t) \mdefeq
\begin{cases}
  1 &\text{if}~t\leq\zeta\\
  0 &\text{if}~t>\zeta
\end{cases}
\]
required as the appropriate control input for the freeze factor to freeze at time $\zeta$ is Borel measurable.
\end{inparaenum}
The proof for closed $F$ uses \rref{lem:V>=0} instead of \rref{lem:lower-value}.
\end{proofatend}
This result exploits that durations of differential games are unobservable except when adding a clock \(\D{t}=1\) to the differential game to measure the progress of time, which would be frozen along with $x$, though, so that freezing is unobservable by the players.

When replacing all differential games with their frozen version, \rref{lem:nostopping-whenfrozen} implies that the results from \rref {sec:differential-game-value}--\ref{sec:differential-game-Isaacs} characterize their winning regions by signs of values.
That approach works flawlessly.
It is more efficient to exploit the structure of the frozen game to remove the freeze factor $c$ with a minimal change in the Hamiltonian.

\begin{lemma}[Frozen Isaacs] \label{lem:frozen-Isaacs}
  According to \rref{thm:differential-game-Isaacs}, let $H^-$ and $H^+$ be the Hamiltonians for the lower and upper values of 
  \begin{equation}
  \pdiffgame{\D{x}=f(x,y,z)}{y\in Y}{z\in Z}
  \label{eq:pdiffgame}
  \end{equation}
  Then the lower and upper values of the \emph{frozen differential game}
  \begin{equation}
  \pdiffgame{\D{x}=c f(x,y,z)}{y\in Y}{z\in Z\land c\in[0,1]}
  \label{eq:pdiffgame-frozen}
  \end{equation}
  respect the lower \rref{eq:lower-Isaacs} and upper \rref{eq:upper-Isaacs} Isaacs equations with Hamiltonians $J^-$ and $J^+$:
  \begin{align}
  J^-(t,x,p) &= \min(0, H^-(t,x,p)) \label{eq:lower-frozen-Isaacs-Hamiltonian}\\
  J^+(t,x,p) &= \min(0, H^+(t,x,p)) \notag
  \end{align}
\end{lemma}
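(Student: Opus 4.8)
The plan is to obtain the frozen values as viscosity solutions by applying \rref{thm:differential-game-Isaacs} directly to the frozen game \rref{eq:pdiffgame-frozen}, and then to simplify the Hamiltonian that theorem produces by an elementary optimization over the freeze factor. First I would check that \rref{eq:pdiffgame-frozen} meets the requirements of \rref{def:diffgame}, so that \rref{thm:differential-game-Isaacs} applies: its right-hand side $\tilde{f}(t,x,y,(z,c)) \mdefeq c\, f(t,x,y,z)$ is bounded because $f$ is bounded and $c\in[0,1]$, is uniformly continuous as a product of the bounded uniformly continuous $f$ with the bounded continuous factor $c$, and is uniformly Lipschitz in $x$ with the same constant as $f$ since $\abs{c}\leq1$; moreover Angel's combined control set $Z\times[0,1]$ is compact as a product of compact sets. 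Hence \rref{thm:differential-game-Isaacs} yields that the lower and upper values of \rref{eq:pdiffgame-frozen} are the unique bounded, uniformly continuous viscosity solutions of the lower \rref{eq:lower-Isaacs} and upper \rref{eq:upper-Isaacs} Isaacs equations whose Hamiltonians are
\[
J^-(t,x,p) = \max_{y\in Y}\min_{(z,c)\in Z\times[0,1]} c\, f(t,x,y,z)\stimes p,\qquad
J^+(t,x,p) = \min_{(z,c)\in Z\times[0,1]}\max_{y\in Y} c\, f(t,x,y,z)\stimes p .
\]
It then remains only to show these equal $\min(0,H^-)$ and $\min(0,H^+)$, which is pure min/max bookkeeping.

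The computation uses three elementary facts about optimizing the nonnegative factor $c$. First, for any $a\in\reals$ one has $\min_{c\in[0,1]} c\,a = \min(0,a)$, taking $c=0$ when $a\geq0$ and $c=1$ when $a<0$. Second, since $c\geq0$ the factor passes through an inner maximization: $\max_{y} c\,(f\stimes p) = c\,\max_{y}(f\stimes p)$. Third, the map $a\mapsto\min(0,a)$ is nondecreasing, and a nondecreasing function commutes with an attained $\min$ or $\max$ (the optima being attained by continuity of $f$ and compactness of $Y,Z$), so $\min(0,\cdot)$ may be pulled past both $\min_z$ and $\max_y$. For the lower Hamiltonian I would split $\min_{(z,c)} = \min_z\min_c$ and compute
\begin{align*}
J^- &= \max_{y}\min_{z}\min_{c\in[0,1]} c\,(f\stimes p)
= \max_{y}\min_{z}\min(0,f\stimes p)\\
&= \max_{y}\min\bigl(0,\min_{z} f\stimes p\bigr)
= \min\bigl(0,\max_{y}\min_{z} f\stimes p\bigr) = \min(0,H^-).
\end{align*}
For the upper Hamiltonian the computation is even shorter, pulling $c$ through the maximization first:
\begin{align*}
J^+ &= \min_{z}\min_{c\in[0,1]}\max_{y} c\,(f\stimes p)
= \min_{z}\min_{c\in[0,1]} c\,\max_{y}(f\stimes p)\\
&= \min_{z}\min\bigl(0,\max_{y} f\stimes p\bigr)
= \min\bigl(0,\min_{z}\max_{y} f\stimes p\bigr) = \min(0,H^+).
\end{align*}

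There is no deep obstacle here: once \rref{thm:differential-game-Isaacs} is invoked the claim reduces to the displays above. The only points needing care are bookkeeping ones — verifying that the frozen dynamics genuinely satisfy \rref{def:diffgame} so the Isaacs characterization is legitimate, respecting the asymmetric quantifier orders ($\max_y$ outermost for $J^-$ versus $\min_z$ outermost for $J^+$), and justifying the interchange of $\min(0,\cdot)$ with the extrema via attainment over the compact control sets. The one step worth stating explicitly is the commutation of $\min(0,\cdot)$ with the outer $\max_y$ in the lower case, since there the freeze optimization must be carried past both Angel's inner $\min_z$ and Demon's outer $\max_y$.
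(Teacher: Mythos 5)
Your proposal is correct and follows essentially the same route as the paper: apply \rref{thm:differential-game-Isaacs} to the frozen game and then eliminate the freeze factor via $\min_{c\in[0,1]}c\,a=\min(0,a)$ together with the commutation of the nondecreasing map $\min(0,\cdot)$ with the attained extrema over the compact control sets. Your explicit verification that the frozen dynamics satisfy \rref{def:diffgame} is a small bookkeeping addition the paper leaves implicit, but the argument is otherwise the same.
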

\begin{proofatend}
By \rref{thm:differential-game-Isaacs}, the lower value and upper value of \rref{eq:pdiffgame-frozen} satisfy the lower and upper Isaacs equations with the following Hamiltonians, which simplify as shown:
\begin{align*}
J^-(t,x,p) &= \max_{y\in Y} \min_{z\in Z} \min_{c\in[0,1]} c f(t,x,y,z)\stimes p
= \max_{y\in Y} \min_{z\in Z} \min(0, f(t,x,y,z)\stimes p)\\
 &= \min(0, \max_{y\in Y} \min_{z\in Z} f(t,x,y,z)\stimes p)
 \\
J^+(t,x,p) &= \min_{c\in[0,1]} \min_{z\in Z} \max_{y\in Y} c f(t,x,y,z)\stimes p
 = \min(0, \min_{z\in Z} \max_{y\in Y} f(t,x,y,z)\stimes p)
\end{align*}
since $\min$ and $\max$ are mutually distributive.
By monotone comparison \rref{cor:monotone-comparison}, those transformations do not change the solution.
\end{proofatend}
When starting both differential games in the same initial state with the same payoff, the lower and upper value of \rref{eq:pdiffgame}, thus, dominate the lower and upper value, respectively, of \rref{eq:pdiffgame-frozen}, by \rref{cor:monotone-comparison}, because \(J^-(t,x,p) \leq H^-(t,x,p)\) and \(J^+(t,x,p) \leq H^+(t,x,p)\).
The freeze input $c$ can be removed from the Hamiltonian by \rref{lem:frozen-Isaacs}.
Indeed, $c$ does not ever need to be introduced into differential games explicitly either, because both winning regions are identical, based on \cite{DBLP:journals/tac/MitchellBT05}:

\begin{lemma}[Superfluous freezing] \label{lem:freezing}
  Let \(X\subseteq\linterpretations{\Sigma}{V}\).
  Then
  \begin{align*}
   \dstrategyfor[\pdiffgame{\D{x}=\genDG{x}{y}{z}}{y\in Y}{z\in Z}]{X}
  &=
  \dstrategyfor[\pdiffgame{\D{x}=c\genDG{x}{y}{z}}{y\in Y}{z\in Z\land c\in[0,1]}]{X}
  \\
  \strategyfor[\pdiffgame{\D{x}=\genDG{x}{y}{z}}{y\in Y}{z\in Z}]{X}
  &=
  \strategyfor[\pdiffgame{\D{x}=c\genDG{x}{y}{z}}{y\in Y}{z\in Z\land c\in[0,1]}]{X}
  \end{align*}
\end{lemma}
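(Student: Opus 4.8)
The plan is to prove the Demon identity \(\dstrategyfor[\alpha]{X}=\dstrategyfor[\alpha_c]{X}\) for every \(X\) and to obtain the Angel identity for free by determinacy, where I abbreviate by $\alpha$ the game \(\pdiffgame{\D{x}=\genDG{x}{y}{z}}{y\in Y}{z\in Z}\) and by $\alpha_c$ its frozen variant \(\pdiffgame{\D{x}=c\genDG{x}{y}{z}}{y\in Y}{z\in Z\land c\in[0,1]}\). By \rref{thm:dGL-determined} applied to $\alpha$ and to $\alpha_c$ one has \(\dstrategyfor[\alpha]{X}=\scomplement{\strategyfor[\alpha]{\scomplement{X}}}\) and \(\dstrategyfor[\alpha_c]{X}=\scomplement{\strategyfor[\alpha_c]{\scomplement{X}}}\), so that \(\dstrategyfor[\alpha]{X}=\dstrategyfor[\alpha_c]{X}\) for all $X$ holds iff \(\strategyfor[\alpha]{X}=\strategyfor[\alpha_c]{X}\) for all $X$ (as $\scomplement{X}$ ranges over all sets). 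Hence it suffices to treat the Demon regions, which also has the more convenient quantifier order $\mforall{T}{\mforall{\beta}{\mexists{y}{}}}$ for a constructive argument, since Demon fixes his control after seeing Angel's whole strategy.

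First I would dispatch the easy inclusion \(\dstrategyfor[\alpha_c]{X}\subseteq\dstrategyfor[\alpha]{X}\): the freeze input only enlarges Angel's repertoire. Given \(\iget[state]{\I}\in\dstrategyfor[\alpha_c]{X}\), any horizon \(T\geq0\), and any \(\beta\in\stratA{\eta}\) for $\alpha$, the map \(\hat\beta(y)\mdefeq(\beta(y),\mathbf{1})\) with constant freeze \(\mathbf{1}(s)\equiv1\) is nonanticipative for $\alpha_c$; membership in \(\dstrategyfor[\alpha_c]{X}\) applied to \(T,\hat\beta\) yields a control $y$ whose frozen response stays in $X$ throughout. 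Since \(c\equiv1\) turns \(\D{x}=c\genDG{x}{y}{z}\) into \(\D{x}=\genDG{x}{y}{z}\), uniqueness of responses (\rref{lem:response}) identifies this frozen response with the ordinary response for $(y,\beta(y))$, so \(\iget[state]{\I}\in\dstrategyfor[\alpha]{X}\). Note this inclusion needs no definability of $X$.

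The heart is the reverse inclusion \(\dstrategyfor[\alpha]{X}\subseteq\dstrategyfor[\alpha_c]{X}\), whose intuition is that freezing merely reparametrizes time: a frozen response is a slowed-down replay of an ordinary response tracing the same curve, so an invariance that Demon can enforce for \emph{every} horizon of $\alpha$ must survive the slow-down. Concretely, fix a frozen horizon $T$ and a frozen Angel strategy $\hat\beta$, and write \(\hat\beta(y)=(z,c)\). For a Demon control $y$ the monotone $1$-Lipschitz time change \(\tau(s)\mdefeq\eta+\int_\eta^s c(r)\,dr\) satisfies \(\tau(s)\leq s\leq T\), and the frozen response should equal the ordinary response read at warped time, \(\hat x(s)=x(\tau(s))\). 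Demon wins $\alpha$ for horizon $T$ against the original Angel strategy induced by warping $\hat\beta$, so that ordinary response stays in $X$ on all of \([\eta,T]\); because \(\tau(s)\) never exceeds $T$, the frozen response then lies in $X$ throughout, giving \(\iget[state]{\I}\in\dstrategyfor[\alpha_c]{X}\).

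The hard part — and exactly the cyclic quantifier dependency flagged in the introduction — is that $\tau$ depends on $c$, hence on \(\hat\beta(y)\), hence on the very control \(y\mdefeq\bar y\circ\tau\) one builds from Demon's ordinary winning response $\bar y$. I would break this cycle by causality: since $c$ comes from a nonanticipative $\hat\beta$ and \(\tau(s)=\eta+\int_\eta^s c\) is insensitive to the value of $c$ at the single instant $s$, the warp \(\tau(s)\) depends only on $y$ restricted to $[\eta,s)$, so the relation \(y(s)=\bar y(\tau(s))\) is a strictly causal Volterra-type fixed point solvable by forward construction, producing a measurable control $y$ and a genuinely nonanticipative induced Angel strategy for $\alpha$; uniqueness of responses (\rref{lem:response}) then certifies \(\hat x(s)=x(\tau(s))\). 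The remaining labor is measurability bookkeeping for $\tau$, for $y$, and for the induced strategy, together with checking that the pullback of $\hat\beta$ through the (where \(c=0\) non-invertible) time change is well defined a.e.; none of this uses definability of $X$, which is why the statement holds for arbitrary $X$. For the special case of open or closed $X$ one could alternatively route through \rref{lem:nostopping-whenfrozen} and \rref{lem:frozen-Isaacs}, but the reparametrization argument above is what covers the general set. Finally, the two Angel inclusions \(\strategyfor[\alpha]{X}=\strategyfor[\alpha_c]{X}\) follow immediately from the determinacy reduction of the first paragraph.
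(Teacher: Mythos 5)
Your skeleton matches the paper's: reduce to the Demon equation by determinacy, then split it into two inclusions. Your ``easy'' inclusion \(\dstrategyfor[\pdiffgame{\D{x}=c\genDG{x}{y}{z}}{y\in Y}{z\in Z\land c\in[0,1]}]{X}\subseteq\dstrategyfor[\pdiffgame{\D{x}=\genDG{x}{y}{z}}{y\in Y}{z\in Z}]{X}\) is sound and is essentially the semantic unrolling of what the paper does more compactly, namely instantiate rule \irref{diffgamerefine} (whose soundness, \rref{thm:diffgamerefine}, never uses definability of the winning set) with the witnesses \(y:=u\), \(v:=z\), \(c:=1\).

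The gap is in the reverse inclusion. The paper does not prove it from scratch either---it cites \cite{DBLP:journals/tac/MitchellBT05} (Corollary 5) and records only the time-dilation intuition---but you attempt a pathwise reparametrization proof whose crux is exactly the step you wave through: solving \(y(s)=\bar y(\tau(s))\) with \(\tau(s)=\eta+\int_\eta^s c(r)\,dr\) and \(c\) the freeze component of \(\hat\beta(y)\). ``Strictly causal, hence solvable by forward construction'' is not a theorem: causal operators on function spaces over a continuum need not have fixed points, and here the operator \(y\mapsto\bar y\circ\tau_{c[y]}\) composes a merely measurable \(\bar y\) with a time change produced by an arbitrary nonanticipative \(\hat\beta\), so neither a contraction/Schauder argument (no continuity of \(\tau\mapsto\bar y\circ\tau\) for measurable \(\bar y\)) nor a stepwise extension (no canonical continuation at limit times) applies off the shelf. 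Even in the pointwise special case \(c(s)=b(y(s))\) the fixed point becomes the ODE \(\D{\tau}=b(\bar y(\tau))\) whose right-hand side is only measurable in the state \(\tau\), which is outside Carath\'eodory theory; moreover the fixed point must be solved \emph{inside} the definition of the unfrozen Angel strategy \(\beta(\bar y)\) for every input \(\bar y\), and the resulting \(\beta\) must then be shown nonanticipative, before Demon's winning control is even invoked. Incidentally, the ``cyclic quantifier dependency'' flagged in the introduction refers to the proof of \rref{lem:V>=0} (Lemma 8 of the cited paper), not to this freezing lemma, so it does not certify that your circularity is the known, already-repaired one. To close the argument you should either cite the literature for this direction, as the paper does, or recast it at the level of values/winning conditions (e.g.\ via \rref{lem:nostopping-whenfrozen} and \rref{lem:frozen-Isaacs} for open/closed \(X\), or the reachable-set formulation) rather than by pathwise strategy transport.
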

\begin{proofatend}
By \rref{thm:dGL-determined}, the equations imply each other, so the proof only considers the first equation.

\begin{inparaenum}
\item[\noindent``$\supseteq$'':]
This inclusion follows from the soundness of this \irref{diffgamerefine} proof step (\rref{thm:diffgamerefine}):
\[
\hspace*{-0.3cm}
  {\linfer[diffgamerefine]
    {\lforall{u{\in} Y}{\lexists{y{\in} Y}{\lforall{z{\in} Z}{\lexists{v{\in} Z,c{\in}[0{,}1]}{\lforall{x}{(f(x,y,z)=c f(x,u,v))}}}}}}
    {\dbox{\pdiffgame{\D{x}{=}c f(x,u,v)}{u{\in} Y}{v{\in} Z{\land} c{\in}[0{,}1]}}{F} \limply \dbox{\pdiffgame{\D{x}{=}f(x,y,z)}{y{\in} Y}{z{\in} Z}}{F}}
  }
\]
whose premise proves using the witnesses \(y\mdefeq u, v\mdefeq z, c\mdefeq1\).

\item[\noindent``$\subseteq$'':]
This direction has been shown elsewhere \cite[Corollary 5]{DBLP:journals/tac/MitchellBT05}.
The idea of the proof is as follows.
The addition of $c$ does not affect the game behavior or capabilities, because its only effect is a time-dilation, and time-invariant differential equations \m{\D{x}=f(x,y,z)} are invariant under time rescaling if time itself is unobservable.
Which it is, unless the differential game includes a clock \m{\D{t}=1}, in which case that clock will be frozen when $c<1$ as well, because its frozen counterpart is \m{\D{t}=c}.
\qedhere
\end{inparaenum}
\end{proofatend}

In a similar way, \emph{differential games restricted to evolution domains} are expressible by the \emph{dual freezing game} that gives another freeze factor $b$ to Demon with which he can suspend the evolution should Angel ever try to leave the domain.
A differential game with evolution domain $\ivr$ has to always remain in $\ivr$ and stop before leaving it.
But only Angel is in control of time. She might try to leave $\ivr$ temporarily and sneak back before Demon notices, which is forbidden.
Adding the dual freeze factor $b$ to the game gives Demon the option of slowing the game down and challenging Angel to demonstrate it still is in $\ivr$.
Ensuring that Demon does not slow the game down just to prevent Angel's progress to victory is possible by exploiting hybrid games around it:
\[
\pupdate{\pumod{t}{x_0}};~ (\pdiffgame{\D{x}=b\genDG{x}{y}{z}\syssep\D{t}=1}{y\in Y\land b\in[0,1]}{z\in Z});
~
\ptest{\ivr};~
\dtest{(x_0=t)}
\]
This reduction assumes that the (vectorial) differential game \(\D{x}=\genDG{x}{y}{z}\) already contains a deterministic clock \(\D{x_0}=1\) and adds a separate unfrozen absolute clock \m{\D{t}=1} starting from the same value after the assignment \m{\pupdate{\pumod{t}{x_0}}}.
To slow the system down, Demon needs to choose \m{b<1} on a set of non-zero measure (otherwise \m{b=1} a.e., which has no effect).
That will slow down the frozen \m{\D{x_0}=b} compared to the unfrozen \m{\D{t}=1}, so that Demon fails his time-synchronicity dual test \(\dtest{(x_0=t)}\) and loses.
Unless he correctly points out that the system left the domain $\ivr$, in which case Angel will lose because she fails her test \(\ptest{\ivr}\) first.
Even though Demon has no influence on Angel's choice of time $\zeta$, he can choose \(b=0\) to force the game into stasis any time.
He just needs to use that power wisely or else he will lose the game for false allegations.
This is the differential game analogue of the ``there and back again game'' for differential equations with evolution domains \cite{DBLP:journals/tocl/Platzer15}.
Differential hybrid games, thus, enable simpler differential games compared to incorporating state constraints directly into a differential game by special-purpose techniques \cite{DBLP:conf/optta/Rapaport98}.
\subsection{Soundness of Differential Game Invariants} \label{sec:diffgameind}

This completes the background results required for proving soundness of differential game invariant rule \irref{diffgameind}.
The soundness proof proves the arithmetized postcondition (\rref{lem:arithmetize}), from an initial state that satisfies it, to be a time-independent viscosity subsolution (\rref{sec:viscosity-PDE}) for all time horizons of the lower Isaacs PDE \rref{eq:lower-Isaacs} that characterizes (\rref{sec:differential-game-Isaacs}) the lower value \rref{eq:lower-value} whose sign, in turn, characterizes (\rref{sec:differential-game-value}) differential game winning regions (\rref{sec:dGL}) even for premature stopping (\rref{sec:frozen-games}).

\begin{theorem}[Differential game invariants] \label{thm:diffgameind}
  Differential game invariants are sound (proof rule \irref{diffgameind}).
\end{theorem}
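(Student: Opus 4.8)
The plan is to certify the arithmetized postcondition as a time-independent viscosity subsolution of the \emph{frozen} lower Isaacs PDE and then propagate its sign by comparison. First I would arithmetize via \rref{lem:arithmetize}, replacing $F$ by the continuous $\min/\max$-polynomial term $g\mdefeq\arithmetize{F}$, so that $\imodel{\I}{F}=\{x:g(x)\geq0\}$ for closed $F$ (resp.\ $g(x)>0$ for open $F$). By the frozen-game reductions \rref{lem:freezing} and \rref{lem:nostopping-whenfrozen}, showing that Demon wins $\dbox{\pdiffgame{\D{x}=\genDG{x}{y}{z}}{y\in Y}{z\in Z}}{F}$ from every initial $\iget[state]{\I}$ with $g(\iget[state]{\I})\geq0$ reduces to showing that the lower value of the frozen game (with payoff $g$) satisfies $V(0,\iget[state]{\I})\geq0$ for every horizon $T\geq0$ (strict for open $F$). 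The freeze input absorbs Angel's early stopping, so the quantifier over $\zeta$ in \rref{def:DHG-semantics} is harmless and I may work with the fixed-$T$ value machinery of \rref{sec:differential-game-value}.

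The heart of the argument is that the constant-in-time function $u(t,x)\mdefeq g(x)$ is a viscosity subsolution of the lower Isaacs PDE \rref{eq:lower-Isaacs} of the frozen game, whose Hamiltonian is $J^-=\min(0,H^-)$ by \rref{lem:frozen-Isaacs}. Since $u_t\equiv0$ and $\superdiff{u}(t,x)=\{0\}\times\superdiff{g}(x)$, the subsolution inequality collapses to requiring $H^-(t,x,p)=\max_{y\in Y}\min_{z\in Z} f(x,y,z)\stimes p\geq0$ for every $x$ and every $p\in\superdiff{g}(x)$. I would prove this by structural induction on the arithmetization, reading off the witness $y$ at $x$ from the premise of \irref{diffgameind}, which supplies a \emph{single} $y\in Y$ making the whole conjunction $\dder[\D{x}][f(x,y,z)]{F}$ hold for all $z\in Z$. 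For an atom $e\geq0$ (or $>$, $=$), \rref{lem:Lie-relation} identifies its Lie derivative along $f$ with $f(x,y,z)\stimes D_x g(x)\geq0$, matching $\superdiff{g}(x)=\{D_x g(x)\}$ where $g$ is differentiable (\rref{lem:subsuperdifferentials}). For a conjunction $g=\min_i g_i$, \rref{lem:superdiffmin} writes any $p\in\superdiff{g}(x)$ as a convex combination of active $p_i\in\superdiff{g_i}(x)$; since $\der{F\land G}=\der{F}\land\der{G}$, the same $y$ gives $f(x,y,z)\stimes p_i\geq0$ for all active $i$ and all $z$, and convexity yields $f(x,y,z)\stimes p\geq0$. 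For a disjunction $g=\max_i g_i$, the elementary inclusion $\superdiff{g}(x)\subseteq\bigcap_{g_i(x)=g(x)}\superdiff{g_i}(x)$ puts $p$ simultaneously in every active $\superdiff{g_i}(x)$, and because $\der{F\lor G}=\der{F}\land\der{G}$ also forces the derivative of every disjunct, the same $p$ and the same $y$ give $f(x,y,z)\stimes p\geq0$. In each case $\min_{z\in Z}f(x,y,z)\stimes p\geq0$, hence $H^-(t,x,p)\geq0$ and $J^-(t,x,p)\geq0$, so $u$ is a subsolution.

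Finally I would invoke comparison. By \rref{thm:differential-game-Isaacs} together with \rref{lem:frozen-Isaacs}, the frozen lower value $V$ is the unique viscosity solution, in particular a supersolution, of the frozen lower Isaacs PDE, and $u(T,\cdot)=g=V(T,\cdot)$ at the terminal time. Applying the comparison principle \rref{thm:comparison} on $[0,T]\times\reals^n$ (its Lipschitz hypothesis holds since $g=\arithmetize{F}$ is Lipschitz on the relevant compact ball, cf.\ \rref{lem:well-defined}) gives $u\leq V$ throughout, so $V(0,\iget[state]{\I})\geq g(\iget[state]{\I})\geq0$ (strict for open $F$). As this holds for every $T\geq0$, the frozen-value characterization \rref{lem:nostopping-whenfrozen} and \rref{lem:freezing} yield $\iget[state]{\I}\in\dstrategyfor[\pdiffgame{\D{x}=\genDG{x}{y}{z}}{y\in Y}{z\in Z}]{\imodel{\I}{F}}$, which is exactly the succedent of the conclusion. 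The main obstacle is the inductive subsolution step: reconciling the purely syntactic total derivation $\der{F}$ of the premise with the analytic superdifferentials of the possibly non-smooth payoff $g$, and in particular taming the upward kinks introduced by $\lor$ (where $\superdiff{\max}$ is thin) using precisely the single common witness $y$ that the premise's existential-over-$y$, universal-over-$z$ pattern provides.
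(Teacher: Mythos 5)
Your proposal is correct and follows the same overall architecture as the paper's proof: arithmetize the postcondition, show that the resulting time-independent payoff is a viscosity subsolution of the (frozen) lower Isaacs PDE via \rref{lem:Lie-relation} and superdifferentials, conclude \(g\leq V\) by \rref{thm:comparison}, and transfer the sign of \(V\) to an actual winning strategy through \rref{lem:lower-value}, \rref{lem:V>=0}, \rref{lem:nostopping-whenfrozen} and \rref{lem:freezing}. The one place where you genuinely diverge is the treatment of disjunctions. The paper never takes superdifferentials of a maximum: it fixes the set \(I\) of atoms \(g_i>0\) true in the initial state, uses \(\bar g=\min_{i\in I}g_i\) as the subsolution candidate, and observes that keeping every initially true atom true keeps the positive Boolean combination \(F\) true, so only \rref{lem:superdiffmin} for minima is needed. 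You instead keep the full nested \(\min/\max\) term \(\arithmetize{F}\) and handle \(\lor\) by the dual inclusion \(\superdiff{\max_i g_i}(x)\subseteq\bigcap_{g_i(x)=g(x)}\superdiff{g_i}(x)\), which is sound (it is the mirror image of the ``\(\supseteq\)'' half of \rref{lem:superdiffmin}, and it may leave the superdifferential empty at upward kinks, where the subsolution condition is vacuous) and is enabled by \(\der{F\lor G}\mequiv\der{F}\land\der{G}\) supplying the same witness \(y\) for every disjunct. Your version buys a single uniform candidate tied directly to \rref{lem:arithmetize}; the paper's buys freedom from reasoning about superdifferentials of maxima at all. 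Two small omissions relative to the paper: the trivial \(T=0\) case, and the final reduction of postconditions that are neither open nor closed to the closed case (replacing each initially true strict inequality \(h_j>0\) by \(h_j-h_j(\iget[state]{\I})\geq0\)); the latter matters only if one wants the rule beyond the open or closed postconditions required by \rref{lem:well-defined}.
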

\begin{proof}
To prove soundness, assume the premise to be valid and assume the antecedent of the conclusion true in a state $\iget[state]{\I}$, written \(\imodels{\I}{F}\) as notation for \(\iget[state]{\I} \in \imodel{\I}{F}\):
\begin{align}
&\entails  \lexists{y\in Y}{\lforall{z \in Z}{ \dder[\D{x}][f(x,y,z)]{F} }}
\label{eq:diffgameind-premise}
\\
&\imodels{\I}{F}
\label{eq:diffgameind-antecedent}
\end{align}
To make the proof easier to follow, the proof first considers the case where $F$ is an atomic formula even if that follows from subsequent cases.

\begin{inparaenum}[\noindent\itshape 1\upshape)]
\item \label{case:post>0}
Consider the case where $F$ is of the form \m{F\mequiv(g>0)} for a (smooth) term $g$.
Then the (valid) premise \rref{eq:diffgameind-premise} of rule \irref{diffgameind} specializes to \m{\lexists{y \in Y}{\lforall{z \in Z}{\subst[\D{(g>0)}]{\D{x}}{f(x,y,z)}}}}, which is
\begin{equation}
\entails
\lexists{y \in Y}{\lforall{z \in Z}{(\subst[\der{g}]{\D{x}}{f(x,y,z)}\geq0)}}
\label{eq:diffgameind-premise-gt}
\end{equation}
When $\xi\in\linterpretations{\Sigma}{V}$ is a state, adopt the usual mathematical liberties of writing \m{g(\xi)} for the value \m{\ivaluation{\I[\xi]}{g}} of term $g$ in state $\xi\in\linterpretations{\Sigma}{V}$ to simplify notation substantially and keep it closer to standard mathematical practice.
Similarly for \(f(x,y,z)\), since it will be clear from the context whether the term \(f(x,y,z)\) or its value is being referred to.
If all the $x,y,z$ are variables, \(f(x,y,z)\) is a term. If, instead, $\xi,\eta,\zeta$ are all (vectors of) reals, \(f(\xi,\eta,\zeta)\) refers to the corresponding value \(\ivaluation{\imodif[state]{\imodif[state]{\imodif[state]{\I[\sigma]}{x}{\xi}}{y}{\eta}}{z}{\zeta}}{f(x,y,z)}\) (for any state $\sigma$ since $x,y,z$ are all free variables).
For variable $x$ and values $\eta,\zeta$, the mixed case $f(x,\eta,\zeta)$ evaluates in state $\iget[state]{\I}$ to the value \(f(\iget[state]{\I},\eta,\zeta)=\ivaluation{\imodif[state]{\imodif[state]{\I}{y}{\eta}}{z}{\zeta}}{f(x,y,z)}\), which will be used sparingly to avoid confusion.

Consider any time horizon $T>0$ of Angel's choosing.
The case $T=0$ follows from \rref{eq:diffgameind-antecedent}.
The proof first shows that the time-invariant extension function \(\bar{g}(t,x) \mdefeq g(x)\) is a subsolution of the lower Isaacs equation \rref{eq:lower-Isaacs} with its unique solution $V$ (\rref{thm:differential-game-Isaacs}), which, by \rref{thm:comparison}, implies \m{\bar{g}\leq V}, because both functions coincide at time $T$.
\begin{sublemma} \label{slem:ming-subsolution1}
\(\bar{g}(t,x) \mdefeq g(x)\) for smooth term $g$ is a subsolution of lower Isaacs equation \rref{eq:lower-Isaacs}.
\end{sublemma}
\begin{subproof}
Since \m{\bar{g}} is smooth, it, by \rref{lem:subsuperdifferentials}, is a subsolution iff it satisfies the subsolution inequality classically at every $(\eta,\xi)$:
\begin{equation}
\underbrace{\bar{g}_t(\eta,\xi)}_0 + \underbrace{\max_{y\in Y} \min_{z\in Z} f(\xi,y,z)\stimes D_x \bar{g} (\eta,\xi)}_{\geq0} \geq0
\label{eq:gbar-subsolution}
\end{equation}
which holds since $\bar{g}$ is time-invariant so its time-derivative $\bar{g}_t$ vanishes and by premise \rref{eq:diffgameind-premise-gt}, recalling that
\(f(\xi,y,z) \stimes D_x \bar{g} (\eta,\xi) = \ivaluation{\I[\xi]}{\subst[\der{g}]{\D{x}}{f(x,y,z)}}\) for all $\zeta,y,z$ by \rref{lem:Lie-relation}, so that \rref{eq:diffgameind-premise-gt} implies:%
\[
\mexists{y\in Y}{\mforall{z\in Z} { f(\xi,y,z) \stimes D_x \bar{g} (\eta,\xi) = \ivaluation{\I[\xi]}{\subst[\der{g}]{\D{x}}{f(x,y,z)}} \geq 0 }}
\]
By \rref{eq:gbar-subsolution}, $\bar{g}$ is a subsolution of \rref{eq:lower-Isaacs}, so
\m{g(\xi)=\bar{g}(\eta,\xi)\leq V(\eta,\xi)} for all $\eta,\xi$ by \rref{thm:comparison}, which is applicable because $V$ is bounded and uniformly continuous by \rref{thm:differential-game-Isaacs}, and Lipschitz in $x,t$ by \rref{thm:UV-bounded-Lipschitz}, thus, Lipschitz in $x$ \emph{uniformly in $t$} since $t$ is bounded by $T$ so the maximum Lipschitz bound among \(t\in[0,T]\) is finite.
For applicability of \rref{thm:comparison}, note that $g$ and $\bar{g}$ are bounded and Lipschitz by \rref{def:diffgame} (using the relevant domain from \rref{lem:well-defined}) and, thus, uniformly continuous by \rref{sec:preliminaries}.
\end{subproof}

So \m{V(\eta,\iget[state]{\I})\geq g(\iget[state]{\I})>0} for all $\eta$ and any initial state $\iget[state]{\I}$ satisfying antecedent \(F \mequiv (g>0)\) of the conclusion of \irref{diffgameind}, i.e.\ \rref{eq:diffgameind-antecedent} which is \(g(\iget[state]{\I})>0\).
Hence, \rref{case:V>0} of \rref{lem:lower-value} implies
\[
\mforall{\beta\in\stratA{\eta}} {\mexists{y\in\controlD{\eta}}{g(\response[f]{T}{\iget[state]{\I}}{y}{\beta(y)}) > 0}}
\]
This shows that Demon can achieve \m{g>0} from any initial state $\iget[state]{\I[\xi]}$ where \m{g>0} holds provided that Angel decides to evolve the full duration $T$, which she does not have to.

Since \m{g(\iget[state]{\I[\xi]})\leq V(t,\iget[state]{\I[\xi]})} is a \emph{time-independent} lower bound for all times $t$ and all time horizons $T$, Angel cannot achieve a lower value of $g$ by stopping earlier either:
\begin{sublemma} \label{slem:payoff-lower-bound-transferral}
  If the payoff $g$ is a time-independent subsolution of \rref{eq:lower-Isaacs} with \(g(\iget[state]{\I})>0\), then
\begin{equation}
  \iget[state]{\I} \in \dstrategyfor[\pdiffgame{\D{x}=\genDG{x}{y}{z}}{y\in Y}{z\in Z}]{\imodel{\I}{g>0}}
  \label{eq:payoff-lower-bound-true}
\end{equation}
  The case \(g(\iget[state]{\I})\geq0\) is accordingly with \m{\imodel{\I}{g\geq0}} instead.
\end{sublemma}
\begin{subproof}
Since $g$ satisfies its own boundary condition, $g$ is a subsolution of \rref{eq:lower-Isaacs} iff:
\[\underbrace{\tau}_0 + H^-(t,x,p)\geq0 ~\text{for all}~ (\tau,p)\in\superdiff{g}(t,x) ~\text{and all}~t,x\]
In particular, $g$ is also a subsolution of the frozen lower Isaacs equation with Hamiltonian \rref{eq:lower-frozen-Isaacs-Hamiltonian} from \rref{lem:frozen-Isaacs}, since $0+0\geq0$ already holds:
\[
\underbrace{\tau}_0 + \min(0,H^-(t,x,p))\geq0 ~\text{for}~ (\tau,p)\in\superdiff{g}(t,x) ~\text{and}~t,x
\]
Thus, the lower value of the frozen game \rref{eq:pdiffgame-frozen} has lower bound $g$.
By \rref{lem:nostopping-whenfrozen}, the frozen game does not need any premature stopping, so that \rref{lem:lower-value} proves
\[\iget[state]{\I} \in \dstrategyfor[\pdiffgame{\D{x}=cf(x,y,z)}{y\in Y}{z\in Z\land c\in[0,1]}]{\imodel{\I}{g>0}}\]
since $T\geq0$ was arbitrary.
The ``$\supseteq$'' inclusion of \rref{lem:freezing}, which was proved directly by differential game refinement \rref{thm:diffgamerefine}, implies \rref{eq:payoff-lower-bound-true}, concluding the subproof.
\end{subproof}

An alternative to \rref{slem:payoff-lower-bound-transferral} proceeds directly without freezing: $g$ also is a subsolution of the Isaacs equation for infimum cost \cite{Serea02}
\begin{align*}
&\min(v_t(t,x) + h^-(x,v(t,x),D_x v(t,x)),\, g(x)-v(t,x)) = 0\\
&h^-(x,r,p) =
\begin{cases}
 \max_{y\in Y}\min_{z\in Z} f(x,y,z)\stimes p &\text{if}~g(x)\leq r\\
\infty \hfil &\text{if}~g(x)>r
\end{cases}
\end{align*}
which the infimum cost value over time solves
\[
  v(\eta,\xi) = 
  \inf_{\beta\in\stratA{\eta}} \sup_{y\in\controlD{\eta}}  \min_{t\leq T} g(\response[f]{t}{\iget[state]{\I}}{y}{\beta(y)})
\]
because the choice of $g(x)$ for $v(t,x)$ satisfies
\[\min(\tau + h^-(x,\bar{g}(t,x),p),\, g(x)-\bar{g}(t,x)) \geq 0 ~\mforall{(\tau,p)\in\superdiff{\bar{g}}(x)}\]
\rref{lem:lower-value} carries over to this infimum cost value $v$ with an extra \(\mexists{t\leq T}\)for time, so that \(0<g(\iget[state]{\I}) \leq v(0,\iget[state]{\I})\) directly shows
\[
\iget[state]{\I} \in \dstrategyfor[\pdiffgame{\D{x}=\genDG{x}{y}{z}}{y\in Y}{z\in Z}]{\imodel{\I}{g>0}}
\]
Even if this alternative proof also works for time-dependent $g$, its downside is that its PDE assumes a convex image of $f$ under $Y$ and $Z$ to facilitate discontinuous games \cite{Serea02}, which are not needed, because hybrid games cover discontinuous change.

\item \label{case:post>=0}
Consider the case where $F$ is of the form \m{F\mequiv(g\geq0)} for a (smooth) term $g$.
Then the proof proceeds as in \rref{case:post>0}, since the premise of \irref{diffgameind} is still \rref{eq:diffgameind-premise-gt}, because \m{\der{g\geq0}} is equivalent to \m{\der{g>0}} by \rref{def:derivation}.
In that case, the antecedent \rref{eq:diffgameind-antecedent} only implies \m{\imodels{\I}{g\geq0}} in the initial state $\iget[state]{\I}$, thus, \m{V(\eta,\iget[state]{\I})\geq g(\xi)\geq0} for all $\eta$.
Yet, then \rref{lem:V>=0} instead of \rref{lem:lower-value} still implies
\[
\mforall{\beta\in\stratA{\eta}} {\mexists{y\in\controlD{\eta}}{g(\response[f]{T}{\iget[state]{\I}}{y}{\beta(y)}) \geq 0}}
\]
which shows the conclusion of rule \irref{diffgameind} by \rref{slem:payoff-lower-bound-transferral}.

\item \label{case:open-post}
Consider the case where $F$ is atomically open.
By congruence, it is enough to consider the case where $F$ is normalized by \((a<b) \mequiv (b-a>0)\) so that it is built with $\land,\lor$ from formulas of the form \m{g_i>0} for polynomials $g_i$.
Let \m{I\mdefeq\{i \with g_i(\iget[state]{\I})>0\}\neq\emptyset} be the set of all indices $i$ whose atomic formula $g_i>0$ is true in the initial state $\iget[state]{\I}$.
As a replacement for the previous \rref{slem:ming-subsolution1},
the subsequent \rref{slem:ming-subsolution} shows that the time-invariant minimum \(\bar{g}(t,x) \mdefeq \min_{i\in I} g_i(x)\) of the involved continuously differentiable $g_i$ is still a subsolution of the lower Isaacs equation even if $\bar{g}$ itself is not smooth.
\begin{sublemma} \label{slem:ming-subsolution}
$\bar{g}$ is a subsolution of the lower Isaacs equation \rref{eq:lower-Isaacs}.
\end{sublemma}
Since $\bar{g}$ is time-invariant, validity of the conclusion of \irref{diffgameind} follows with \rref{slem:payoff-lower-bound-transferral} from \rref{slem:ming-subsolution} like \rref{case:post>0} followed from \rref{slem:ming-subsolution1} using the observation that the combination of subformulas of $F$ that were true initially will remain true using \rref{lem:arithmetize}, because
\(0<\bar{g}(\eta,\iget[state]{\I})\leq V(\eta,\iget[state]{\I})\) for all $\eta$ and any initial state $\iget[state]{\I}$ that satisfies the antecedent \rref{eq:diffgameind-antecedent}.

\begin{subproof}[Subproof of \rref{slem:ming-subsolution}]
The proof idea from \rref{slem:ming-subsolution1} no longer works, because $\bar{g}$ has no differentials at points where the minimum switches from one term $g_i$ to another $g_j$ unless their differentials happen to align.
This proof uses superdifferentials instead.

The premise \rref{eq:diffgameind-premise} in this case yields
\begin{equation}
\entails \lforall{x}{\lexists{y\in Y}{\lforall{z\in Z}{\landfold_i \dder[\D{x}][f(x,y,z)]{(g_i\geq0)}}}}
\label{eq:diffgameind-premise-closed}
\end{equation}
which, in mathematical metalanguage corresponds to
\begin{equation}
\mforall{x}{\mexists{y\in Y}{\mforall{z\in Z}{f(x,y,z)\stimes\sder{g_i}(x)\geq0~\text{for all}~i}}}
\label{eq:mdiffgameind-premise-closed}
\end{equation}
because \(\dder[\D{x}][f(x,y,z)]{(g_i\geq0)}\)
is \(\subst[\der{g_i}]{\D{x}}{f(x,y,z)}\geq0\),
which is \(\sdder[\D{x}][f(x,y,z)]{g_i}(x) \geq0\) by \rref{lem:Lie-relation}.
Proving that $\bar{g}$ is a subsolution of lower Isaacs PDE \rref{eq:lower-Isaacs} requires proving
\begin{equation}
\underbrace{\tau}_0 + \max_{y\in Y} \min_{z\in Z} f(x,y,z)\stimes p \geq0
~ \text{for all \((\tau,p)\in\superdiff{\bar{g}}(t,x)\) and all $x\in\linterpretations{\Sigma}{V}$}
\label{eq:ming-subsolution}
\end{equation}
Since $\bar{g}$ is time-invariant, it is differentiable by $t$ with derivative $0$ everywhere, hence the time component of its superdifferential coincides with the classical gradient $0$ by \rref{lem:subsuperdifferentials}.
Dropping time from the notation simplifies condition \rref{eq:ming-subsolution} to:
\begin{equation}
\max_{y\in Y} \min_{z\in Z} f(x,y,z)\stimes p \geq0
~\text{for all}~p\in\superdiff{\bar{g}}(x)
~\text{and all}~x\in\linterpretations{\Sigma}{V}
\label{eq:ming-subsolution-droptime}
\end{equation}
Rephrasing \rref{eq:ming-subsolution-droptime}, it remains to show:
\[
\mforall{x\in\linterpretations{\Sigma}{V}} \mforall{p\in\superdiff{\bar{g}(x)}} \mexists{y\in Y} \mforall{z\in Z} f(x,y,z)\stimes p \geq0
\]
For any $x$, using the corresponding \(y\in Y\) from \rref{eq:mdiffgameind-premise-closed}, this is true for all $z\in Z$ and all $i$:
\[
f(x,y,z)\stimes\superdiff{g_i}(x)\geq0
~\text{that is}~ f(x,y,z)\stimes\sder{g_i}(x)\geq0
\]
because \(\superdiff{g_i}(x)=\{\sder{g_i}(x)\}\) by \rref{lem:subsuperdifferentials}.
According to \rref{lem:superdiffmin}, all convex generators of \(\superdiff{\bar{g}}\), thus, satisfy that same property, which continues to hold for convex combinations, since
for any \(p,q\in\superdiff{\bar{g}}(x)\) and \(\lambda\in[0,1]\):
\[
f(x,y,z)\stimes(\lambda p+(1-\lambda)q)
=
\lambda f(x,y,z)\stimes  p + (1-\lambda) f(x,y,z)\stimes q \geq0
\]
This proves \rref{eq:ming-subsolution-droptime}, so that $\bar{g}$ is a subsolution of \rref{eq:lower-Isaacs}.
\end{subproof}

\item \label{case:closed-post}
The case where $F$ is atomically closed  proceeds as in \rref{case:open-post}.
The premise of \irref{diffgameind} is equivalent to the premise in \rref{case:open-post}, because \m{\der{a\geq b}} and \m{\der{a>b}} are equivalent by \rref{def:derivation}.
The additional thought for closed sets is as for \rref{case:post>=0}.
Since \m{\bar{g}} is a subsolution, the same combination of subformulas of $F$ that were true initially will remain true.

\item \label{case:any-post}
The case where $F$ is any first-order formula (quantifier-free by quantifier elimination \cite{tarski_decisionalgebra51}) reduces to \rref{case:closed-post}.
By congruence, it is enough to consider the case where $F$ is normalized by \((a<b) \mequiv (b-a>0)\) and \((a=b) \mequiv (a-b\geq0 \land b-a\geq0)\) etc.\ so that it is built with $\land,\lor$ from formulas of the form \m{g_i\geq0} or \m{h_j>0}.
Replace every strict inequality \(h_j>0\) in $F$ that is true in the initial state $\iget[state]{\I}$ by a new weak inequality \(g_j\geq0\) with the term \(g_j \mdefeq h_j-a_j\), which is still true in the initial state when choosing the constant \(a_j \mdefeq h_j(\iget[state]{\I})>0\).
Replace every strict inequality \(h_j>0\) that is not true in the initial state $\iget[state]{\I}$ by \(-1\geq0\).
The resulting formula $G$ is closed, true in the initial state, and, if Demon has a strategy to achieve $G$, then, by monotonicity of winning regions (rule \irref{M} in \rref{app:dGL-HG-axiomatization}), he also has a strategy to achieve the original $F$, because \(\entails G\limply F\).
\rref{case:closed-post} implies that Demon can achieve $G$, because the premise of \irref{diffgameind} that \rref{case:closed-post} assumes for $G$ is implied by the premise for $F$ since \(\der{h_j>0}\) is equivalent to \(\der{h_j\geq0}\) which is equivalent to \(\der{h_j-a_j\geq0}\) by \rref{def:derivation} as \(\der{a_j}=0\) for constant $a_j$.
Likewise \(\der{-1\geq0} \mequiv (0\geq0)\) is trivially implied.
\end{inparaenum}

This concludes the proof of \rref{thm:diffgameind}, demonstrating soundness for rule \irref{diffgameind}.
\end{proof}

\subsection{Soundness of Differential Game Variants} \label{sec:diffgamefin}

Since rule \irref{diffgamefin} settles for a conservative quantifier pattern, the soundness proof for rule \irref{diffgameind} can be adapted more easily to prove soundness of rule \irref{diffgamefin} as well.

\begin{theorem}[Differential game variants] \label{thm:diffgamefin}
  Differential game variants are sound (proof rule \irref{diffgamefin}).
\end{theorem}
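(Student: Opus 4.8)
The plan is to mirror the soundness proof of \irref{diffgameind} (\rref{thm:diffgameind}), replacing Demon's maintenance argument for a closed set by an Angel reachability argument driven by the strict progress margin $\varepsilon$. Assume the premise of \irref{diffgamefin} valid and fix an initial state $\iget[state]{\I}$. By \rref{lem:Lie-relation}, the differential game derivative $\dder[\D{x}][f(x,y,z)]{g}$ evaluates to $f(x,y,z)\stimes D_x g(x)$, so the premise says: there is an $\varepsilon>0$ such that for every $x$ there is a $z\in\imodel{\I}{Z}$ with $f(x,y,z)\stimes D_x g(x)\geq\varepsilon$ for all $y\in\imodel{\I}{Y}$ whenever $g(x)\leq0$. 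If $g(\iget[state]{\I})\geq0$ already, Angel wins immediately with horizon $T=0$ and stopping time $\zeta=0$, so assume $g(\iget[state]{\I})<0$. By the well-definedness reduction (\rref{lem:well-defined}) $f$ and $g$ may be taken bounded on the relevant compact ball, so $g$ has a finite lower bound, which lets Angel fix a finite horizon.

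Because Angel controls $z$ and moves last while trying to make $g$ large, the governing quantity is the Angel-maximizing value of the game (equivalently, the lower value for payoff $-g$), whose sign characterizes Angel's winning region by the player-dual of \rref{lem:lower-value}. Exactly as in \rref{slem:payoff-lower-bound-transferral}, I would first dispose of the stopping-time quantifier $\zeta$ by passing to the frozen differential game via \rref{lem:freezing}: Angel's winning regions for $\{g\geq0\}$ in the game and in its frozen version coincide, and in the frozen game Angel can lock the state with freeze factor $c=0$ the instant $g$ reaches $0$. By \rref{lem:frozen-Isaacs} the frozen Hamiltonian is $\min(0,H)$, which is precisely what will neutralize the region $\{g>0\}$ where the premise promises nothing.

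The analytic heart is a barrier argument. The conservative quantifier order $\exists z\,\forall y$ gives, by the minimax inequality $\max_z\min_y\leq\min_y\max_z$, that the Hamiltonian value $\min_{y\in Y}\max_{z\in Z} f(x,y,z)\stimes D_x g(x)\geq\varepsilon$ at every $x$ with $g(x)\leq0$ — the inequality needed for a subsolution check. I would then verify that the time-dependent barrier $w(t,x)\mdefeq g(x)+\varepsilon(T-t)$ is a viscosity subsolution of the (frozen) Isaacs equation for the Angel-maximizing value throughout $\{g\leq0\}$: since $w$ is smooth, $w_t=-\varepsilon$ and the Hamiltonian term is $\geq\varepsilon$ there, so $w_t+H\geq0$; on $\{g>0\}$ the $\min(0,\cdot)$ of the frozen Hamiltonian together with the already-won status of those states makes the barrier harmless, just as the trivial $0+0\geq0$ handled the frozen equation in \rref{slem:payoff-lower-bound-transferral}. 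Since $w$ and the value agree at $t=T$ (both equal $g$), the comparison theorem \rref{thm:comparison} (or monotone comparison \rref{cor:monotone-comparison}) propagates this to $t=0$, so the value at $(0,\iget[state]{\I})$ is at least $g(\iget[state]{\I})+\varepsilon T$. Choosing the finite horizon $T>-g(\iget[state]{\I})/\varepsilon$ (possible since $g$ is bounded below) makes this strictly positive.

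A strictly positive value then yields Angel's strategy by the strict sign characterization (the maximizer counterpart of \rref{case:V>0} in \rref{lem:lower-value}): there is a single $\beta\in\stratA{\eta}$ with $g(\response[f]{T}{\iget[state]{\I}}{y}{\beta(y)})>0$ for all $y\in\controlD{\eta}$ in the frozen game, hence $\iget[state]{\I}\in\strategyfor[\pdiffgame{\D{x}=\genDG{x}{y}{z}}{y\in Y}{z\in Z}]{\imodel{\I}{g\geq0}}$ by \rref{lem:freezing}; as $\iget[state]{\I}$ was arbitrary this validates the conclusion. The strict margin $\varepsilon$ is exactly what keeps the argument in the easier strict/open regime (target $\{g>0\}\subseteq\{g\geq0\}$), so that the delicate closed-set lemma \rref{lem:V>=0} is avoided, which is why the conservative quantifier pattern simplifies matters relative to \irref{diffgameind}. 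The main obstacle I anticipate is bookkeeping the region $\{g>0\}$ where the premise gives no progress: the barrier is only a one-sided solution on $\{g\leq0\}$, so making the freeze factor / max-over-time (obstacle) formulation absorb $\{g>0\}$ rigorously, while simultaneously orienting the value and its sign for Angel-as-maximizer, is the step demanding the most care.
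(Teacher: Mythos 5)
Your overall architecture (a time-linear barrier $g(x)+\text{const}\cdot(T-t)$, shown to be a smooth subsolution of an Isaacs equation via \rref{lem:Lie-relation}, pushed through the comparison theorem to bound the value, then converted to a winning strategy by a sign-of-value lemma) is the same as the paper's, and your observation that the conservative quantifier order keeps both Hamiltonians $\geq\varepsilon$ is correct. But the freezing step is a genuine error, not just a detail to be checked. By \rref{lem:frozen-Isaacs} the frozen Hamiltonian is $\min(0,H)$, and on $\{g\leq0\}$ the premise gives $H\geq\varepsilon>0$, hence $\min(0,H)=0$ there; your barrier has $w_t=-\varepsilon$, so the frozen subsolution inequality reads $-\varepsilon+0\geq0$, which is false \emph{everywhere on the region where the premise helps you}, not merely on $\{g>0\}$. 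The clipping $\min(0,\cdot)$ that was harmless for the time-independent barrier in \rref{slem:payoff-lower-bound-transferral} (where $0+0\geq0$ sufficed) destroys exactly the $\varepsilon$ of progress that the variant rule lives on, so the comparison step $w\leq V$ cannot be run for the frozen game. The paper explicitly flags this and does \emph{not} freeze: in the $\ddiamond{\cdot}{}$ semantics of \rref{def:DHG-semantics} the stopping time $\zeta$ is existentially quantified for Angel (after seeing $y$), so she simply commits to $\zeta=T$; freezing was only ever needed in the invariant case because there $\zeta$ is adversarial. Instead, the paper handles the problem you correctly identify as the hard one --- the premise says nothing on $\{g>0\}$ --- by taking the barrier $g(x)+\frac{\varepsilon}{2}(T-t)$, so that the (unfrozen) subsolution quantity is $\geq\varepsilon-\frac{\varepsilon}{2}>0$ on the closed set $\{g\leq0\}$, extends by continuity to an open neighborhood $\mathcal{U}$ of it, and the comparison theorem is applied on $\mathcal{U}$; continuity of $s\mapsto g(\response[f]{s}{\iget[state]{\I}}{y}{\beta(y)})$ then guarantees Angel enters $\{g\geq0\}$ before leaving $\mathcal{U}$.

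Your full-$\varepsilon$ barrier leaves no such slack: the subsolution inequality holds only with $\geq0$ on the closed set $\{g\leq0\}$ and cannot be extended to the open domain that \rref{thm:comparison} (and \rref{def:viscosity}, which is stated on open $\Omega$) requires, so even after deleting the freezing step your argument still needs the $\varepsilon/2$ trick (or an equivalent uniform continuation of the Hamiltonian, which the paper mentions as an alternative via Tietze and rule \irref{diffgamerefine}). Finally, your choice $T>-g(\iget[state]{\I})/\varepsilon$ should become $T>-\frac{2}{\varepsilon}g(\iget[state]{\I})$ accordingly, and the orientation of the value should be pinned down the way the paper does it --- by swapping the players' controls and reusing the lower value $V$ with payoff $g$ --- rather than appealing to a "player-dual of \rref{lem:lower-value}" for payoff $-g$, which is never actually proved in the paper and would need its own argument.
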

\begin{proof}
Let \(\imodels{\I}{g<0}\), i.e.\ \(g(\iget[state]{\I})<0\), otherwise Angel wins by choosing \(T=0\).
The proof follows the same principle as the proof of \rref{thm:diffgameind} by using the duality \rref{thm:dGL-determined}, since the same game is played in \m{\dbox{\pdiffgame{\D{x}=\genDG{x}{y}{z}}{y\in Y}{z\in Z}}{}} and \m{\ddiamond{\pdiffgame{\D{x}=\genDG{x}{y}{z}}{y\in Y}{z\in Z}}{}} with the same partition of control advantage and information just from another player's perspective.
To facilitate proof reuse, rule \irref{diffgamefin} uses a conservative information pattern, so that the duality allows to swap player controls and consider
\(\dbox{\pdiffgame{\D{x}=\genDG{x}{y}{z}}{z\in Z}{y\in Y}}{(g\geq0)}\).
This formula cannot be expected to be true, since the initial state does not need to satisfy \m{g\geq0}, for Angel would stop right away then.
Yet, the study of its value will still prove to be informative and, in particular, reuse the proof of \rref{thm:diffgameind}.
The only, but critical, change is that \irref{diffgamefin} does not assume the postcondition to hold in the beginning and, instead, requires a proof that it will finally be reached.
This leads to the following variation on the choice of the subsolution for the comparison theorem.
Let $\varepsilon\in\reals$ be the value whose existence the premise shows.
For postcondition formula \m{g\geq0}, consider 
\(\bar{g}(t,x) \mdefeq g(x)+\frac{\varepsilon}{2}(T-t)\).
This $\bar{g}$ is smooth, so, by \rref{lem:subsuperdifferentials}, a subsolution of the lower Isaacs equation \rref{eq:lower-Isaacs} iff:
\begin{equation}
\underbrace{\bar{g}_t(t,x)}_{-\frac{\varepsilon}{2}} + \underbrace{\max_{y\in Y} \min_{z\in Z} f(x,y,z)\stimes D_x \bar{g} (t,x)}_{\geq\varepsilon} \geq0
\label{eq:barg-eps-subsolution}
\end{equation}
which again holds by premise using \rref{lem:Lie-relation} if its assumption \(g(x)\leq0\) holds.
The left-hand side of \rref{eq:barg-eps-subsolution} is $\geq\varepsilon-\frac{\varepsilon}{2}>0$ on the closed set \m{\imodel{\I}{g\leq0}}, and is a continuous function, so it continues to be ${>}0$ on sufficiently small neighborhoods of \m{\imodel{\I}{g\leq0}}.
Thus, the proof in \rref{thm:diffgameind} continues to work when restricting the domain to a sufficiently small open neighborhood $\mathcal{U}$ of \m{\imodel{\I}{g\leq0}}.
Since \m{\bar{g}(\eta,\iget[state]{\I})\leq V(\eta,\iget[state]{\I})} follows from \rref{thm:comparison} as in \rref{thm:diffgameind}, \rref{lem:V>=0} implies the conclusion of \irref{diffgamefin} if \(0\leq V(0,\iget[state]{\I})\), which will happen for large enough time horizons $T$ according to the definition of $\bar{g}$.
In particular, \(0<\bar{g}(\eta,\iget[state]{\I})\leq V(\eta,\iget[state]{\I})\) when $T$ is sufficiently large, e.g. \(T>-\frac{2}{\varepsilon}g(\iget[state]{\I})>0\), which is under Angel's control.
The existence of a (unique) solution of such a duration $T$ follows from Perron's existence theorem for Hamilton-Jacobi PDEs \cite[Thm.\,7.1]{Barles13}.

For this time horizon $T$, by \rref{lem:lower-value}, player Demon of the flipped game, who plays for Angel's controls of the original differential game, will ultimately be in a state where \m{g\geq0}, if he just happens to be lucky that such a long time is played and the game does not stop prematurely, so \(\zeta=T\) is chosen, in which case \rref{eq:lower-Isaacs} characterizes the lower value (otherwise the frozen Isaacs Hamiltonian \rref{eq:lower-frozen-Isaacs-Hamiltonian} would apply so that \rref{eq:barg-eps-subsolution} stops holding).
For the original differential game, in which Angel is in charge of controlling the time, this means that she can win \m{g\geq0} by just playing long enough, which is under her control, and by limiting herself to \(\zeta=T\), which is her choice, too.
Since \(0<\bar{g}(\eta,\iget[state]{\I})\leq V(\eta,\iget[state]{\I})\) for all $\eta$ for this $T$, and \(g(\response{s}{\iget[state]{\I}}{y}{\beta(y)})\) is continuous in $s$ (\rref{lem:response}), Angel will win into \(\imodel{\I}{g\geq0}\) before leaving the open neighborhood $\mathcal{U}$ of \m{\imodel{\I}{g\leq0}}.
\end{proof}

It is of apparent significance for the soundness of rule \irref{diffgamefin} that the lower bound $\varepsilon$ holds for all $x$, not just that there is an $\varepsilon$ for every $x$.
Otherwise, the progress might converge (long) before \m{g\geq0} is reached.
It is also possible to prove soundness of \irref{diffgamefin} based on the soundness proof of rule \irref{diffgamerefine}.
That works by replacing the Hamiltonian in \rref{eq:barg-eps-subsolution} by a uniformly continuous continuation $J$ (which exists by Tietze \cite[2.19]{Walter:Ana2}) to the full space, which agrees with the Hamiltonian from \rref{eq:barg-eps-subsolution} on the open neighborhood $\mathcal{U}$ of \m{\imodel{\I}{g\leq0}} and shares the same lower bound $\varepsilon$, but globally.
The proof then uses soundness of the $\ddiamond{\cdot}{}$ dual of rule \irref{diffgamerefine} to show that the original game has a winning strategy since the game corresponding to $J$ has a winning strategy for \m{g\geq0}.
The only additional thought is that it is enough to restrict the premise of \irref{diffgamerefine} to the set of $x$ that can occur during the game starting from $\iget[state]{\I}$, which is where the values of the original game and the one for the Hamiltonian $J$ coincide by Tietze \cite[2.19]{Walter:Ana2}.
\section{Differential Game Embeddings} \label{sec:differential-game-embedding}

The previous sections have immersed differential games within hybrid games to form differential hybrid games and studied how their properties can be proved.
This is a useful approach in practice.
The alternative is to understand how differential games relate to (non-differential) hybrid games from a theoretical perspective.
The logic \(\dGL_{\text{HG}}\) from \cite{DBLP:journals/tocl/Platzer15} is differential game logic of \emph{hybrid games}, which is the fragment of \dGL that has no differential games, except differential equations \(\pevolve{\D{x}=f(x)}\).
The logic \(\dGL_{\text{DG}}\) is differential game logic of \emph{differential games} in which all games are of the form \(\pdiffgame{\D{x}=\genDG{x}{y}{z}}{y\in Y}{z\in Z}\).
For emphasis, \(\dGL_{\text{DHG}}\) is differential game logic \dGL for full \emph{differential hybrid games} in which all operators of \rref{def:dGL-DHG} are allowed.
Tracing in \dGL the characterizations developed here \emph{only for open or closed postconditions} gives:

\begin{theorem}[Differential game characterization] \label{thm:DG<=HG}
  Differential games are hybrid games, i.e.\
  \(\dGL_{\textup{DHG}}\) and \(\dGL_{\textup{HG}}\) are equally expressive:\footnote{Logic $\mathcal{B}$ is at least as expressive as $\mathcal{A}$, written \(\mathcal{A}\leq \mathcal{B}\) if every formula of $\mathcal{A}$ can be expressed by an equivalent formula of $\mathcal{B}$.
  Further, \(\mathcal{A}\mequiv \mathcal{B}\) if \(\mathcal{A}\leq \mathcal{B}\) and \(\mathcal{B}\leq \mathcal{A}\).
  And \(\mathcal{A}<\mathcal{B}\) if \(\mathcal{A}\leq \mathcal{B}\) but not \(\mathcal{B}<\mathcal{A}\).
} \(\dGL_{\textup{HG}} \mequiv \dGL_{\textup{DHG}}\).
\end{theorem}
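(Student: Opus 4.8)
The plan is to prove the two expressibility inclusions separately. The direction \(\dGL_{\textup{HG}}\leq\dGL_{\textup{DHG}}\) is immediate: the grammar of \(\dGL_{\textup{HG}}\) is a syntactic fragment of that of \(\dGL_{\textup{DHG}}\) (\rref{def:dGL-DHG}), and the compositional clauses of \rref{def:DHG-semantics} restrict verbatim, so every \(\dGL_{\textup{HG}}\) formula already \emph{is} a \(\dGL_{\textup{DHG}}\) formula denoting the same winning region. For the substantive direction \(\dGL_{\textup{DHG}}\leq\dGL_{\textup{HG}}\) I would argue by induction on the structure of a \(\dGL_{\textup{DHG}}\) formula, using the compositionality of the semantics. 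All logical connectives and quantifiers and all hybrid-game operators \(\pupdate{\pumod{x}{\theta}}\), \(\ptest{\ivr}\), \(\pchoice{\alpha}{\beta}\), \(\alpha;\beta\), \(\prepeat{\alpha}\), \(\pdual{\alpha}\) are shared with \(\dGL_{\textup{HG}}\), so by the induction hypothesis the only case requiring work is a single differential-game modality \(\dbox{\pdiffgame{\D{x}=\genDG{x}{y}{z}}{y\in Y}{z\in Z}}{F}\); its \(\ddiamond\) dual reduces to this by determinacy (\rref{thm:dGL-determined}). Following the article's focus, \(F\) may be taken open or closed after arithmetization (\rref{lem:arithmetize}).

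The idea is to replace this differential game by the hybrid game in which the continuous-time controls are resolved in discrete rounds respecting the differential game's information order: Demon first commits to a constant \(y\in Y\), then Angel, seeing it, commits to a constant \(z\in Z\), and then the now deterministic equation is evolved, with Angel controlling both each segment's duration and whether another round is played, namely
\[
\dbox{\prepeat{\bigl(\pdual{\prandom{y}};\dtest{(y\in Y)};\prandom{z};\ptest{(z\in Z)};\pevolve{\D{x}=\genDG{x}{y}{z}}\bigr)}}{F}.
\]
This mirrors the quantifier pattern of \rref{def:DHG-semantics}: the dualized assignment and test \(\pdual{\prandom{y}};\dtest{(y\in Y)}\) force Demon to supply \(y\in Y\) first, \(\prandom{z};\ptest{(z\in Z)}\) forces Angel to respond with \(z\in Z\) (reacting to the current \(y\), as a nonanticipative strategy may), the box over \(\pevolve{}\) demands \(F\) for every duration Angel picks (matching \(\forall\zeta\)), and Angel's control of the least-fixpoint repetition builds up the finite horizon \(T\).

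It then remains to show this hybrid game has the same winning region as the differential game, which I would obtain by identifying both with the sign of the lower value \(V\) of \rref{eq:lower-value}. On the differential-game side, the winning region for closed \(F\) is \(\{\iget[state]{\I}:V(0,\iget[state]{\I})\geq0\text{ for all }T\}\) by \rref{lem:V>=0}, and for open \(F\) it is the \({>}0\) set by \rref{lem:lower-value}, with \rref{lem:nostopping-whenfrozen} and \rref{lem:freezing} absorbing Angel's stopping time into frozen dynamics so premature stopping plays no role. On the hybrid-game side, the winning region is the sign set of the game's own value \(W\), which by the fixpoint semantics of \(\prepeat{}\) obeys exactly the dynamic-programming recurrence that \(V\) satisfies by \rref{eq:V-dynamic-programming}. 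The step I expect to be the main obstacle is proving \(W=V\), i.e.\ that confining each round to constant controls costs nothing: since Angel controls the partition and may refine it arbitrarily while \(\prepeat{}\) permits unboundedly many rounds, piecewise-constant controls realize the same value as arbitrary measurable controls. I would justify this from the continuous dependence of responses on controls in the product topology (\rref{lem:continuous-response}), continuity of the payoff, and the resulting density of piecewise-constant controls, in the same spirit as the proof of \rref{lem:V>=0}. With \(W=V\) established the two sign sets coincide, the \(\ddiamond\) case closes by \rref{thm:dGL-determined}, the induction completes \(\dGL_{\textup{DHG}}\leq\dGL_{\textup{HG}}\), and combined with the trivial inclusion this yields \(\dGL_{\textup{HG}}\mequiv\dGL_{\textup{DHG}}\).
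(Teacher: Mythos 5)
Your reduction is a genuinely different route from the paper's: the paper does \emph{not} discretize the game into rounds, but instead characterizes the winning region as the sign set of the lower value $V$, invokes \rref{thm:differential-game-Isaacs} to pin $V$ down as the \emph{unique} bounded uniformly continuous viscosity solution of the (frozen, by \rref{lem:frozen-Isaacs}) lower Isaacs PDE, and then shows that ``being that unique viscosity solution'' is itself expressible in hybrid-games logic via the $\reals$-G\"odel encoding of continuous functions (\rref{cor:continuoussGoedel}), with the sub-/superdifferential conditions of \rref{def:subsuperdifferentials} written out as first-order $\limsup$ conditions. Your front end (reduction to the single modality case, determinacy for the $\ddiamond{}{}$ dual, arithmetization, absorbing the stopping time via \rref{lem:nostopping-whenfrozen} and \rref{lem:freezing}) matches the paper and is fine.

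The gap is the step $W=V$, and it is not a routine obstacle but the entire mathematical content of the claim. First, the assertion that the round-based value $W$ ``obeys exactly the dynamic-programming recurrence \rref{eq:V-dynamic-programming}'' is false as stated: that recurrence takes $\inf_{\beta\in\stratA{\eta}}\sup_{y\in\controlD{\eta}}$ over nonanticipative strategies and \emph{measurable} controls on each subinterval, whereas your game's recurrence quantifies over \emph{constants} per round; whether restricting to constants preserves the fixpoint is precisely the question. Second, the proposed justification does not deliver it: piecewise-constant controls are not dense in $\controlD{\eta}$ in the product topology of pointwise convergence used by \rref{lem:continuous-response} (and density of controls would in any case only control a fixed functional, not the value of a game whose information pattern has changed --- in the differential game Demon picks one measurable $y$ knowing all of $\beta$ and $T$, while in your game he commits round by round to a constant before seeing Angel's $z$ for that round). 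The classical Fleming--Friedman discretizations that your construction resembles yield convergence of the discrete values to $V$ only in the limit as the mesh tends to $0$, which is a statement about values, not about the winning region: at states with $V(0,\xi)=0$ the sign of the approximants need not stabilize, and this open-versus-closed boundary behaviour is exactly the subtlety that \rref{lem:V>=0} exists to handle and that the paper cites as a known source of soundness errors in the literature. Without an argument that the Angel-refinable round game has \emph{exactly} the winning region cut out by $V>0$ (resp.\ $V\geq0$), rather than a value converging to $V$, the reduction does not go through.
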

\begin{proofatend}
This proof uses the encoding results in \rref{app:differential-game-embedding}.
The nontrivial direction \(\dGL_{\text{DHG}} \leq \dGL_{\text{HG}}\) can be shown by a careful analysis of the constructions involved in characterizing differential games. 
The original definition of differential games and their behavior in terms of nonanticipative strategies and measurable functions of control input does not lead to a characterization without facing substantial challenges of having to characterize higher-order quantification in large-cardinality function classes.
The indirect characterization of a differential game in terms of its Isaacs PDEs proves to be more useful.
Using expressiveness results for the base logic \cite{DBLP:conf/lics/Platzer12b,DBLP:journals/tocl/Platzer15}, it is enough to consider the new differential game cases
\begin{equation}
  \dbox{\pdiffgame{\D{x}=\genDG{x}{y}{z}}{y\in Y}{z\in Z}}{F}
  \label{eq:pdiffgamebox}
\end{equation}
and
\(\ddiamond{\pdiffgame{\D{x}=\genDG{x}{y}{z}}{y\in Y}{z\in Z}}{F}\).
By \rref{thm:dGL-determined} it is enough to consider just \rref{eq:pdiffgamebox}.

\newcommand{\Ix}{\I[\xi]}%

\begin{inparaenum}[\noindent\itshape 1\upshape)]
\item \label{case:open-embedding}
Consider the case where $F$ is atomically open.
By \rref{lem:freezing}, \rref{eq:pdiffgamebox} is equivalent to its frozen analogon\footnote{%
As in \rref{slem:payoff-lower-bound-transferral} of \rref{thm:diffgameind}, \rref{thm:DG<=HG} can alternatively be proved using the (more involved) Isaacs PDEs for infimum cost \cite{Serea02} instead of the frozen differential game from \rref{lem:freezing}.
}
\(\dbox{\pdiffgame{\D{x}=c\genDG{x}{y}{z}}{y\in Y}{z\in Z\land c\in[0,1]}}{F}\).
By \rref{lem:nostopping-whenfrozen}, the latter needs no premature stopping and is true in a state $\iget[state]{\Ix}$ iff \(V(0,\iget[state]{\Ix})>0\) for all $T\geq0$, using the realization \(g\mdefeq\arithmetize{F}\) as payoff (using \rref{lem:well-defined}).
By \rref{lem:frozen-Isaacs}, $V$ satisfies the lower Isaacs equation \rref{eq:lower-Isaacs} with the Hamiltonian \rref{eq:lower-frozen-Isaacs-Hamiltonian}.
Thus, \rref{eq:pdiffgamebox}
is true in $\iget[state]{\Ix}$ iff \(V(0,\iget[state]{\Ix})>0\) for all $T\geq0$.
The quantification over time horizon $T$ is definable in \dGL.
So is the condition whether the state characterized by a variable vector $x$ satisfies \(V(0,x)>0\) provided that $V$ and its evaluation can be characterized, which is what \rref{cor:continuoussGoedel} in \rref{app:differential-game-embedding} shows since $V$ is continuous by \rref{thm:UV-bounded-Lipschitz}.
By \rref{thm:differential-game-Isaacs}, $V$ is the unique bounded, uniformly continuous viscosity solution of the lower Isaacs equation \rref{eq:lower-Isaacs} with the Hamiltonian \rref{eq:lower-frozen-Isaacs-Hamiltonian} from \rref{lem:frozen-Isaacs}.
Boundedness and uniform continuity are characterizable with first-order logic over the reals (see \rref{sec:preliminaries}), since evaluation of $V$ is by \rref{cor:continuoussGoedel}.
The terminal condition,
\(V(T,x) = g(x)\) for all $x$, is characterizable by quantification and evaluation by \rref{cor:continuoussGoedel}.
The fact that $V$ solves the (by \rref{lem:frozen-Isaacs} frozen) Isaacs equation
\[
  V_t + \max_{y\in Y} \min_{z\in Z} \min_{c\in[0,1]} c\itimes f(x,y,z)\stimes DV = 0
\]
can be characterized by the definable condition
\[
  \tau + \max_{y\in Y} \min_{z\in Z} \min_{c\in[0,1]} c\itimes f(x,y,z)\stimes p \geq 0
  ~~
  \mforallr{(\tau,p)\in\superdiff{V}(t,x)}
\]
provided quantification over all superdifferentials \((\tau,p)\) is definable.
Once that succeeds, the argument is the same to characterize that $V$ is a viscosity supersolution.

Dropping the time coordinates $t,\tau$ from the notation for simplicity,
\rref{def:subsuperdifferentials} implies that
\(p\in\superdiff{V}(x)\) iff
\[
\limsup_{y\to x} \frac{V(y)-V(x)-p\stimes(y-x)}{\norm{y-x}} \leq0
\]
which is characterizable as follows.
Abbreviating \(({V(y)-V(x)-p\stimes(y-x)})/{\norm{y-x}}\) by $h(y)$, which is definable, the above can be rephrased equivalently using:
\[
\limsup_{y\to x} h(y) = \inf_{\varepsilon>0} \sup \{h(y) \with 0<\norm{y-x}<\varepsilon\}
\]
Whether, for an $\varepsilon>0$, the inner $\sup$ has value $s$ is definable as a least upper bound:
\[
\lforall{y}{(0<\norm{y-x}<\varepsilon \limply s\geq h(y))}
\land\\
\lforall{b}{(\lforall{y}{(0<\norm{y-x}<\varepsilon \limply b\geq h(y))} \limply s\leq b)}
\]
A similar first-order formula characterizes the value of the outer $\inf$ in terms of this $s$.

As viscosity supersolution conditions are definable correspondingly, the set of states where \dGL formula \rref{eq:pdiffgamebox} is true is characterizable in \dGL without differential games.

\item \label{case:closed-embedding}
The case of closed formulas $F$ is accordingly, using the criterion \rref{case:V>=0} from \rref{lem:lower-value} or \rref{lem:V>=0} instead.
\end{inparaenum}
Note that the elegant layered approach for hybrid systems logic \dL, which is based on lifting complete approaches for open formulas to closed and then to arbitrary formulas \cite{DBLP:conf/lics/Platzer12b}, does not work for \dGL, because the Barcan axiom of \dL that it rests on is not sound for \dGL \cite{DBLP:journals/tocl/Platzer15}.
\end{proofatend}

\begin{theorem}[Expressive power] \label{thm:DG<HG}
  Differential games are strictly less expressive than hybrid games, i.e.\ \(\dGL_{\textup{DG}}\) is less expressive than \(\dGL_{\textup{HG}}\):
  \(\dGL_{\textup{DG}} < \dGL_{\textup{HG}}\).
\end{theorem}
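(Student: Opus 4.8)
The plan is to prove the two halves of the strict inequality separately: the inclusion \(\dGL_{\textup{DG}} \leq \dGL_{\textup{HG}}\), which is immediate, and the failure of the converse \(\dGL_{\textup{HG}} \not\leq \dGL_{\textup{DG}}\), which carries all the content. (I read the footnote's definition of \(<\) as ``\(\mathcal{A}\leq\mathcal{B}\) but \(\mathcal{B}\not\leq\mathcal{A}\)''.) For the inclusion, every \(\dGL_{\textup{DG}}\) formula is literally a \(\dGL_{\textup{DHG}}\) formula, since its only games are differential games and \rref{def:dGL-DHG} permits those; hence \(\dGL_{\textup{DG}} \leq \dGL_{\textup{DHG}}\) trivially, and composing with \(\dGL_{\textup{DHG}} \mequiv \dGL_{\textup{HG}}\) from \rref{thm:DG<=HG} gives \(\dGL_{\textup{DG}} \leq \dGL_{\textup{HG}}\). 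It remains to separate them.

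For strictness I would exhibit a \(\dGL_{\textup{HG}}\) formula \(\psi\) whose defined set cannot be the defined set of any \(\dGL_{\textup{DG}}\) formula. The finite game combinators \({;},\cup,\ptest{\ivr},\pupdate{\pumod{x}{\theta}},\pdual{\alpha}\) can all be simulated purely at the level of formula structure (nesting modalities, disjunction, conjunction, substitution, and the built-in \(\dbox{\cdot}{}/\ddiamond{\cdot}{}\) duality of a differential game), so the genuine addition of \(\dGL_{\textup{HG}}\) over \(\dGL_{\textup{DG}}\) is the unbounded repetition \(\prepeat{\alpha}\), whose least/greatest fixpoint semantics permits an unbounded, adversarially controlled number of alternating discrete rounds between Angel and Demon. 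I would take \(\psi\) to be a loop game of exactly this kind, one whose winning region is a genuine fixpoint, not equal to any finite unfolding, and requiring unbounded fixpoint alternation (a reach--avoid or parity-style objective realized by \(\prepeat{\cdot}\) under both players' control). The subtlety in choosing \(\psi\) is that \(\dGL_{\textup{DG}}\) is already highly expressive: as the semantics stresses, Angel's choice of an arbitrarily large but \emph{finite} time horizon \(T\) already gives each differential game a least-fixpoint (reachability) flavour in \(\ddiamond{\cdot}{}\) and a greatest-fixpoint (safety) flavour in \(\dbox{\cdot}{}\); using oscillatory flows together with \(\lexists{x}{}\) it can even detect integer times and so define infinite discrete sets. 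Thus \(\psi\) must isolate the one capability that continuous, fixed-protocol differential games cannot reproduce, namely \emph{unbounded alternation}, rather than mere reachability, safety, or discreteness.

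The non-expressibility I would then prove by structural induction on \(\dGL_{\textup{DG}}\) formulas, maintaining an invariant that bounds the fixpoint-alternation complexity of every definable winning region. Atomic comparisons give semialgebraic sets; \(\land,\lnot,\lexists{x}{}\) preserve the bound; and each differential-game modality contributes only a single bounded layer. Here I would use the machinery already developed: the frozen-game reduction (\rref{lem:freezing}, \rref{lem:nostopping-whenfrozen}) eliminates the stopping-time quantifier, and \rref{thm:differential-game-Isaacs} together with \rref{lem:lower-value} and \rref{lem:V>=0} shows that the winning region of one modality over an open or closed postcondition is exactly a sign condition on the continuous lower/upper value solving an Isaacs PDE, i.e. one quantification over the horizon \(T\) plus one viscosity sub/supersolution condition. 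Consequently every \(\dGL_{\textup{DG}}\)-definable set sits at a bounded level of the resulting hierarchy, whereas \(\psi\) provably sits strictly above it.

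The main obstacle is precisely this lower bound: making the ``bounded alternation level'' invariant precise and proving that the loop witness \(\psi\) genuinely exceeds it, i.e. that \(\psi\)'s winning region is not obtainable from finitely many single-layer differential-game modalities glued by first-order logic. This is delicate because a single differential game is itself fixpoint-like through its unbounded-but-finite time horizon, so the argument must pin down that the continuous, single-protocol interaction of a differential game cannot be iterated into the unbounded adversarial alternation that \(\prepeat{\cdot}\) provides under both players' control. The remaining wrinkle is that a modality's postcondition in \(\dGL_{\textup{DG}}\) need not itself be open or closed, so the clean value characterization used in \rref{thm:DG<=HG} does not apply verbatim; I would dispatch this by first pushing the induction through the Boolean and quantifier structure of postconditions, reducing to the atomic open/closed case where the Isaacs characterization is available.
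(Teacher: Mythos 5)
Your easy direction is fine, but the strictness half---which is the entire content of the theorem---is left as an acknowledged ``main obstacle'' rather than proved. You propose to separate \(\dGL_{\textup{HG}}\) from \(\dGL_{\textup{DG}}\) directly, by defining a fixpoint-alternation invariant on \(\dGL_{\textup{DG}}\)-definable sets and exhibiting a loop game \(\psi\) whose winning region exceeds every such level. Neither the invariant nor the witness is actually constructed, and the lower bound you would need (that \(\psi\)'s winning region is not obtainable from any finite nesting of differential-game modalities glued by first-order logic) is exactly the hard part; as you yourself observe, a single differential game already has a fixpoint flavour through the unbounded-but-finite horizon \(T\) and can define infinite discrete sets, so there is no cheap argument that the relevant hierarchy is strict. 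As it stands the proposal is a research plan with the decisive step missing.

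The paper avoids any new lower-bound argument by sharpening the upper bound instead: the characterization of differential games in the proof of \rref{thm:DG<=HG} (via \rref{lem:freezing}, \rref{lem:nostopping-whenfrozen}, the Isaacs PDE of \rref{thm:differential-game-Isaacs}, and the evaluation of continuous value functions through the G\"odel encodings of \rref{cor:continuoussGoedel}) uses only first-order quantification, differential equations, and encodings---no game constructs at all---so it lands in the hybrid \emph{systems} logic \dL, giving \(\dGL_{\textup{DG}} \leq \dL\). The already-known strict separation \(\dL < \dGL_{\textup{HG}}\) from \cite[Thm.\,5.3]{DBLP:journals/tocl/Platzer15} then yields \(\dGL_{\textup{DG}} < \dGL_{\textup{HG}}\) immediately. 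If you want to salvage your route, the cleanest repair is the same move: show that your bounded-alternation invariant actually places every \(\dGL_{\textup{DG}}\)-definable set among the \dL-definable sets, and then invoke the existing \dL versus \(\dGL_{\textup{HG}}\) separation rather than proving a fresh alternation lower bound from scratch.
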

\begin{proofatend}
The proof of \rref{thm:DG<=HG} does not rely on special features of hybrid games but continues to work when characterizing differential games in \dL, the corresponding logic of hybrid systems \cite{DBLP:conf/lics/Platzer12b}.
The result, thus, follows since \cite[Thm.\,5.3]{DBLP:journals/tocl/Platzer15} shows that hybrid systems are strictly less expressive than hybrid games.
\end{proofatend}

This is surprising, because the contrary holds for hybrid systems.
Hybrid systems are equivalently reducible to differential equations \cite{DBLP:conf/lics/Platzer12b}.
\rref{thm:DG<HG} shows that this situation reverses for differential games versus hybrid games.

\section{Related Work}

A general overview of the long history of differential games since their conception \cite{Isaacs:DiffGames,Friedman} and breakthroughs of their viscosity understanding \cite{DBLP:journals/diffeq/Souganidis85,DBLP:journals/diffeq/BarronEJ84,DBLP:journals/indianam/EvansSouganidis84} is discussed in the literature \cite{BardiRP99}.
The related work discussion here focuses on differential games as they relate to hybrid games.
Hybrid games themselves \cite{DBLP:conf/hybrid/NerodeRY96,DBLP:conf/concur/HenzingerHM99,DBLP:journals/IEEE/TomlinLS00,DBLP:journals/jopttapp/DharmattiDR06,DBLP:journals/corr/abs-0911-4833,DBLP:journals/tcs/VladimerouPVD11} are discussed elsewhere \cite{DBLP:journals/tocl/Platzer15}.
See \cite{DBLP:journals/tac/MitchellBT05} for a helpful overview of hybrid systems verification and how Lagrangian verification relates to Eulerian verification.
The relationship of differential games to (robust) control theory \cite{BlanchiniMilano}, which is interesting yet limited to piecewise continuous controls or linearity assumptions and, thus, does not give a sound approach for differential games with measurable inputs, is elaborated in the literature \cite{BardiRP99,DBLP:journals/tac/MitchellBT05,Cardaliaguet2007}.

Previous techniques for differential games revolve around numerically solving the PDEs that they induce \cite{BardiRP99,Isaacs:DiffGames,DBLP:journals/tac/MitchellBT05}, corresponding viability theory formulations \cite{Cardaliaguet2007}, or classically by passing to the limit when considering lower and upper time-discrete approximations with strategies changing at finitely many points \cite{Friedman}.
The latter cannot be implemented and its theoretical understanding has been revolutionized by the invention of viscosity solutions \cite{DBLP:journals/tams/CrandallL83,DBLP:journals/indianam/EvansSouganidis84,Barles13}.
The former are interesting but do not yield proofs, because PDEs are highly nontrivial to solve.
A number of subtle soundness issues have been reported \cite{DBLP:journals/tac/MitchellBT05} for different shapes of the respective sets.
These numerical approximation schemes cannot provide correctness guarantees, because their error is unbounded.
Unlike results in \dGL, numerical PDE solutions are also only for a fixed time horizon $T$.

Viability theory provides geometric notions for differential games with a robustness margin \cite{DBLP:journals/siamco/Aubin91,DBLP:journals/tcs/KohnNRY95,Cardaliaguet2007,DBLP:conf/hybrid/BayenCS07}.
Its algorithms converge to the correct answer only in the limit \cite{CardaliaguetQSP99}.
They give internally consistent answers on the discretization grid, but errors off the grid and outside the reachable set are still unbounded, and inherent discontinuities of value functions from viability theory complicate the numerics \cite{DBLP:journals/tac/MitchellBT05}.
Viability has been considered for hybrid systems \cite{DBLP:journals/tac/GaoLQ07} with affine dynamics and convexity assumptions and only if no input influences the discrete state, which goes against the spirit of hybridness.
To simplify the problem, continuous controls or strategies \cite{DBLP:journals/gamerev/SaintPierre04} or convex control images with affine dynamics are assumed \cite{DBLP:journals/siamco/Cardaliaguet96}; see \cite{Cardaliaguet2007,BardiRP99}.

Special-purpose cases for differential games where players play some limited form of hybrid input have been considered \cite{DBLP:journals/jopttapp/DharmattiDR06}.
There is an argument to be made in favor of more modular designs such as \dGL, where discrete and continuous games are integrated side-by-side as first-class citizens in a modular programming language, as opposed to all intermingled in one differential game.
The fact that systems become easier when understood as combinations of simpler elements has already been equally paramount for the success of hybrid systems \cite{DBLP:conf/lics/Platzer12a}.

Differential game logic for hybrid games \emph{without differential games} has been introduced along with an axiomatization and theoretical analysis in prior work \cite{DBLP:journals/tocl/Platzer15}.
Here, differential games are integrated modularly into hybrid games.
The focus is on the characterization, study, and proof principles of differential games, leveraging compositionality principles of logic to cover differential hybrid games.
This leads to the first sound proof approach for differential games and combinations with hybrid games.
The resulting differential hybrid games are the only games that support both discrete and continuous state change with adversarial dynamical interaction during both.

\section{Conclusions and Future Work}

Differential game invariants, variants, and refinements are simple and sound inductive proof techniques for differential games, which embed compositionally into differential hybrid games.
The primary challenge was their soundness proof, which uses superdifferentials to show that their arithmetizations are viscosity subsolutions of the Isaacs PDE characterizing the lower value whose sign characterizes winning regions.

Induction can be defined in different ways for differential equations such as checking near boundaries with sufficient care to prevent soundness issues.
Similar flexibility is expected for differential games, for which differential game invariants are the first induction principle.
In passing, \rref{thm:diffgameind} showed soundness of superdifferentials for differential invariants, which will be investigated in future work.
Recent advances in generating differential invariants should generalize to differential game invariants.

\appendix

\printproofs

\section{Encoding Proofs for Embedding} \label{app:differential-game-embedding}

\newcommand{\ati}[3]{#1^{(#2)}_{#3}}%
The hybrid systems logic \dL \cite{DBLP:conf/lics/Platzer12b} is the sublogic of \dGL that has differential equations but neither duality $\pdual{}$ nor differential games.
By $A^B$ denote the set of functions \(B\to A\).
The proof of \rref{thm:DG<=HG} is based on the following encoding results.

\begin{lemma}[{$\reals$-G{\"o}del encoding \cite[Lem.\,4]{DBLP:journals/jar/Platzer08}}] \label{lem:realGodel}
  \index{R-Goedel@$\reals$-G{\"o}del}
  The logical relation $\text{at}(Z,n,j,z)$, which holds iff~$Z$ is a real number that represents a G{\"o}del encoding of a sequence of~$n$ real numbers with real value~$z$ at position~$j$ (for \m{1\leq j\leq m}), is definable in \dL.
  For a formula \m{\mapply{\phi}{z}} abbreviate \m{\lexists{z}{(\text{\normalfont{at}}(Z,n,j,z)\land\mapply{\phi}{z})}} by~\m{\mapply{\phi}{\ati{Z}{n}{j}}}.
  \index{_Znj_@$Z^{(n)}_j$}%
\end{lemma}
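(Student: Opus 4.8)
The plan is to read Lemma~\ref{lem:realGodel} as a pure definability claim about the base logic \dL (which, unlike first-order real arithmetic, has repetitions and differential equations) and to exhibit the relation $\mathrm{at}(Z,n,j,z)$ explicitly, checking that every ingredient is a \dL formula. The point worth stressing at the outset is that this cannot be done in first-order real arithmetic alone: over a real-closed field neither is $\mathbb{N}$ definable nor is there any semialgebraic pairing $\reals^2\to\reals$ (a dimension obstruction), so the encoding must genuinely use the programs of \dL. This is precisely the step where \dL is strictly more expressive than Tarski arithmetic, and the reconstruction is meant to match the cited construction.

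First I would make the integers accessible. In \dL the repetition $\prepeat{\alpha}$ under a diamond lets the modality quantify existentially over the number of iterations, which is a natural number; thus
\[
  \mathrm{nat}(n) \;\mequiv\; n\geq 0 \land \ddiamond{\pupdate{\pumod{c}{0}};\prepeat{\pupdate{\pumod{c}{c+1}}}}{c=n}
\]
defines exactly the naturals. From $\mathrm{nat}$ I obtain the integer part $\lfloor r\rfloor$ as the unique integer $m$ with $m\le r<m+1$, hence the fractional part and binary-digit extraction, all as \dL formulas. Proving that this counting-loop formula defines precisely $\mathbb{N}$ (both inclusions, using that every terminating run increments $c$ a natural number of times) is the first place where care is needed.

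Next I would fix an injective pairing $\mathrm{pair}\colon\reals^2\to\reals$ with \dL-definable projections $\mathrm{proj}_1,\mathrm{proj}_2$, for instance by interleaving the binary digits of the integer and fractional parts of the two arguments; digit extraction from the previous step makes $\mathrm{pair}$ and its inverses definable, and invertibility is a routine check on the interleaving. A sequence $(r_1,\dots,r_n)$ is then coded by the nested pair $Z=\mathrm{pair}(r_1,\mathrm{pair}(r_2,\dots,r_n))$, and its $j$-th entry is recovered by applying $\mathrm{proj}_2$ exactly $j-1$ times and then $\mathrm{proj}_1$ (taking the residue itself when $j=n$). Since the number of projection steps is a natural number, this iteration is again expressed by a single repetition under a diamond carrying a counter $c$: the program sets $w:=Z$, $c:=1$, repeats the guarded body $\ptest{c<j};\,\pupdate{\pumod{w}{\mathrm{proj}_2(w)}};\,\pupdate{\pumod{c}{c+1}}$, and the postcondition demands $c=j$ together with $z=\mathrm{proj}_1(w)$ when $j<n$ and $z=w$ when $j=n$, the whole formula guarded by $1\le j\le n$. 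Correctness of $\mathrm{at}$ then follows by induction on $j$ from invertibility of $\mathrm{pair}$.

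I expect the main obstacle to be the conceptual crux of the first two steps rather than the final assembly: legitimately capturing $\mathbb{N}$ and the integer/digit structure inside \dL together with a proof that the counting loop defines exactly the naturals, and verifying that the interleaving pairing is a genuine bijection onto its range with \dL-definable inverses. Once those are in place, the length-$n$ decoding and the correctness of $\mathrm{at}$ are bookkeeping by induction on the index.
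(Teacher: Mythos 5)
There is nothing in this paper to compare your proof against: the lemma carries a citation to \cite[Lem.\,4]{DBLP:journals/jar/Platzer08} and, per the paper's stated convention, only uncited results get inline proofs, so the result is imported rather than proved here. That said, your reconstruction is essentially the standard construction behind the citation and is correct in outline: first make $\naturals$ definable (your counting loop $\ddiamond{\pupdate{\pumod{c}{0}};\prepeat{\pupdate{\pumod{c}{c+1}}}}{c=n}$ does define exactly the nonnegative integers, and your opening remark that neither $\naturals$ nor any pairing $\reals^2\to\reals$ is semialgebraic is the right motivation for why \dL's programs are genuinely needed), then obtain floor and digit extraction, pair reals by digit interleaving, and decode position $j$ by a second counted loop of iterated projections. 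Two points deserve more care than ``routine.'' First, digit interleaving is not injective/surjective as stated because of the ambiguity $0.0111\ldots=0.1000\ldots$; you must fix a canonical expansion (say, forbid trailing all-ones) in the definition of the digit-extraction relation so that every sequence has a code and every code decodes uniquely --- this is what makes $\text{at}$ a well-defined relation rather than a partial multifunction. Second, $\mathrm{proj}_2(w)$ is a \emph{definable relation}, not a term of \dL (terms are polynomial), so the assignment $\pupdate{\pumod{w}{\mathrm{proj}_2(w)}}$ in your decoding program is not literally a \dL program; it has to be rendered as a nondeterministic assignment followed by a test of the defining formula, which is precisely the ``most easily in rich-test \dL'' caveat the paper itself makes when it proves the corollary built on this lemma. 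Neither issue breaks the argument, but both need to be discharged explicitly for the proof to be complete.
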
%
\begin{corollary}[Infinite $\reals$-G\"odel encoding]
  The bijection 
  \m{\reals\isoto\reals^{\naturals}} is characterizable in \dL by a formula \m{\text{at}(Z,\infty,j,z)}, which holds iff $Z$ is a real number that represents a G\"odel encoding of an $\omega$-infinite sequence of real numbers with real value $z$ at position $j$.
For a formula $\mapply{\phi}{z}$, abbreviate \m{\lexists{z}{(\text{\normalfont{at}}(Z,\infty,j,z)\land\mapply{\phi}{z})}} by~\m{\mapply{\phi}{\ati{Z}{\infty}{j}}}.
\end{corollary}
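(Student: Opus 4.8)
The plan is to build the infinite encoding on top of the finite one from \rref{lem:realGodel} by reducing everything to de-interleaving the binary digits of a single real through a definable pairing of $\naturals$. First I would fix a definable (indeed semialgebraic, hence first-order) bijection $r\colon\reals\to(0,1)$, for instance $r(x)=\tfrac{1}{2}+\tfrac{x}{2(1+\abs{x})}$, whose inverse is semialgebraic as well. Working inside $(0,1)$ lets me talk about binary digits: for $w\in(0,1)$ and $k\in\naturals$ the $k$-th digit $\text{dig}(w,k)=\lfloor 2^k w\rfloor \bmod 2$ is definable in \dL, since integer parts, the set of naturals, and the powers $2^k$ are all definable there \cite{DBLP:journals/jar/Platzer08} (this is exactly the number-theoretic machinery already underlying \rref{lem:realGodel}). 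Second I would fix the Cantor pairing $\langle j,k\rangle = \tfrac{(j+k)(j+k+1)}{2}+k$, a definable bijection $\naturals\times\naturals\to\naturals$, together with its definable inverse projections.

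With these ingredients in hand, the intended formula is
\begin{equation*}
\text{at}(Z,\infty,j,z) \;\mequiv\; \lforall{k\in\naturals}{\big(\text{dig}(r(z),k)=\text{dig}(r(Z),\langle j,k\rangle)\big)},
\end{equation*}
asserting that the $j$-th coordinate $z$ is recovered by reading off precisely the binary digits of $Z$ sitting at positions $\langle j,k\rangle$. This is manifestly a \dL formula: the quantifier $\lforall{k\in\naturals}{}$ ranges over the definable naturals, and $r$, $r^{-1}$, $\text{dig}$, and $\langle\cdot,\cdot\rangle$ are definable by the previous paragraph. The map it describes is the composition of $r$, the digit de-interleaving $(0,1)\to(0,1)^{\naturals}$ induced by $\langle\cdot,\cdot\rangle$, and $r^{-1}$ applied coordinatewise, so it realizes the desired bijection $\reals\cong\reals^{\naturals}$. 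The abbreviation $\mapply{\phi}{\ati{Z}{\infty}{j}}$ is then introduced exactly as in \rref{lem:realGodel}.

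The main obstacle is bijectivity rather than definability: binary expansions fail to be unique on the dyadic rationals, so the naive de-interleaving is a genuine bijection only after excising the countable exceptional set where some coordinate's expansion terminates. I would handle this in the standard way, by forcing the canonical non-terminating expansion in the definition of $\text{dig}$ and repairing the resulting countable defect by a Cantor--Schr\"oder--Bernstein correction, each step of which remains definable because the exceptional set is itself a definable countable set. For the embedding application only surjectivity onto $\reals^{\naturals}$ together with functionality of decoding is actually needed -- every infinite sequence admits a code and $z$ is determined by $(Z,j)$ -- so even the cruder injection-with-definable-left-inverse suffices and the bijectivity patch is cosmetic. The only remaining routine verification is that the auxiliary operations $\lfloor\cdot\rfloor$, $2^k$, membership in $\naturals$, and the pairing are \dL-definable, which is inherited verbatim from the constructions supporting \rref{lem:realGodel} and \cite{DBLP:conf/lics/Platzer12b,DBLP:journals/jar/Platzer08}.
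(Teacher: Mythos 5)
Your proposal is correct in substance but takes a genuinely different route from the paper's. The paper bootstraps entirely off the finite pairing of \rref{lem:realGodel}: it treats $Z$ as an infinitely nested pair and defines $\text{at}(Z,\infty,j,z)$ by the \dL program formula \(\ddiamond{\prepeat{(j:=j-1;\ Z:=Z^{(2)}_{2})}}{(j=0\land z=Z^{(2)}_{1})}\), i.e., $j$-fold unpairing followed by a first projection, so no digit manipulation is redone and the two-expansions pathology stays confined to the already-established \rref{lem:realGodel}. You instead rebuild the encoding from scratch by de-interleaving binary digits along a Cantor pairing of $\naturals\times\naturals$; this is essentially the technique \emph{inside} \rref{lem:realGodel}, generalized from a finite to an infinite index set. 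Your route is more self-contained and yields an explicit closed-form defining formula with no program loop, but it obliges you to re-verify the \dL-definability of $\lfloor\cdot\rfloor$, $2^k$, and $\naturals$, and to manage the dyadic-expansion issue yourself. On that last point you should be a bit more careful than ``the patch is cosmetic'': the digit convention must be identical on both sides of your defining equation (non-terminating expansions for the coordinates $r(z)$), and the clean way out is to observe that any interleaving of non-terminating expansions contains infinitely many $0$s and infinitely many $1$s, so the resulting $r(Z)$ is automatically non-dyadic and its digit sequence is unambiguous; then your map is surjective onto $\reals^{\naturals}$ with functional decoding, which, as you correctly note, is all that the application in \rref{cor:continuoussGoedel} requires. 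In short, the paper's loop-based definition buys brevity and reuse of \rref{lem:realGodel}; yours buys an explicit arithmetic characterization at the cost of re-proving the supporting definability facts.
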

\begin{proofatend}
\(\text{at}(Z,\infty,j,z)\) is definable by repeated unpairing using \rref{lem:realGodel}
\[
\ddiamond{\prepeat{(j:=j-1; Z:=\ati{Z}{2}{2})}}{(j=0\land z=\ati{Z}{2}{1})}
\]
The use of an abbreviation formula like $\ati{Z}{2}{2}$ inside a modality is definable (most easily in rich-test \dL).
\end{proofatend}
\begin{corollary} \label{cor:rational}
  The bijections \m{\naturals\isoto\rationals} and \m{\reals\isoto\reals^{\rationals}} are characterizable in \dL.
\end{corollary}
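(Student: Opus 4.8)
The plan is to reduce both bijections to the infinite $\reals$-G\"odel encoding $\reals\isoto\reals^{\naturals}$ of the preceding corollary, so that the only genuinely new ingredient is a \dL-definable enumeration of $\rationals$. First I would note that $\naturals$ is definable in \dL even though it is not first-order definable over $\reals$: the predicate ``$n\in\naturals$'' is captured by
\[
n\geq0 \land \ddiamond{\prepeat{(n:=n-1)}}{(n=0)},
\]
since repeatedly subtracting $1$ reaches exactly $0$ only for the naturals. With $\naturals$ in hand, integer arithmetic, divisibility, and the greatest common divisor become definable through a terminating, hence diamond-expressible, Euclidean-algorithm loop.

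For $\naturals\isoto\rationals$ I would fix the standard enumeration of $\rationals$ by reduced fractions $a/b$ (with $b\geq1$, $\gcd(\abs{a},b)=1$, and $0=0/1$). Using the gcd predicate to test coprimality, the graph $\text{at}_{\rationals}(n,q)$ (``$q$ is the $n$-th rational'') is definable in \dL by a counting loop that runs through the candidate pairs in the canonical order of increasing $\abs{a}+b$, increments a counter at each reduced fraction, and halts once the counter reaches $n$ with $q=a/b$; this bounded search terminates, so it is again a \dL diamond formula. By construction this is the graph of a bijection, which characterizes $\naturals\isoto\rationals$.

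Finally, for $\reals\isoto\reals^{\rationals}$ I would reindex the encoding of $\reals^{\naturals}$ through $\text{at}_{\rationals}$: a real $Z$ represents the function in $\reals^{\rationals}$ whose value at $q$ is $z$ exactly when $Z$ carries $z$ at the natural-number index of $q$, that is
\[
\text{at}(Z,\rationals,q,z) \mequiv \lexists{n}{\big(\text{at}_{\rationals}(n,q)\land\text{at}(Z,\infty,n,z)\big)},
\]
where $\text{at}_{\rationals}(n,q)$ already forces $n\in\naturals$. Since $\text{at}_{\rationals}$ is the graph of a bijection and $\text{at}(Z,\infty,\cdot,\cdot)$ characterizes $\reals\isoto\reals^{\naturals}$, this formula characterizes $\reals\isoto\reals^{\rationals}$. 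I expect the main obstacle to be the enumeration of $\rationals$ itself: turning the surjection onto fractions into a genuine bijection forces an injective normalization, and the one nontrivial definability step is expressing coprimality, which is exactly where the availability of $\naturals$ and the Euclidean-algorithm loop inside \dL are used; every remaining step reduces to the already-established G\"odel machinery.
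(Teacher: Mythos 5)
Your proof is correct, and it rests on the same two pillars as the paper's: the $\reals$-G\"odel encoding machinery and the \dL-definability of $\naturals$ and of terminating integer programs (loops, divisibility, Euclid's algorithm). The difference is in how the enumeration of $\rationals$ is produced. The paper simply unpairs the index $n$ into a numerator--denominator pair via the G\"odel pairing, defining $\text{rat}(n,p,q)$ by \(p=\ati{n}{2}{1}\land q=\ati{n}{2}{2}\land q>0\); as written this enumerates each rational infinitely often, so it is a surjection with repetitions rather than a bijection, and the paper only repairs the duplicates one corollary later (in the encoding of continuous functions), remarking there that canceling fractions can be skipped using a definable $\gcd$. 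You instead build a genuine bijection up front, by a counting loop over reduced fractions ordered by $\abs{a}+b$ with a definable coprimality test, and you also spell out the reindexing step that transports $\reals\isoto\reals^{\naturals}$ to $\reals\isoto\reals^{\rationals}$, which the paper leaves implicit. Your version is more self-contained and delivers literally what the corollary states; the paper's is terser because it postpones the normalization to the one place where the multiplicity actually matters.
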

\begin{proofatend}
\dL can define the formula \m{\text{rat}(n,p,q)}, which holds iff $\frac{p}{q}$ is the $n$-th rational number (in some arbitrary but fixed definable order):
\[
\text{rat}(n,p,q) \lbisubjunct p=\ati{n}{2}{1}\land q=\ati{n}{2}{2} \land q>0
\qedhere
\]
\end{proofatend}

\begin{corollary} \label{cor:continuoussGoedel}
  The bijection \m{\reals\isoto\continuouss{\reals}{\reals}} from reals to the continuous functions on the reals is characterizable in \dL.
\end{corollary}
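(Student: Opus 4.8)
The plan is to exploit that a continuous function on $\reals$ is uniquely determined by its restriction to the dense countable set $\rationals$, and to reduce the claim to the already-characterized bijection $\reals\isoto\reals^{\rationals}$ of \rref{cor:rational}. Given a real $Z$, write $g_Z:\rationals\to\reals$ for the sequence it encodes under that bijection; its value $g_Z(q)$ at any rational $q$ is accessible by the definable relation of \rref{cor:rational}, with quantification over rationals routed through the definable bijection $\naturals\isoto\rationals$. The continuous function $f_Z$ coded by $Z$ is then recovered by pointwise limits over rationals, $f_Z(x)=\lim_{q\to x,\,q\in\rationals} g_Z(q)$, exactly as the embedding proof of \rref{thm:DG<=HG} extracts values of $V$.

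First I would define the \dL predicate $\text{cont}(Z)$ expressing that $g_Z$ extends continuously to all of $\reals$, as the Cauchy condition posed at every real point: for all $x$ and all $\varepsilon>0$ there is $\delta>0$ with $\abs{g_Z(q)-g_Z(q')}<\varepsilon$ for all rationals $q,q'$ satisfying $\abs{q-x}<\delta$ and $\abs{q'-x}<\delta$. This single condition does double duty: at irrational $x$ it forces the limit to exist, while at a rational $x=q_0$ (taking $q'=q_0$) it forces $g_Z$ to be continuous at $q_0$, so that the limit there coincides with $g_Z(q_0)$. Consequently the evaluation relation $\text{at}_C(Z,x,v)$, asserting $v=\lim_{q\to x} g_Z(q)$, namely that for all $\varepsilon>0$ there is $\delta>0$ with $\abs{g_Z(q)-v}<\varepsilon$ for all rationals $q$ with $0<\abs{q-x}<\delta$, is a total, single-valued, \dL-definable relation on $\{Z:\text{cont}(Z)\}$ that agrees with $g_Z$ on $\rationals$. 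The nested $\varepsilon$--$\delta$ quantifiers are handled as in \rref{thm:DG<=HG}: limits, suprema, and infima are phrased as definable least-upper-bound conditions in the first-order real arithmetic embedded in \dL.

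It then remains to check that $Z\mapsto f_Z$ is onto the continuous functions and injective on valid codes. Surjectivity is the heart: given continuous $f$, its restriction $g=f|_{\rationals}$ is encoded by some $Z$ via \rref{cor:rational}; continuity of $f$ yields $\text{cont}(Z)$, and density of $\rationals$ together with continuity of $f$ gives $f_Z=f$. Injectivity follows since $f_Z|_{\rationals}=g_Z$, so equal functions have equal codes by injectivity of the $\reals\isoto\reals^{\rationals}$ bijection. I expect the main obstacle to be precisely this interface between the two topologies: one must pose a single definable condition on $Z$ that simultaneously forces continuity of $g_Z$ on $\rationals$ and existence of the limits at the irrationals, and that guarantees the recovered $f_Z$ agrees with the encoded data on $\rationals$ (ruling out removable discontinuities, which would otherwise break injectivity). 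Passing from this definable bijection between $\continuouss{\reals}{\reals}$ and the definable set of valid codes to a bijection with all of $\reals$ is a cardinality formality that, for the use in \rref{thm:DG<=HG}, is not even needed: surjectivity onto the continuous functions with definable evaluation already lets a quantifier over reals range over all candidate value functions $V$.
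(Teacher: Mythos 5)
Your proposal is correct and follows essentially the same route as the paper: both reduce to the $\reals\isoto\reals^{\rationals}$ encoding of \rref{cor:rational} and recover $f(x)$ as the definable $\varepsilon$--$\delta$ limit of the encoded values at rationals approaching $x$, with continuous functions being determined by their restriction to $\rationals$. You are somewhat more explicit about the well-definedness predicate and sur-/injectivity, whereas the paper's only additional technical remark is that the enumeration of fractions $\frac{p}{q}$ repeats values at non-reduced representatives, fixed by restricting to fractions in lowest terms via a definable $\gcd$ check --- a detail your notion of ``valid codes'' would absorb.
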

\begin{proofatend}
Since continuous functions are uniquely defined by their values on the rationals $\rationals$,
\rref{cor:rational} shows that \dL can characterize the bijection by
\[
\lforall{\varepsilon{>}0}{\lexists{\delta{>}0}{\lforall[\rationals]{\frac{p}{q}}{\lforall[\naturals]{n}{}}}}
 ~\Big(\text{rat}(n,p,q) \land \abs{x-\frac{p}{q}}<\delta\limply
\abs{z-\ati{F}{\infty}{n}}<\varepsilon)
\Big)
\]
Observe that the enumeration of \m{\frac{p}{q}} from \rref{cor:rational} enumerates identical fractions with different denominators repeatedly, which would allow for the definition of inconsistent $F$ that give different values at \m{\frac{p}{q}} and \m{\frac{2p}{2q}}.
This is easily overcome, e.g., by skipping fractions that cancel, which can be checked by divisibility or Euclid's $\gcd$ algorithm, which are both definable with programs in \dL. 
\end{proofatend}
\section{Non-differential Hybrid Game Axiomatization} \label{app:dGL-HG-axiomatization}

For reference, \rref{fig:dGL} shows a sound and complete axiomatization from prior work \cite{DBLP:journals/tocl/Platzer15} for the case of differential game logic for hybrid games with differential equations but without differential games.
The axiomatization is designed on top of the first-order Hilbert calculus (modus ponens, uniform substitution, and Bernays' $\forall$-generalization) with all instances of valid formulas of first-order logic as axioms, including first-order real arithmetic.
The only change of \rref{fig:dGL} compared to prior work \cite{DBLP:journals/tocl/Platzer15} is the use of dualization to convert $\ddiamond{\argholder}{}$ axioms into $\dbox{\argholder}{}$ axioms.
This is a cosmetic change to make it easier for the reader to appreciate how differential game invariants (proof rule \irref{diffgameind}) integrate seamlessly into the proof calculus for the other operators of differential hybrid games.

\newcommand{\solf}{y}%

\begin{figure}[bth]
  \renewcommand*{\irrulename}[1]{\text{#1}}%
  \renewcommand{\linferenceRuleNameSeparation}{~~}
  \newdimen\linferenceRulehskipamount%
  \linferenceRulehskipamount=1mm%
  \newdimen\lcalculuscollectionvskipamount%
  \lcalculuscollectionvskipamount=0.1em%
  \begin{calculuscollections}{\columnwidth}
    \begin{calculus}
      \cinferenceRule[diamond|$\didia{\cdot}$]{diamond axiom}
      {\linferenceRule[equiv]
        {\lnot\dbox{\alpha}{\lnot\phi}}
        {\ddiamond{\alpha}{\phi}}
      }
      {}
      \cinferenceRule[assignb|$\dibox{:=}$]{assignment / substitution axiom}
      {\linferenceRule[equiv]
        {\mapply[x]{\phi}{\theta}}
        {\dbox{\pupdate{\umod{x}{\theta}}}{\mapply[x]{\phi}{x}}}
      }
      {}%
      \cinferenceRule[evolveb|$\dibox{'}$]{evolve}
      {\linferenceRule[equiv]
        {\lforall{t{\geq}0}{\dbox{\pupdate{\pumod{x}{\solf(t)}}}{\phi}}\hspace{1cm}}
        {\dbox{\pevolve{\D{x}=f(x)}}{\phi}}
      }{\m{\D{\solf}(t)=f(\solf)}}%
      \cinferenceRule[testb|$\dibox{?}$]{test}
      {\linferenceRule[equiv]
        {(\ivr \limply \phi)}
        {\dbox{\ptest{\ivr}}{\phi}}
      }{}
      \cinferenceRule[choiceb|$\dibox{\cup}$]{axiom of nondeterministic choice}
      {\linferenceRule[equiv]
        {\dbox{\alpha}{\phi} \land \dbox{\beta}{\phi}}
        {\dbox{\pchoice{\alpha}{\beta}}{\phi}}
      }{}
      \cinferenceRule[composeb|$\dibox{{;}}$]{composition}
      {\linferenceRule[equiv]
        {\dbox{\alpha}{\dbox{\beta}{\phi}}}
        {\dbox{\alpha;\beta}{\phi}}
      }{}
      \cinferenceRule[iterateb|$\dibox{{}^*}$]{iteration/repeat unwind pre-fixpoint, even fixpoint}
      {\linferenceRule[lpmi]
        {\phi \land \dbox{\alpha}{\dbox{\prepeat{\alpha}}{\phi}}}
        {\dbox{\prepeat{\alpha}}{\phi}}
      }{}%
      \cinferenceRule[dualb|$\dibox{{^d}}$]{dual}
      {\linferenceRule[equiv]
        {\lnot\dbox{\alpha}{\lnot\phi}}
        {\dbox{\pdual{\alpha}}{\phi}}
      }{}
      \cinferenceRule[M|M]{$\dbox{}{}$ monotonic / $\dbox{}{}$-generalization} %
      {\linferenceRule[formula]
        {\phi\limply\psi}
        {\dbox{\alpha}{\phi}\limply\dbox{\alpha}{\psi}}
      }{}
      \cinferenceRule[invind|ind]{inductive invariant}
      {\linferenceRule[formula]
        {\psi\limply\dbox{\alpha}{\psi}}
        {\psi\limply\dbox{\prepeat{\alpha}}{\psi}}
      }{} %
    \end{calculus}%
  \end{calculuscollections}
  \caption{Differential game logic axiomatization for hybrid games without differential games}
  \label{fig:dGL}
\end{figure}%

\section{Proof of Isaacs Equations} \label{app:Isaacs}

For the sake of completeness, this section shows a proof of \rref{thm:differential-game-Isaacs} that is simplified compared to its original version \cite{DBLP:journals/indianam/EvansSouganidis84}.
The proof of \rref{thm:differential-game-Isaacs} uses two lemmas.

\newcommand{\gder}[2][f]{\nabla_{#1}{#2}}%
\begin{lemma} \label{lem:U-v-difference}
Let \(v\in\continuouss[1]{(0,T)\times\reals^n}{}\).
The upper value $U$ of \rref{eq:diffgame} satisfies for any \(0\leq\eta\leq\eta+\sigma\leq T\):
\begin{multline*}
U(\eta,\xi)-v(\eta,\xi)
=
\sup_{\alpha\in\stratD{\eta}} \inf_{z\in\controlA{\eta}} \Big(
\int_\eta^{\eta+\sigma} 
\runcost{h(s,x(s),\alpha(z)(s),z(s))
+} \gder[f]{v}(s) ds
+ U(\eta+\sigma,x(\eta+\sigma)) - v(\eta+\sigma,x(\eta+\sigma))\Big)
\label{eq:U-v-difference}
\end{multline*}
where \(x(\zeta) = \response[f]{\zeta}{\iget[state]{\I}}{\alpha(z)}{z}\) is the response of \rref{eq:diffgame} for $\alpha(z)$ and $z$ and
\[\gder[f]{v}(s) \mdefeq v_t(s,x(s)) + f(s,x(s),\alpha(z)(s),z(s))\stimes D_x v(s,x(s))\]
\end{lemma}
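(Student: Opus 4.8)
The plan is to recognize the integral term as a telescoping difference of $v$ along the response trajectory and then collapse the resulting $\sup_{\alpha}\inf_{z}$ expression by a single appeal to the dynamic programming optimality condition \rref{eq:U-dynamic-programming} for the upper value. First I would fix $\alpha\in\stratD{\eta}$ and $z\in\controlA{\eta}$ and write $x(s)=\response[f]{s}{\iget[state]{\I}}{\alpha(z)}{z}$, which by \rref{lem:response} is the unique Carath\'eodory solution of \rref{eq:diffgame} on $[\eta,\eta+\sigma]$; in particular $x$ is absolutely continuous, $x(\eta)=\iget[state]{\I}$, and $\D{x}(s)=f(s,x(s),\alpha(z)(s),z(s))$ for almost every $s$. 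Since $v\in\continuouss[1]{(0,T)\times\reals^n}{}$, the composite $s\mapsto v(s,x(s))$ is then absolutely continuous, and the chain rule gives for almost every $s$
\[
\frac{d}{ds}v(s,x(s)) = v_t(s,x(s)) + f(s,x(s),\alpha(z)(s),z(s))\stimes D_x v(s,x(s)) = \gder[f]{v}(s),
\]
so the fundamental theorem of calculus for absolutely continuous functions yields, using $x(\eta)=\iget[state]{\I}$,
\[
\int_\eta^{\eta+\sigma}\gder[f]{v}(s)\,ds = v(\eta+\sigma,x(\eta+\sigma)) - v(\eta,\iget[state]{\I}).
\]

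Next I would substitute this identity into the right-hand side of the claim. The two copies of $v(\eta+\sigma,x(\eta+\sigma))$ cancel, and since the constant $v(\eta,\iget[state]{\I})$ depends on neither $\alpha$ nor $z$ it factors out of $\sup_{\alpha}\inf_{z}$, leaving
\[
\sup_{\alpha\in\stratD{\eta}}\inf_{z\in\controlA{\eta}} U\big(\eta+\sigma,\response[f]{\eta+\sigma}{\iget[state]{\I}}{\alpha(z)}{z}\big) - v(\eta,\iget[state]{\I}).
\]
By the dynamic programming optimality condition \rref{eq:U-dynamic-programming}, the leading $\sup_{\alpha}\inf_{z}$ term is exactly $U(\eta,\iget[state]{\I})$, so the whole expression reduces to $U(\eta,\iget[state]{\I})-v(\eta,\iget[state]{\I})$, which (identifying the initial state $\iget[state]{\I}$ with $\xi$) is the left-hand side.

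The only genuine subtlety, and hence the main obstacle, is the justification of the chain rule and the fundamental theorem of calculus for the composition of the $C^1$ function $v$ with the merely absolutely continuous Carath\'eodory response $x$. This step must be invoked with care rather than treated as a classical computation: absolute continuity of $x$ from \rref{lem:response}, combined with continuous differentiability of $v$, is what guarantees that $s\mapsto v(s,x(s))$ is itself absolutely continuous with the stated almost-everywhere derivative, so that Lebesgue's form of the fundamental theorem applies on $[\eta,\eta+\sigma]$. Once this regularity is in place, everything downstream is pure bookkeeping together with the one invocation of \rref{eq:U-dynamic-programming}.
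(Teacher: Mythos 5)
Your proof is correct and follows essentially the same route as the paper's: the fundamental theorem of calculus applied to $s\mapsto v(s,x(s))$ along the response to rewrite the integral as $v(\eta+\sigma,x(\eta+\sigma))-v(\eta,\xi)$, followed by cancellation and one invocation of the dynamic programming condition \rref{eq:U-dynamic-programming}. The paper's version is just terser; your added care about absolute continuity of the Carath\'eodory response and the a.e.\ chain rule is the right justification for the step the paper cites to Walter.
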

\begin{proof}
The result follows from the dynamic programming optimality condition \rref{eq:U-dynamic-programming} with step size $\sigma$.
Recall
\begin{equation}
U(\eta,\xi) = 
\sup_{\alpha\in\stratD{\eta}} \inf_{z\in\controlA{\eta}} \runcost{\int_\eta^{\eta+\sigma} h(s,x(s),\alpha(z)(s),z(s))ds +} U(\eta+\sigma,x(\eta+\sigma))
\tag{\ref{eq:U-dynamic-programming}$^*$}
\end{equation}
using the fundamental theorem of calculus \cite[Thm.\,9.23]{Walter:Ana2} (since $v$ is differentiable on the open interval $(\eta,\eta+\sigma)$ and continuous on the closed interval $[\eta,\eta+\sigma]$):
\[
v(\eta+\sigma,x(\eta+\sigma)) - v(\eta,\xi)
= \int_\eta^{\eta+\sigma} \D[t]{v(t,x(t))}(s) ds
 = 
\int_\eta^{\eta+\sigma} \gder[f]{v}(s) ds
\label{eq:fundamental-calculus-v}
\qedhere
\]
\end{proof}

\begin{lemma}[{\cite[Lem.\,4.3]{DBLP:journals/indianam/EvansSouganidis84}}] \label{lem:4.3}
Let \(v\in\continuouss[1]{(0,T)\times\reals^n}{}\).
  \begin{equation}
  v_t(\eta,\xi) + H^+(\eta,\xi,Dv(\eta,\xi)) \leq-\theta<0
  \label{eq:non-upper-Isaac-max}
  \end{equation}
\[
\text{implies}~
\text{for all sufficiently small}~ {\sigma}~{\mexists{z\in\controlA{\eta}}{\mforall{\alpha\in\stratD{\eta}{}}}}
\int_{\eta}^{\eta+\sigma} \runcost{h(s,x(s),\alpha(z)(s),z(s)) +} \gder[f]{v}(s) ds \leq -\frac{\sigma\theta}{2}
\]
  \begin{equation}
  v_t(\eta,\xi) + H^+(\eta,\xi,Dv(\eta,\xi)) \geq\theta>0
  \label{eq:non-upper-Isaac-min}
  \end{equation}
\[
\text{implies}~
\text{for all sufficiently small}~{\sigma}~{\mexists{\alpha\in\stratD{\eta}{\mforall{z\in\controlA{\eta}{}}}}}
\int_{\eta}^{\eta+\sigma} \runcost{h(s,x(s),\alpha(z)(s),z(s)) + }\gder[f]{v}(s) ds \geq \frac{\sigma\theta}{2}
\]
\end{lemma}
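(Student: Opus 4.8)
Lemma~\ref{lem:4.3} establishes a discrete-time consequence of the Isaacs Hamiltonian inequalities: if the test function $v$ satisfies a strict Hamilton-Jacobi inequality at a point, then one player can guarantee a definite sign on the integral of $\gder[f]{v}$ over a short enough time window. Let me sketch how I would prove both implications.

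The plan is to prove the first implication (\ref{eq:non-upper-Isaac-max}); the second is entirely dual and proceeds by the mirror-image argument swapping the roles of the two players and reversing inequalities. First I would unfold the definition of the upper Hamiltonian $H^+(\eta,\xi,p)=\min_{z\in Z}\max_{y\in Y} f(\eta,\xi,y,z)\stimes p$. The hypothesis (\ref{eq:non-upper-Isaac-max}) says $v_t(\eta,\xi)+\min_{z\in Z}\max_{y\in Y} f(\eta,\xi,y,z)\stimes Dv(\eta,\xi)\leq-\theta$. Since $Z$ is compact and the inner expression is continuous, the outer $\min$ is attained: there is a fixed control value $\hat z\in Z$ such that $v_t(\eta,\xi)+\max_{y\in Y} f(\eta,\xi,y,\hat z)\stimes Dv(\eta,\xi)\leq-\theta$. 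This $\hat z$ is the seed for the constant control $z(s)\equiv\hat z$ that I will use as Angel-for-$Z$'s witness. With this fixed $\hat z$, we have for \emph{every} $y\in Y$ the pointwise bound $v_t(\eta,\xi)+f(\eta,\xi,y,\hat z)\stimes Dv(\eta,\xi)\leq-\theta$, hence in particular along any opponent strategy $\alpha\in\stratD{\eta}$, the integrand $\gder[f]{v}(\eta)$ evaluated at the initial time and state is $\leq-\theta$ uniformly in $y$.

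The core of the argument is then a continuity/uniformity estimate showing this pointwise bound at $(\eta,\xi)$ persists (degraded only to $-\theta/2$) over a sufficiently small time interval $[\eta,\eta+\sigma]$, uniformly over all strategies $\alpha$. The key step I would carry out: since $v\in\continuouss[1]$ and $f$ is uniformly continuous and bounded, the map $(s,x,y)\mapsto v_t(s,x)+f(s,x,y,\hat z)\stimes D_x v(s,x)$ is uniformly continuous on a compact neighborhood of $(\eta,\xi)$. The response $x(s)=\response[f]{s}{\xi}{\alpha(\hat z)}{\hat z}$ stays within $\norm{f}_\infty\,\sigma$ of $\xi$ by the bound on $f$ (as in the proof of \rref{lem:well-defined}), and $\abs{s-\eta}\leq\sigma$. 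Therefore, choosing $\sigma$ small enough that the modulus of continuity drops the value by less than $\theta/2$, I obtain $\gder[f]{v}(s)\leq-\theta/2$ for a.e.\ $s\in[\eta,\eta+\sigma]$, \emph{uniformly in $\alpha$}, because the bound $\leq-\theta$ at the seed point held for all $y\in Y$ and so does not care which $y$-value the strategy $\alpha$ produces. Integrating over $[\eta,\eta+\sigma]$ gives $\int_\eta^{\eta+\sigma}\gder[f]{v}(s)\,ds\leq-\sigma\theta/2$, which is the desired inequality with the witness $z\equiv\hat z$ chosen before $\alpha$.

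The main obstacle I anticipate is making the uniformity over \emph{all} nonanticipative strategies $\alpha$ fully rigorous: one must ensure the smallness threshold on $\sigma$ depends only on the modulus of continuity of $v$ and $f$ and the bound on $f$, not on the particular $\alpha$. This works precisely because freezing $z\equiv\hat z$ removes the adversary's leverage at the linearized level---the bound $f(\eta,\xi,y,\hat z)\stimes Dv(\eta,\xi)\leq-\theta-v_t$ holds for the \emph{maximum} over $y$, so whatever measurable $y(s)=\alpha(\hat z)(s)$ the opponent plays lies within that envelope. The second, dual implication under (\ref{eq:non-upper-Isaac-min}) requires more care, since there the quantifier order is $\mexists{\alpha}{\mforall{z}}$: here one must construct a nonanticipative strategy $\alpha$ rather than a constant control, choosing $\alpha(z)(s)$ to attain (up to the continuity slack) the inner $\max_{y\in Y}$ at the current state $x(s)$ via a measurable selection, and then the same uniform-continuity shrinking of $\sigma$ delivers the lower bound $\geq\sigma\theta/2$.
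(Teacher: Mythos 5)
Your treatment of the first implication coincides with the paper's: pick a minimizer $\hat z\in Z$ of the outer $\min$ in $H^+(\eta,\xi,Dv(\eta,\xi))$, play the constant control $z\equiv\hat z$, note that $v_t(\eta,\xi)+f(\eta,\xi,y,\hat z)\stimes D_xv(\eta,\xi)\leq-\theta$ holds for \emph{every} $y\in Y$, and use uniform continuity in $(s,x)$ (uniformly over the compact $Y$) together with $\norm{x(s)-\xi}\leq \norm[\infty]{f}\,\sigma$ to keep the integrand below $-\theta/2$ on $[\eta,\eta+\sigma]$ for every $\alpha$; integrating gives the claim. That half is fine.

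The second implication is where I see a genuine gap. You propose to define $\alpha(z)(s)$ as a measurable selection attaining the inner $\max_{y\in Y}$ ``at the current state $x(s)$.'' Taken literally this is circular: the response $x$ on $[\eta,s]$ is generated by $y=\alpha(z)$ itself, so a selection evaluated at $x(s)$ defines $\alpha(z)(s)$ in terms of the trajectory it is supposed to produce; this is exactly the kind of state-feedback construction the paper excludes because existence and uniqueness of the response are then no longer guaranteed. The paper's construction anchors the selection entirely at the frozen point $(\eta,\xi)$: for each $z\in Z$ choose $y$ with $v_t(\eta,\xi)+f(\eta,\xi,y,z)\stimes D_xv(\eta,\xi)\geq\theta$; by continuity in the $z$-slot this survives as $\geq 3\theta/4$ on an open ball around $z$; compactness of $Z$ yields a finite subcover and hence a \emph{piecewise-constant}, therefore Borel, map $c:Z\to Y$ achieving the $3\theta/4$ bound for all $z\in Z$. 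Setting $\alpha(z)(s):=c(z(s))$ gives a map that is measurable (Borel $c$ composed with measurable $z$) and nonanticipative (it depends only on the instantaneous value $z(s)$), and only then does a second uniform-continuity step in $(s,x)$, with the same $\norm[\infty]{f}\,\sigma$ bound on the response, propagate the estimate to $\geq\theta/2$ along the trajectory for small $\sigma$. Your sketch merges these two continuity steps into one and leaves both the measurability of the selection and the nonanticipativity of the resulting $\alpha$ unaddressed; these are precisely the points that make the $\exists\alpha\,\forall z$ direction harder than the $\exists z\,\forall\alpha$ one.
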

\begin{proof}
\newcommand{\upperIsaacVisc}[4]{\Lambda(#1,#2,#3,#4)}%
To simplify the assumptions, abbreviate
\[
\upperIsaacVisc{t}{x}{y}{z}
\mdefeq
v_t(t,x) + f(t,x,y,z)\stimes D_x v(t,x) \runcost{+ h(t,x,y,z)}
\]
First prove the first inequality.
By the definition of $H^+$, \rref{eq:non-upper-Isaac-max} is
\[\min_{z\in Z} \max_{y\in Y} \upperIsaacVisc{\eta}{\xi}{y}{z} \leq -\theta<0\]
which implies for some $z^*\in Z$ that
\[\max_{y\in Y} \upperIsaacVisc{\eta}{\xi}{y}{z^*} \leq -\theta<0\]
Since $\upperIsaacVisc{t}{x}{y}{z}$ is (uniformly) continuous
\[\max_{y\in Y} \upperIsaacVisc{s}{x(s)}{y}{z^*} \leq -\frac{\theta}{2}\]
for \(s\in[\eta,\eta+\sigma]\) with a sufficiently small $\sigma$ when $x(\argholder)$ is the response of \rref{eq:diffgame} for any $y,z$ with initial condition \(x(\eta)=\xi\).
Consequently, for the constant control \(z(\argholder)\mdefeq z^*\), any \(\alpha\in\stratD{\eta}\) gives
\[
\upperIsaacVisc{s}{x(s)}{\alpha(z)(s)}{z(s)} \leq -\frac{\theta}{2}
\]

Now, prove the second inequality \rref{eq:non-upper-Isaac-min}, which is
\[
\min_{z\in Z}\max_{y\in Y} \upperIsaacVisc{\eta}{\xi}{y}{z} \geq \theta>0
\]
which implies that, for each $z\in Z$, there is a \(y\in Y\) such that
\[
\upperIsaacVisc{\eta}{\xi}{y}{z} \geq \theta
\]
Since $\upperIsaacVisc{t}{x}{y}{z}$ is (uniformly) continuous
\begin{equation}
\upperIsaacVisc{\eta}{\xi}{y}{\zeta} \geq \frac{3\theta}{4}
\label{eq:ball-bound}
\end{equation}
for all $\zeta\in Z$ in an open ball around $z$.
Since this holds for all $z\in Z$ and $Z$ is compact, there is a finite open covering of $Z$ with open balls $B_i$ within which \rref{eq:ball-bound} holds for all $\zeta\in B\cap Z$.
Pick a function \(c:Z\to Y\) such that \(c(z)\) is the center of the closest ball $B_i$ to $z$ (breaking ties arbitrarily).
Then, for all $z\in Z$:
\[
\upperIsaacVisc{\eta}{\xi}{c(z)}{z} \geq \frac{3\theta}{4}
\]
Since $\upperIsaacVisc{t}{x}{y}{z}$ is (uniformly) continuous,
\begin{equation}
\upperIsaacVisc{\eta}{\xi}{c(z)}{z} \geq \frac{\theta}{2}
\label{eq:here}
\end{equation}
for \(s\in[\eta,\eta+\sigma]\) with a sufficiently small $\sigma$ when $x(\argholder)$ is the response of \rref{eq:diffgame} for any $y,z$ with initial condition \(x(\eta)=\xi\).
Construct \(\alpha\in\stratD{\eta}\) for \(z\in\controlA{\eta}\) as \(\alpha(z)(s) \mdefeq c(z(s))\) for all $s$.
Then \rref{eq:here} implies
\[
\upperIsaacVisc{s}{x(s)}{\alpha(z)(s)}{z(s)} \geq \frac{\theta}{2}
\]
for all \(s\in[\eta,\eta+\sigma]\), which implies the desired inequality by integration from $\eta$ to $\eta+\sigma$.
\end{proof}

\begin{proof}[of \rref{thm:differential-game-Isaacs}]
$U$ can be shown to be the viscosity solution of the upper Isaacs equation.
The proof for $V$ is dual.
First, $U$ satisfies the terminal condition \(U(T,\xi)=g(x(T))=g(\xi)\) for all $\xi\in\reals^n$.

Second, $U$ is shown to be a subsolution of \rref{eq:upper-Isaacs}, that is
\[
\tau + H^+(\eta,\xi,p) \geq 0
\quad
\mforallr{(\tau,p)\in\superdiff{U(\eta,\xi)}}
\]
By \rref{lem:subsuperdifferentials}, this is equivalent to showing
  \begin{equation}
  v_t(\eta,\xi) + H^+(\eta,\xi,Dv(\eta,\xi)) \geq0
  \label{eq:upper-Isaac-max}
  \end{equation}
for all \(v\in\continuouss[1]{(0,T)\times\reals^n}{}\)
that make $U-v$ attain a local maximum at \((\eta,\xi)\in(0,T)\times\reals^n\), i.e.\
\begin{equation}
U(\eta,\xi)-v(\eta,\xi) \geq U(\eta+\sigma,x(\eta+\sigma)) - v(\eta+\sigma,x(\eta+\sigma))
\label{eq:U-v-local-max}
\end{equation}
for sufficiently small $\sigma$ and $x(\argholder)$ solving \rref{eq:diffgame} with initial condition \(x(\eta)=\xi\).
Otherwise, if \rref{eq:upper-Isaac-max} were not the case, then there would be a $\theta$ such that
  \begin{equation}
  v_t(\eta,\xi) + H^+(\eta,\xi,Dv(\eta,\xi)) \leq-\theta<0
  \tag{\ref{eq:non-upper-Isaac-max}$^*$}
  \end{equation}
By \rref{lem:U-v-difference}, \rref{eq:U-v-local-max} implies for any \(0\leq\eta\leq\eta+\sigma\leq T\)
\begin{equation}
\sup_{\alpha\in\stratD{\eta}} \inf_{z\in\controlA{\eta}} \int_\eta^{\eta+\sigma} 
\runcost{h(s,x(s),\alpha(z)(s),z(s))
+} \gder[f]{v}(s) ds
 \geq 0
 \label{eq:U-v-local-max-dynamic-programming3}
\end{equation}
By \rref{lem:4.3}, \rref{eq:non-upper-Isaac-max} implies
for all sufficiently small $\sigma$ \({\mexists{z\in\controlA{\eta}}{\mforall{\alpha\in\stratD{\eta}{}}}}\)
\[
\int_{\eta}^{\eta+\sigma} \runcost{h(s,x(s),\alpha(z)(s),z(s)) +} \gder[f]{v}(s) ds \leq - \frac{\sigma\theta}{2}
\]
This choice of $z$ (that is even common for all $\alpha$) implies in particular
\begin{equation}
\sup_{\alpha\in\stratD{\eta}} \inf_{z\in\controlA{\eta}} 
\int_{\eta}^{\eta+\sigma} \runcost{h(s,x(s),\alpha(z)(s),z(s)) +} \gder[f]{v}(s) ds  \leq
- \frac{\sigma\theta}{2}
\label{eq:swaps}
\end{equation}
Equation \rref{eq:U-v-local-max-dynamic-programming3} contradicts \rref{eq:swaps} and, thus, refutes \rref{eq:non-upper-Isaac-max} and proves \rref{eq:upper-Isaac-max}.

Third, $U$ is shown to be a supersolution of \rref{eq:upper-Isaacs}, that is
\[
\tau + H^+(\eta,\xi,p) \leq 0
\quad
\mforallr{(\tau,p)\in\subdiff{U(\eta,\xi)}}
\]
By \rref{lem:subsuperdifferentials}, this is equivalent to showing
  \begin{equation}
  v_t(\eta,\xi) + H^+(\eta,\xi,Dv(\eta,\xi)) \leq0
  \label{eq:upper-Isaac-min}
  \end{equation}
for all \(v\in\continuouss[1]{(0,T)\times\reals^n}{}\)
that make $U-v$ attain a local minimum at \((\eta,\xi)\in(0,T)\times\reals^n\), i.e.\
\begin{equation}
U(\eta,\xi)-v(\eta,\xi) \leq U(\eta+\sigma,x(\eta+\sigma)) - v(\eta+\sigma,x(\eta+\sigma))
\label{eq:U-v-local-min}
\end{equation}
for sufficiently small $\sigma$ and $x(\argholder)$ solving \rref{eq:diffgame} with initial condition \(x(\eta)=\xi\).
Otherwise, if \rref{eq:upper-Isaac-min} were not the case, then there would be a $\theta$ such that
  \begin{equation}
  v_t(\eta,\xi) + H^+(\eta,\xi,Dv(\eta,\xi)) \geq\theta>0
  \tag{\ref{eq:non-upper-Isaac-min}$^*$}
  \end{equation}
By \rref{lem:U-v-difference}, \rref{eq:U-v-local-min} implies for any \(0\leq\eta\leq\eta+\sigma\leq T\)
\begin{equation}
\sup_{\alpha\in\stratD{\eta}} \inf_{z\in\controlA{\eta}} \int_\eta^{\eta+\sigma} 
\runcost{h(s,x(s),\alpha(z)(s),z(s))
+} \gder[f]{v}(s) ds
 \leq 0
 \label{eq:U-v-local-min-dynamic-programming3}
\end{equation}
By \rref{lem:4.3}, \rref{eq:non-upper-Isaac-min} implies for all sufficiently small $\sigma$
\({\mexists{\alpha\in\stratD{\eta}{\mforall{z\in\controlA{\eta}{}}}}}\)
\[
\int_{\eta}^{\eta+\sigma} \runcost{h(s,x(s),\alpha(z)(s),z(s)) +} \gder[f]{v}(s) ds \geq \frac{\sigma\theta}{2}
\]
This choice of $\alpha$ demonstrates the lower bound
\begin{equation}
\sup_{\alpha\in\stratD{\eta}} \inf_{z\in\controlA{\eta}} 
\int_{\eta}^{\eta+\sigma} \runcost{h(s,x(s),\alpha(z)(s),z(s)) +} \gder[f]{v}(s) ds \geq \frac{\sigma\theta}{2}
\label{eq:thing}
\end{equation}
Equation \rref{eq:U-v-local-min-dynamic-programming3} contradicts \rref{eq:thing} and, thus, refutes \rref{eq:non-upper-Isaac-min} and proves \rref{eq:upper-Isaac-min}.
\end{proof}

\section*{Acknowledgment}
The author appreciates helpful discussions with Max Niedermeier, Bruce Krogh, Sarah Loos, and especially Noel Walkington's advice.

This material is based upon work supported by the National Science Foundation under NSF CAREER Award CNS-1054246.

Any opinions, findings, and conclusions or recommendations expressed in this publication are those of the author(s) and do not necessarily reflect the views of the National Science Foundation.

\bibliographystyle{plainurl}\bibliography{platzer,bibliography}

\end{document}